\newcommand{\DropCap}[3]{%
  \ifpdf
  \lettrine[lines=3,lraise=0, nindent=0em, slope=0pt, findent={#3}em]{\pgothfamily #1}{#2}
  \else
  #1#2
  \fi
}
\def\thickhrulefill{\leavevmode \leaders \hrule height 2ex \hfill \kern \z@}
\def\@makechapterhead#1{\vspace*{8\p@}{\parindent \z@ \centering \reset@font
        \thickhrulefill\quad
         {\fontfamily{cmr}\selectfont\scshape\large
           \@chapapp\ \thechapter}
        \quad \thickhrulefill
        \par\nobreak
        \vspace*{10\p@}\interlinepenalty\@M
        \hrule
        \vspace*{20\p@}\Huge \bfseries #1\par\nobreak
        \par
        \vspace*{20\p@}\hrule
\vskip 60\p@
  }}
\def\@makeschapterhead#1{\vspace*{8\p@}{\parindent \z@ \centering \reset@font
        \thickhrulefill
        \par\nobreak
        \vspace*{10\p@}\interlinepenalty\@M
        \hrule
        \vspace*{20\p@}\Huge \bfseries #1\par\nobreak
        \par
        \vspace*{20\p@}\hrule
        \vskip 60\p@}}
\newcommand{\computeK}{\fpeval{floor(\value{page}/2)}}
\newcommand{\overlayfooterimage}{\includegraphics[width=1\textwidth]{images/gkp_plots/gkp_delta_\computeK.pdf}
}
\definecolor{myblue}{HTML}{1d4e9e}
\definecolor{mylightblue}{HTML}{8ea7cf}
\definecolor{myorange}{HTML}{fb5012}
\definecolor{mylightorange}{HTML}{fecbb9}
\definecolor{mybrown}{HTML}{8e505a}
\definecolor{myyellow}{HTML}{FFC857}
\definecolor{quotecol}{HTML}{aafac8}
\newcommand{\pA}{\hyperref[paper:A]{A}\xspace}
\newcommand{\pB}{\hyperref[paper:B]{B}\xspace}
\newcommand{\pC}{\hyperref[paper:C]{C}\xspace}
\newcommand{\pD}{\hyperref[paper:D]{D}\xspace}
\newcommand{\pE}{\hyperref[paper:E]{E}\xspace}
\newcommand{\paperitem}[4]{\noindent
\phantomsection
\label{paper:#1}
\ifpdf
\begin{minipage}[t]{1cm}
    \vspace{-0.2cm}
    \centering
    \bfseries{\Huge #1}
  \end{minipage}\hspace{1em}\begin{minipage}[t]{\dimexpr\linewidth-2cm-1em\relax}
    {\small\emph{#2}}\\
    #3\\
    #4
  \end{minipage}\par
  \vspace{0.7cm}
\else
\phantomsection
\label{paper:#1}
\textbf{Paper #1}: \emph{#2}\\
\noindent #3\\
\noindent #4\\
\vspace{1cm}
\fi
}
  \renewcommand{\headrulewidth}{0pt} 
\newcommand\thesistype{PhD}  \newcommand\thesistitle{From simulatability to universality of continuous-variable quantum computers}
\newcommand\thesissubtitle{} \newcommand\thesisauthor{Cameron Calcluth}
 \newcommand{\thesiscoverdescription}{\textbf{Cover}: A plot of the Wigner function of a Gottesman-Kitaev-Preskill state encoding a logical qutrit ``strange'' state with 24 dB squeezing. For an equivalent plot of the same state with 12 dB squeezing, see subplot \textbf{(c.1)} of Fig.~\ref{fig:wigners}.}  
\newcommand\thesisdepartment{Microtechnology and Nanoscience (MC2)}
\newcommand\thesiscity{Göteborg}
 \newcommand\thesisyear{2025}
\newcommand\thesisisbn{ISBN 978-91-8103-192-8}  \newcommand{\thesisissn}{\ifthenelse{\equal{\thesistype}{PhD}}{ISSN 0346-718X}{ISSN 1403-266X}}
\newcommand\thesisnumber{5650}
\title{\thesistitle}
\author{\thesisauthor}
\definecolor{quotebg}{gray}{0.92}       \definecolor{quotebar}{RGB}{0,102,204}  
\tikzset{fancy quotes/.style={
    text width=\fq@width pt,
    align=justify,
    inner sep=1em,
    anchor=north west,
    minimum width=\linewidth,
},
  fancy quotes marks/.style={
    scale=8,
    text=myblue,
    inner sep=1pt,
  },
  fancy quotes opening/.style={
    fancy quotes marks,
  },
  fancy quotes background/.style={
    show background rectangle,
    inner frame xsep=0,
background rectangle/.style={
      fill=quotebg,
path picture={
        \draw[myblue,line width=9pt]
          (path picture bounding box.north west)
            -- 
          (path picture bounding box.south west);
      },
    },
  },
}
\newenvironment{fancyquotes}[1][]{\noindent
  \tikzpicture[fancy quotes background]
\node[fancy quotes opening,anchor=north west] (fq@ul) at (0,0) {``};
\tikz@scan@one@point\pgfutil@firstofone(fq@ul.east)
    \pgfmathsetmacro{\fq@width}{\linewidth - 1.8*\pgf@x}
\node[fancy quotes,#1] (fq@txt) at (fq@ul.north west) \bgroup
}{\egroup;        \endtikzpicture
}
\newenvironment{fancyquotes}[1][]{\noindent
 \emph{``#1''}
}{}
\NewDocumentCommand{\gkpket}{s O{0} o}{\ensuremath{\IfBooleanTF{#1}
      {|( #2 _L )_{\mathrm{GKP}\IfNoValueF{#3}{(#3)}}\rangle}
      {| #2 _{\mathrm{GKP}\IfNoValueF{#3}{(#3)}}\rangle}
}}
\NewDocumentCommand{\phase}{s o}{\ensuremath{\hat V  \IfBooleanT{#1}{^{(\text{CV})}\IfNoValueF{#2}{(#2)}}  \IfBooleanF{#1}{\IfNoValueF{#2}{_{#2}}}
}}
\NewDocumentCommand{\cz}{s}{\ensuremath{\hat {C}_{Z(j,k)}  \IfBooleanT{#1}{^{(\text{CV})}}
}}
\NewDocumentCommand{\fourier}{s o}{\ensuremath{\hat {F}\IfBooleanT{#1}{^{(\text{CV})}}\IfNoValueF{#2}{_{#2}}
}}
\NewDocumentCommand{\disp}{O{\cdot}}{\ensuremath{\hat{D}(#1)
}}
\NewDocumentCommand{\paperclass}{O{X}}{\ensuremath{{\mathcal Q}_{#1}
}}
\NewDocumentCommand{\lX}{o}{\ensuremath{\hat X_{\text{GKP}\IfNoValueF{#1}{(#1)}}
}}
\NewDocumentCommand{\lZ}{o}{\ensuremath{\hat Z_{\text{GKP}\IfNoValueF{#1}{(#1)}}
}}
\NewDocumentCommand{\posket}{O{x}}{\ensuremath{|\hat q = {#1}\rangle
}}
\NewDocumentCommand{\peakspace}{s o}{\ensuremath{\IfBooleanT{#1}{\tilde}{\ell}\IfNoValueF{#2}{_{#2}}
}}
\NewDocumentCommand{\clifford}{s O{d} o}{\ensuremath{\mathcal C^{#2}\IfNoValueF{#3}{_{#3\IfBooleanT{#1}{(\text{CV})}}}\IfNoValueT{#3}{_{\IfBooleanT{#1}{(\text{CV})}}}
}}
\NewDocumentCommand{\squeezing}{O{s}}{\ensuremath{\hat S(#1)
}}
\NewDocumentCommand{\rot}{O{\ensuremath{\theta}}}{\ensuremath{\hat R(#1)
}}
\NewDocumentCommand{\thetatau}{s}{\ensuremath{\tau
}}
\NewDocumentCommand{\thetaz}{s}{\ensuremath{\zeta
}}
\NewDocumentCommand{\symp}{O{F}}{\ensuremath{\mathrm{Sp}(2n,\mathbb #1)
}}
\NewDocumentCommand{\quditd}{o}{\ensuremath{d\IfNoValueF{#1}{_{#1}}
}}
\NewDocumentCommand{\xdisp}{o}{\ensuremath{\hat X(\IfNoValueF{#1}{#1}\IfNoValueT{#1}{\cdot})
}}
\NewDocumentCommand{\zeroto}{o}{\ensuremath{\mathbb N\IfNoValueF{#1}{_{#1}}
}}
\NewDocumentCommand{\poly}{o}{\ensuremath{\operatorname{poly}\IfNoValueF{#1}{(#1)}
}}
\NewDocumentCommand{\pdfunc}{O{x}}{\ensuremath{\operatorname{PDF}(#1)
}}
\NewDocumentCommand{\diagm}{O{\cdot}}{\ensuremath{\operatorname{diag}(#1)
}}
\NewDocumentCommand{\pauligroup}{o o}{\ensuremath{\mathcal G\IfNoValueF{#2}{_{#2}}\IfNoValueF{#1}{^{#1}}
}}
\NewDocumentCommand{\stab}{o}{\ensuremath{\hat g\IfNoValueF{#1}{_{#1}}
}}
\NewDocumentCommand{\charfunc}{s O{\cdot}}{\ensuremath{\chi\left(#2\right)
}}
\NewDocumentCommand{\wig}{s O{\cdot} o}{\ensuremath{\IfBooleanT{#1}{\bar}W\IfNoValueF{#3}{_{#3}}(#2)
}}
\NewDocumentCommand{\zgw}{O{\cdot} o}{\ensuremath{W^{\text{ZG}}\IfNoValueF{#2}{_{#2}}(#1)
}}
\NewDocumentCommand{\weyldv}{o}{\ensuremath{\hat{\bar{T}}\IfNoValueF{#1}{_{#1}}
}}
\NewDocumentCommand{\ppodv}{O{\mathbf t}}{\ensuremath{\bar{\hat A}_{#1}
}}
\NewDocumentCommand{\sumgate}{o}{\ensuremath{\textsc{s}\hat{\textsc{u}}\textsc{m}\IfNoValueF{#1}{_{#1}}
}}
\NewDocumentCommand{\hwop}{s O{\mathbf u}}{\ensuremath{\hat T_{#2}\IfBooleanT{#1}{^{\text{CV}}}
}}
\NewDocumentCommand{\zdisp}{O{\cdot}}{\ensuremath{\hat Z(#1)
}}\NewDocumentCommand{\gkpproj}{o}{\ensuremath{{\hat\Pi_{\text{GKP}}\IfNoValueF{#1}{(#1)}}
}}\NewDocumentCommand{\intreg}{}{\ensuremath{R
}}
\NewDocumentCommand{\genkraus}{s O{i} o}{\ensuremath{\hat P\IfBooleanT{#1}{^\dagger}_{#2}\IfNoValueF{#3}{(#3)}
}}\NewDocumentCommand{\genkrausprime}{s O{i} o}{\ensuremath{\hat P'\IfBooleanT{#1}{^\dagger}_{#2}\IfNoValueF{#3}{(#3)}
}}\NewDocumentCommand{\setgenkraus}{}{\ensuremath{P
}}
\NewDocumentCommand{\mappedstate}{O{P}}{\ensuremath{\hat \rho_L^{(#1)}
}}\NewDocumentCommand{\gkpec}{s O{\mathbf t}}{\ensuremath{\hat K\IfBooleanT{#1}{^\dagger}(#2)
}}
\newcommand{\jacobi}{\ensuremath{\vartheta}}
\newcommand{\sq}{\ensuremath{\Delta}}
\newcommand{\wf}{\ensuremath{\psi}}
\newcommand{\paramsq}{\ensuremath{s}}
\newcommand{\paramrot}{\ensuremath{\theta}}
\newcommand{\quditfactor}{\ensuremath{a}}
\newcommand{\unitary}{\ensuremath{\hat U}}
\newcommand{\sympmat}{\ensuremath{S}}
\newcommand{\paramdisp}{\ensuremath{c}}
\newcommand{\anglesset}{\ensuremath{\Theta}}
\newcommand{\transq}{\ensuremath{\hat z}}
\newcommand{\evolvedQ}{\ensuremath{\hat Q}}
\newcommand{\evolvedQs}{\ensuremath{\hat{\mathbf Q}}}
\newcommand{\customstab}{\ensuremath{g}}
\newcommand{\stabvec}{\ensuremath{\mathbf l}}
\newcommand{\halfsymp}{\ensuremath{\tilde{S}}}
\newcommand{\diagmat}{\ensuremath{\mathcal D}}
\newcommand{\stabgroup}{\ensuremath{\mathcal S}}
\newcommand{\stabspace}{\ensuremath{\mathcal V}}
\newcommand{\sympform}{\ensuremath{\Omega}}
\newcommand{\doffset}{\ensuremath{\varsigma}}
\newtheorem{corollary}{Corollary}
\newtheorem{theorem}{Theorem}
\newcommand{\algcol}[2]{\parbox[t]{.45\linewidth}{\vtop{\hbox{#1}}}\hfill
\parbox[t]{.55\linewidth}{\raggedright\tcp{#2}}}
\begin{document}

\frontmatter                  

\ifpdf
\thispagestyle{veryempty}

{
\begingroup
   \begin{tikzpicture}[
        remember picture,
        overlay,
        anchor=north west,
        inner sep=0pt
        ]
        
    \node (image) at (-12.17cm,4.5cm){\includegraphics[width=23.3cm]{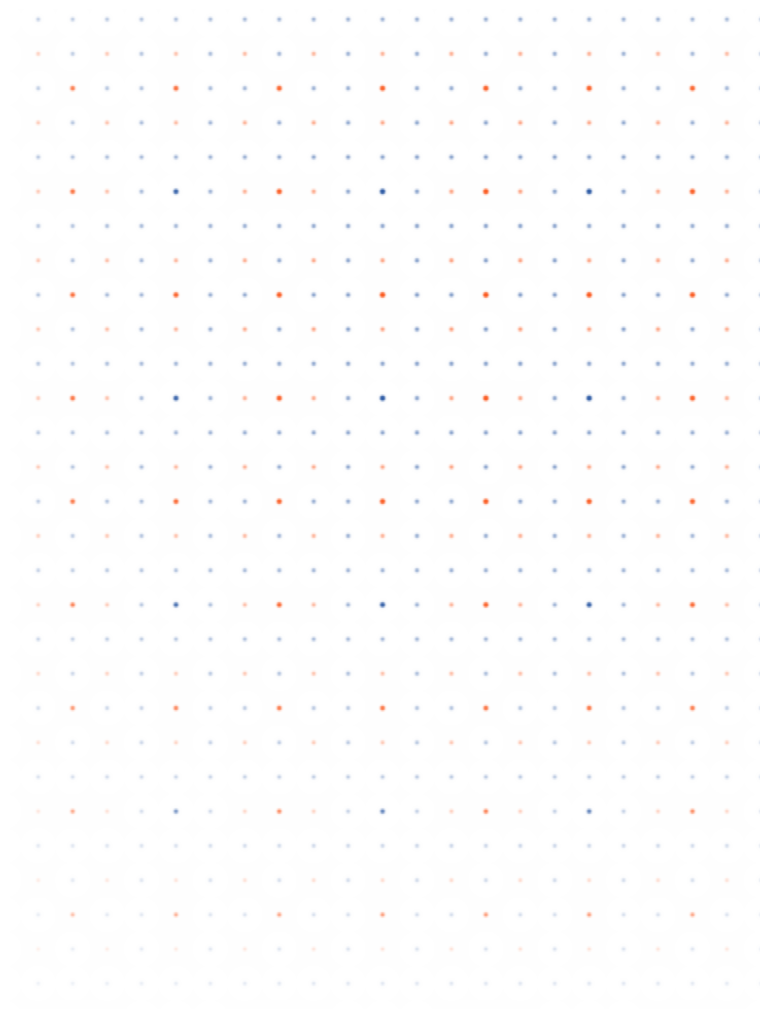}};
    \node[anchor=center,xshift=-0.37cm,yshift=10cm] at (current page.center)
    {\small\scshape Thesis for the degree of Doctor of Philosophy};
    
    \node[anchor=center,xshift=-0.37cm,yshift=6.8cm] at (current page.center)
    {\LARGE From simulatability to universality of};
\node[anchor=center,xshift=-0.37cm,yshift=5.85cm] at (current page.center)
    {\LARGE continuous-variable quantum computers};
\node[anchor=center,xshift=-0.37cm,yshift=2.7cm] at (current page.center)
    {\large \thesisauthor};
     \node[anchor=center,xshift=-0.37cm,yshift=-8cm] at (current page.center)
     {\includegraphics[width=30mm,clip]{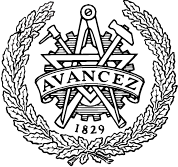}};
     \node[anchor=center,xshift=-0.37cm,yshift=-10cm] at (current page.center)
    {Department of \thesisdepartment};
     \node[anchor=center,xshift=-0.37cm,yshift=-10.5cm] at (current page.center)
    {\scshape CHALMERS UNIVERSITY OF TECHNOLOGY};
     \node[anchor=center,xshift=-0.37cm,yshift=-11cm] at (current page.center)
    {Göteborg, Sweden \thesisyear};
    \end{tikzpicture}
\endgroup
\parindent 0 pt
\centering
\thispagestyle{empty}         
\vspace{-0.2cm}


\vspace{10 pc}
\vspace{0.45cm}

\vskip 1pc
\vspace{0.2cm}
{\large \thesissubtitle}
\vspace{2 pc}

\vfill

\vspace{1cm}

\clearpage
}

\vspace*{50 pt}
{
  \thispagestyle{empty}         

  \parindent 0 pt

  \textbf{\thesistitle}

  \thesissubtitle

  \textsc{\thesisauthor}

  \thesisisbn

  \vskip 2pc
Acknowledgements, dedications, and similar personal statements in this thesis reflect the author's own views.
\vskip 3pc
  \copyright\enskip \textsc{\thesisauthor, \thesisyear.}

  \vskip 3pc

  \ifthenelse{\equal{\thesistype}{PhD}}{Doktorsavhandlingar}{Licentiatavhandlingar}
   vid Chalmers tekniska högskola

  \ifthenelse{\equal{\thesistype}{PhD}}{Ny serie nr \thesisnumber }{Technical report No. \thesisnumber }

  \thesisissn

  \vskip 2pc

  Department of \thesisdepartment

  Chalmers University of Technology

  SE--412 96 Göteborg, Sweden

  Telephone\enskip+ 46 (0) 31 -- 772 1000

  \vfill

  \thesiscoverdescription
\vskip 1pc
  \textbf{Odd-page footer image}: Position-basis wavefunction of a qubit ($d=2$) $0$-logical Gottesman-Kitaev-Preskill state, defined in Eq.~(\ref{eq:realistic-gkp-states}) with squeezing in dB equal to the page number divided by five.

  \vskip 2pc

  \vskip 1pc

  Chalmers Digitaltryck

  Göteborg, Sweden \thesisyear
}

\clearpage
 
\cleardoublepage 
\phantomsection               

\addcontentsline{toc}{chapter}{Abstract}  

\vspace{5cm}
\begingroup
  \let\origclearpage\clearpage
  \let\clearpage\relax

\noindent\textbf{\thesistitle}

\noindent\thesissubtitle

\noindent\textsc{\thesisauthor}

\noindent Applied Quantum Physics Laboratory

\noindent Department of \thesisdepartment

\noindent \MakeUppercase{Chalmers University of Technology}

\fi
\chapter*{Abstract}
\endgroup
\vspace*{-1cm}

\noindent Quantum computers promise to solve some problems exponentially faster than traditional computers, but we still do not fully understand why this is the case. While the most studied model of quantum computation uses qubits, which are the quantum equivalent of a classical bit, an alternative method for building quantum computers is gaining traction. Continuous-variable devices, with their infinite range of measurement outcomes, use systems such as electromagnetic fields. Given this infinite-dimensional structure, combined with the complexities of quantum physics, we are left with a natural question: when are continuous-variable quantum computers more powerful than classical devices?

This thesis investigates this question by exploring the boundary of which circuits are classically simulatable and which unlock a quantum advantage over classical computers.

Prior to the work conducted in this thesis, theorems of classical simulatability of continuous-variable quantum computations relied on positive phase-space representations of all circuit components. Circuits confined to Gaussian elements or those preserving positive Wigner functions are efficiently simulatable, whereas introducing Wigner-negative resources, which indicate non-classical behaviour, is necessary to achieve universality. Although necessary, Wigner negativity does not provide a sufficient condition to achieve universal quantum computation.

In this thesis, a series of proofs are presented demonstrating the efficient simulatability of progressively more complex circuits, even those with high amounts of Wigner negativity. Specifically, circuits initiated with highly Wigner-negative Gottesman-Kitaev-Preskill states, which form a grid-like structure in phase space, can be simulated in polynomial time.

The implications of these results extend to a new fundamental understanding of the computational power of continuous-variable quantum computers. Specifically, we demonstrate the first sufficient condition for achieving universality using continuous-variable devices. These results shine a light on the limits of our current understanding while also paving the way for further exploration of fundamental topics in quantum computing.

\vspace{0.1cm}
\textbf{Keywords:} \textit{Quantum computing, quantum information, continuous-variable quantum computing, quantum optics, quantum advantage, quantum resource theory, Bosonic codes, Gottesman-Kitaev-Preskill states}. 

\ifpdf

\cleardoublepage              \phantomsection               \addcontentsline{toc}{chapter}{Acknowledegments}  \chapter*{Acknowledgements}
First, I want to express my deep gratitude to Giulia Ferrini for all the support, guidance, mentorship, knowledge, and encouragement throughout my PhD. I am extremely thankful for all the hands-on training and the freedom to explore my own ideas, even from the beginning of my PhD. I really couldn't have asked for a better supervisor! 

I am also very grateful for the support of my co-supervisors. In particular, Alessandro Ferraro provided me with extremely generous amounts of time, advice, feedback, and ideas. Laura Garc\'a \'Alvarez provided encouragement, support, and patience, especially during the first half of my PhD.

Furthermore, I am grateful for G\"oran Johansson's time, advice, and guidance as my examiner. I would also like to say thank you to my opponent and the committee members for their time to read my thesis and attend my defence. I really appreciate their support during this crucial stage at the start of my career. I am equally grateful to Daniel Gottesman, who acted as my opponent during my mid-term seminar and with whom I had a chance to discuss and get feedback on my results.

I also really appreciate the time I spent with Xun Gao and Arthur Jaffe, and their invitation to host me for three months to explore a fascinating topic outside the scope of this thesis. I look forward to future collaborations. I have also been supported through stimulating and exciting collaborations with Eric R Anschuetz, Eduardo Alberto Bardales Espa\~na, Alex Maltesson, Nicolas Reichel, Ludvig Rodung, Oliver Hahn, Timo Hillmann, Juani Bermejo-Vega, Ulysse Chabaud, Jack Davis, and more. Thank you to everyone who has been a part of this journey. I particularly thank Alex and Giulia for providing feedback on the many drafts of this thesis, up to the final hours!

I am also extremely grateful to Linda Brånell for providing world-class administrative assistance throughout my PhD. I would also like to thank the many other administrative support staff and other non-academic staff, such as cleaners, who provide the foundations for our ability to do research.

I want to thank my family for believing in me, and my friends for providing me with all the amazing times during my time in Sweden. I am also thankful to Sam for supporting me and starting this journey with me. Finally, I am so grateful to my partner, Esse, for all the love, kindness, and support. 

\vskip 1.5pc

\noindent \thesisauthor

\noindent \thesiscity, May\  2025   
\cleardoublepage              

\fi
\phantomsection               \addcontentsline{toc}{chapter}{List of Publications}  \chapter*{List of Publications}
This thesis is based on the following appended papers.\\ All are available as open-access publications.
\vspace{1cm}

\paperitem{A}{Efficient simulatability of continuous-variable circuits with large Wigner negativity}{Laura~García-Álvarez, Cameron~Calcluth, Alessandro~Ferraro, and \linebreak Giulia Ferrini}{\href{https://doi.org/10.1103/PhysRevResearch.2.043322}{Physical Review Research \textbf{2}, 043322 (2020)}. \\ Reference \cite{garcia-alvarez2020}. [doi:\href{https://doi.org/10.1103/PhysRevResearch.2.043322}{10.1103/PhysRevResearch.2.043322}]}

\paperitem{B}{Efficient simulation of Gottesman-Kitaev-Preskill states with Gaussian circuits}{Cameron~Calcluth, Alessandro~Ferraro, and Giulia~Ferrini}{\href{https://doi.org/10.22331/q-2022-12-01-867}{Quantum \textbf{6}, 867 (2022)}.\\ Reference \cite{calcluth2022}. [doi:\href{https://doi.org/10.22331/q-2022-12-01-867}{10.22331/q-2022-12-01-867}]}

\paperitem{C}{Vacuum provides quantum advantage to otherwise simulatable architectures}{Cameron~Calcluth, Alessandro~Ferraro, and Giulia~Ferrini}{\href{https://doi.org/10.1103/PhysRevA.107.062414}{Physical Review A \textbf{107}, 062414 (2023)}. \\ Reference \cite{calcluth2023}. [doi:\href{https://doi.org/10.1103/PhysRevA.107.062414}{10.1103/PhysRevA.107.062414}]}

\paperitem{D}{Sufficient condition for universal quantum computation using Bosonic circuits}{Cameron~Calcluth, Nicolas~Reichel,  Alessandro~Ferraro, and Giulia~Ferrini}{\href{https://doi.org/10.1103/PRXQuantum.5.020337}{PRX Quantum \textbf{5}, 020337 (2024)}. \\ Reference \cite{calcluth2024}. [doi:\href{https://doi.org/10.1103/PRXQuantum.5.020337}{10.1103/PRXQuantum.5.020337}]}

\paperitem{E}{Classical simulation of circuits with realistic Gottesman-Kitaev-Preskill states}{Cameron~Calcluth, Oliver~Hahn, Juani~Bermejo-Vega, Alessandro~Ferraro, and Giulia~Ferrini}
{\href{https://doi.org/10.48550/arXiv.2412.13136}{arXiv:2412.13136} (2024) [Physical Review Letters (\href{https://doi.org/10.1103/xmtw-g54f}{to be published})]. \\ Reference \cite{calcluth2025}. [doi:\href{https://doi.org/10.1103/xmtw-g54f}{10.1103/xmtw-g54f}]}

\ifpdf
\vspace{1cm}
 \cleardoublepage              \phantomsection               \addcontentsline{toc}{chapter}{List of Acronyms}  \chapter*{List of Acronyms}

\begin{tabular}{ l c l }

 CCR & -- & Canonical commutation relations\\
 CV & -- & Continuous variable\\
 DV & -- & Discrete variable \\
 GKP & -- & Gottesman-Kitaev-Preskill\\
 MSD & -- & Magic state distillation\\
 PDF & -- & Probability density function\\
 QA & -- & Quantum advantage\\
 QC & -- & Quantum computation\\
 QEC & -- & Quantum error correction\\
 ROM & -- & Robustness of magic\\
 SGKP & -- & Simulatable Gottesman-Kitaev-Preskill\\
 SNF & -- & Smith normal form\\
 SSD & -- & Subsystem decomposition\\
 UQC & -- & Universal quantum computation\\
 WLN & -- & Wigner logarithmic negativity\\
 ZGW & -- & Zak-Gross Wigner\\

\end{tabular}

\cleardoublepage
\phantomsection
\addcontentsline{toc}{chapter}{Contents} 
\pdfbookmark[0]{Contents}{Contents}
\begingroup
  \pagestyle{contentspage}
  \tableofcontents
\endgroup

\mainmatter                   

\pagestyle{fancy}

\fancyhf{}                  \fancyhead[LE]{\textit{\nouppercase{\leftmark}}}
\fancyfoot[CO]{\thepage}
\fancyhead[RO]{\textit{\nouppercase{\rightmark}}}
\fancypagestyle{plain}{\fancyhf{}\fancyfoot[CO]{\thepage}\renewcommand{\headrulewidth}{0pt}  }

\AddToShipoutPictureBG{\ifodd\value{page}
    \begin{tikzpicture}[remember picture,overlay]
      \node[anchor=south]
        at ([xshift=0cm,yshift=1.4cm]current page.south)
        {\overlayfooterimage};
    \end{tikzpicture}\fi
}

\fi

    \clearpage{}\chapter{Introduction}

\label{ch:introduction}

\DropCap{Q}{uantum}{0} computers have the potential to perform some tasks exponentially faster than classical computers, a concept known as quantum advantage (QA)~\cite{zhong2020,hangleiter2023,anschuetz2023,harrow2017,farhi2016,arute2019}. In principle, this exponential increase in computational power could transform a range of fields. For example, it has been suggested that quantum computers could simulate complex quantum systems more accurately, leading to the development of new materials with enhanced properties~\cite{bauer2020,tacchino2020,altman2021}. They could also accelerate drug discovery by simulating molecular interactions and potentially identifying new treatments for diseases~\cite{li2024,santagati2024}.

While there exist a number of algorithms to justify these claims~\cite{QuantumAlgorithmZoo,montanaro2016,biamonte2017,cerezo2021,nielsen2010}, the field of quantum algorithm development is still relatively new. A strategy to develop new algorithms is to first identify examples of subroutines of algorithms that we know cannot be performed on classical computers~\cite{chitambar2019}. This allows us to partition the components of a quantum computer (e.g., states, operations, measurements) into two subsets: those that can be simulated efficiently on a classical computer and those capable of unlocking the ability to perform tasks with a QA.
Understanding which algorithms \textit{cannot} offer speedups over classical computers helps to identify which components, when added to the circuits, enable the ability to perform tasks that \textit{do} have a speedup, unlocking QA.

This compilation thesis\footnote{A compilation thesis, or \textit{kappa}, common in Scandinavia, is written as an introductory summary of the papers published during the PhD, rather than a traditional monograph~\cite{gustavii2012}.} addresses both regimes. In particular, we demonstrate that large classes of quantum circuits are efficiently simulatable with classical computers. In addition, we also show that this result can be extended with a small exponential overhead to a more complex class of circuits highly relevant to experiments. Complementing these simulatability results, we also develop a new framework to understand when quantum computers are capable of achieving a QA, based on our simulatability results.

In this introductory Chapter, we begin in Section~\ref{sec:qc} with a high-level introduction to quantum computing and outline the key concepts that are used throughout this thesis. Then, in Section~\ref{sec:intro-motivation} we motivate the research on which this thesis is based. In Section~\ref{sec:key-objectives}, we define key objectives that this thesis aims to solve. Finally, in Section~\ref{sec:outline}, we provide an outline of the remaining Chapters of this thesis.

\section{Quantum computing}
\label{sec:qc}
The majority of research in the field of quantum computing focuses on building quantum computers using qubits~\cite{burkard2023, chatterjee2021, kjaergaard2020, monroe2021}, which are the quantum equivalent of a bit. Qubits are a special case of the more general $d$-level system, the qudit. Both qubits and qudits are discrete-variable (DV) systems and encode quantum information in finite-dimensional Hilbert spaces consisting of states that can exist in superpositions of different ``levels'', i.e., basis states. Physically, such levels can be implemented in a variety of systems such as superconducting circuits~\cite{kjaergaard2020}, nuclear spins~\cite{abobeih2022}, semiconductor spins~\cite{burkard2023,chatterjee2021}, trapped ions~\cite{monroe2021,bruzewicz2019}, and neutral atoms~\cite{negretti2011, saffman2010, saffman2016,bluvstein2024}. The study of simulatability and resources in DV quantum computers is relatively advanced~\cite{chitambar2019,nielsen2010}. For example, the Gottesman-Knill theorem proves that a particularly important class of circuits, known as the Clifford circuits,  are efficiently simulatable~\cite{gottesman1997,gottesman1999a,gottesman1999b,hostens2005,nielsen1997}. These circuits consist of stabiliser states acted on by Clifford operations and measured with Pauli measurements. Based on this simulatability theorem, numerous results have been studied to investigate which states can unlock universal quantum computation (QC)~\cite{bravyi2012,bu2023,campbell2010,campbell2012,campbell2017,hahn2022,heinrich2019,howard2017,howard2014,litinski2019,raussendorf2020,reichardt2005,bravyi2005,seddon2021,yoganathan2019,zurel2020,zurel2024,bermejo-vega2017}. 

Despite the significant progress made across these diverse hardware platforms, scaling qubit-based systems to the millions of physical qubits required for fault-tolerant universal QC (UQC) presents substantial engineering challenges~\cite{devitt2013,nielsen2010}.
The delicate nature of quantum states makes qubits highly susceptible to errors caused by interactions with their environment or imperfect control signals.
These noise sources limit how long quantum information can be reliably stored and the accuracy of quantum gate operations.
To overcome these limitations and enable fault-tolerant computation, quantum error correction (QEC) is necessary.

An alternative approach to DVQC is to use continuous-variable (CV) devices~\cite{braunstein2005,ferraro2005,kok2010,adesso2014,serafini2017}. CV quantum computers encode quantum information in continuous degrees of freedom, such as the amplitude and phase of a Bosonic field.
CVQC offers distinct potential advantages, such as the ability to encode a large amount (theoretically infinite) of information in a single mode, and a natural encoding of continuous degrees of freedom relevant for physical simulations and optimisation problems. CVQC can be implemented on various physical platforms, each with its own strengths and weaknesses.
Prominent examples include optical systems, which leverage the modes of light and readily available components like beam splitters and photodetectors~\cite{pfister2019, takeda2019}, microwave circuits utilising superconducting elements like Josephson junctions to engineer complex interactions and states~\cite{gu2017, hillmann2020, meaney2014, gao2019, romanenko2020}, trapped ions, where vibrational modes can serve as CVs~\cite{raynal2010}, and optomechanical systems coupling light and mechanical motion~\cite{aspelmeyer2014, cabot2017}.
The choice of platform often depends on the specific application and the current state of experimental technology.

As with DVQC, CVQC also faces significant challenges due to noise. However, compared to qubit-based systems, CVQCs offer an interesting and potentially advantageous opportunity: the ability to perform error correction within a single mode~\cite{gottesman2001}.

\subsection{Error correction and Bosonic codes}

In DV quantum computing, QEC protocols encode logical quantum information across multiple physical qubits so that errors affecting individual physical qubits can be detected and corrected without disturbing the protected logical state. This method of encoding the states in multiple physical qubits, i.e., on devices in different positions in space, ensures that errors caused by environmental errors on one qubit can be corrected due to the information also existing across the other systems. 

In CVQC, it is possible to encode qubit information in a single mode and perform error correction using only that mode. These error correction codes are known as Bosonic codes~\cite{albert2025}.

The Gottesman-Kitaev-Preskill (GKP) code is a leading example of a Bosonic code. This code protects the logical information of a qubit (or, more generally, qudit) by defining the logical qubit states as two displaced Dirac combs in the position-basis wavefunction. Importantly, this structure has the convenient property that the Fourier transform of the wavefunction is another Dirac comb. In practice, this means that the wavefunction in both position and momentum space has a comb-like structure and that small errors in either position or momentum correspond to small shifts of a comb. It is possible to design measurement routines to detect the modular position or momentum, which means that we can detect these small displacements without revealing the logical content of the state. This allows us to perform error correction on the state without revealing and collapsing the wavefunction of the encoded state.

Unfortunately, perfectly encoded GKP states are unphysical because a Dirac comb is non-normalisable. However, approximate versions of GKP states exist, whereby each delta function in the Dirac comb is replaced by a thin Gaussian peak, and the entire wavefunction has a large Gaussian envelope. 
These physical GKP states have been realised experimentally in multiple platforms, including superconducting microwave cavities~\cite{campagne-ibarcq2020,kudra2022}, trapped-ion mechanical oscillators~\cite{fluhmann2019}, and quantum optics~\cite{konno2024}.

\subsection{Simulation}

Classical simulatability refers to a classical computer's ability to reproduce a quantum process's output efficiently.
A quantum process is considered efficiently classically simulatable if the computational resources, such as time and memory (often referred to as ``space''), required by a classical algorithm scale polynomially with the number of qudits or modes $n$ of the quantum system. 
This definition encompasses three flavours: strong simulation, weak simulation and probability estimation. Strong simulation involves calculating the probability of each possible outcome. Weak simulation requires generating samples from the output distribution according to their probabilities. I.e., the classical device produces outputs like a quantum computer would. Probability estimation is a different type of simulation and refers to the task of computing an additive approximation to the actual probability. Strong simulation is the most difficult to achieve~\cite{terhal2002,pashayan2020,chabaud2020b}.

While strong simulation is generally computationally harder than weak simulation, both provide benchmarks against which the capabilities of quantum computers can be compared \cite{montanaro2016, watrous2009}.
In either case, if a classical computer could, even weakly, simulate a quantum algorithm using the same time and space, it would render the quantum algorithm useless.

Understanding the boundary between classically efficiently simulatable and non-classically simulatable quantum algorithms helps delineate the limits of classical computing and identify potential areas where quantum computers can offer a significant speedup or QA~\cite{preskill2018, harrow2017, yung2019}.
Furthermore, studying simulatability provides tools for verifying the correct operation of quantum devices, even in regimes where full classical simulation is intractable \cite{hangleiter2017, spagnolo2014, dimeglio2024}.

Similar to the Gottesman-Knill theorem for DVQC, efficiently simulatable circuits exist in CVQC.
For example, Gaussian circuits are classically simulatable~\cite{braunstein2005,weedbrook2012,adesso2014,ferraro2005,cerf2007}, establishing non-Gaussianity as a necessary resource for QA.
These cases illustrate scenarios where the computational boundary remains firmly on the classical side.

Conversely, certain quantum algorithms and computational tasks are considered intractable for even the most powerful classical computers.
Famous examples include Shor's algorithm for factoring large numbers \cite{shor1999}, which provides an exponential speedup over known classical algorithms, and certain sampling problems such as Boson sampling \cite{aaronson2013, broome2013, tillmann2013, spagnolo2014, wang2019, zhong2020} and Random Circuit Sampling~\cite{bouland2019}.
These problems are ideal candidates for demonstrating a QA.

While Gaussian operations on Gaussian states are simulatable, the introduction of non-Gaussian states, such as GKP states~\cite{gottesman2001,baragiola2019}, or non-Gaussian gates, like the cubic phase gate~\cite{lloyd1999,budinger2024}, is generally required for universal CVQC.
Quantum algorithms for problems believed to be classically intractable, such as those requiring UQC, generally require computational resources that scale exponentially with the system size on a classical computer.
While any QC could technically --- with infinite time --- be simulated classically (e.g., by tracking the state vector), this simulation typically incurs an exponential overhead in resources relative to the size of the quantum system, rendering it practically impossible for large systems.

\section{Motivation}
\label{sec:intro-motivation}
In this thesis, we investigate the boundary between efficiently simulatable circuits and those capable of UQC for CVQC.

We investigate the simulatability of a large class of circuits containing GKP states. Previous theorems based on the Gaussian character of the circuit elements are incapable of assessing their simulatability because GKP states are highly non-Gaussian~\cite{garcia-alvarez2020}. Simulating circuits with GKP states was previously very challenging because of the strong non-classical features of these states. However, using various new techniques, we find that these circuits are, in fact, simulatable under certain conditions. These results can also be extended to simulating realistic GKP states with an exponential (but surprisingly small) overhead. This advance opens the door to practical benchmarks of Bosonic quantum processors, offering a tool for near-term experiments aiming to achieve fault-tolerance. In addition, it enables the focused design of new quantum algorithms that can achieve QA.

Through our simulatability theorems, we can also develop a unified framework for identifying which CV resources promote these otherwise simulatable circuits to UQC. As a striking example, we show that adding only the vacuum state --- ordinarily a ``free'' Gaussian resource --- renders the GKP model universal. Building on this insight, we introduce a general class of maps from CV to DV, which are implementable within the simulatable class of circuits and use measures of resourcefulness to assess the resulting logical states. This framework yields the first sufficient criterion for universality in Bosonic circuits and enables quantitative comparison of candidate resource states. Note that this contrasts with the case consisting of otherwise all-Gaussian components, whereby a sufficient condition to promote Gaussian circuits to universality remains to be found.

\section{Objectives}
\label{sec:key-objectives}
The overarching goal of this thesis is to investigate the boundary between classically simulatable and truly universal CVQCs, focusing on GKP encodings. We can understand this overarching goal through three key objectives:

\begin{itemize}
\setlength\itemsep{0.1em}
\item \textbf{Objective 1} (\textit{Classical simulation of ideal GKP states}): Building on a combination of techniques devised initially for qubits and new techniques devised from analytic number theory, we will show that circuits involving ideal GKP states can be efficiently simulated on a classical computer under certain conditions.
\item \textbf{Objective 2} (\textit{Classical simulation of realistic GKP states}) We will demonstrate that it is possible to simulate realistic GKP states under certain conditions. Here, we expect that the algorithm will scale exponentially with respect to the number of modes. However, we aim to make this exponential overhead as small as possible, to the point that the simulation of practical circuits is tractable.
\item \textbf{Objective 3} (\textit{Framework to assess the resourcefulness of CV states}): We will develop a new resource theory for CV systems, introduce a method to measure the resourcefulness of states based on state decompositions, and establish a criterion for achieving universality.
\end{itemize}

\section{Outline of the thesis}
\label{sec:outline}
This thesis is organised into four remaining Chapters.

First, Chapter~\ref{ch:background} reviews the necessary background of this thesis, including DV stabiliser formalisms, CV Gaussian states and operations, Bosonic codes, and phase-space representations necessary for our simulation techniques. This Chapter does not represent an exhaustive survey of CVQC; for a more comprehensive treatment, the reader is referred to textbooks~\cite{cerf2007,kok2010,serafini2017} and review articles~\cite{braunstein2005,weedbrook2012,pfister2019,brady2024}.

Following this brief review, Chapter~\ref{ch:simulation-of-ideal-gottesman-kitaev-preskill-states} addresses Objective~1 by presenting efficient classical simulation algorithms for circuits incorporating ideal (infinitely squeezed) GKP states, rational symplectic transformations, continuous displacements, and homodyne detection. We summarise a variety of techniques which yield different classes of simulatable circuits, including deriving closed-form probability distributions using number-theoretic methods and extending the stabiliser formalism to accommodate these circuits. We demonstrate the simulatability of these circuits even though they are highly non-classical.

In Chapter~\ref{ch:simulation-of-realistic-gottesman-kitaev-preskill-states}, corresponding to Objective~2, we generalise our results to realistic (finitely squeezed) GKP states. We develop a novel weak-simulation framework based on the Zak-Gross Wigner (ZGW) function, achieving a runtime with a small, but exponential, scaling in the squeezing parameter, thus making simulations in the practical regime tractable.

Chapter~\ref{ch:a-framework-for-universality-in-continuous-variable-quantum-computation} introduces a resource-theoretic framework addressing Objective~3, wherein continuous-to-discrete mappings are implemented within the simulatable class to evaluate the resourcefulness of states to promote a simulatable class of circuits to universality. This yields the first sufficient criterion for universal CVQC and unlocks the ability to compare possible resource states quantitatively.

Finally, in the Conclusion, we summarise our findings, discuss the implications for the computational power of CV devices, and outline promising directions for future research.

\clearpage{}
    \clearpage{}\chapter{Background}

\label{ch:background}

\DropCap{I}{n}{0.2} this Chapter, we introduce the technical tools and concepts that underly this thesis. These include the formal definition of qubits and qudits, CVQC and Bosonic encodings, quasiprobability distributions, and mappings to associate qubit-like quantum information to CV states.

\section{Discrete-variable quantum computing}

\label{sec:discrete-variable-quantum-computing}

DVQC operates in finite-dimensional Hilbert spaces, with the $d$-dimensional qudit as its fundamental unit.
The most common form of qudit discussed in the literature is the $d=2$ qudit, more commonly known as the qubit. We will begin by defining qubits before moving on to the higher-dimensional $d>2$ qudits.

\subsection{Qubit-based quantum computing}

\label{sub:qubit-based-quantum-computing}

A qubit, the quantum analogue of a classical bit, can exist not only in the computational basis states $|0\rangle$ and $|1\rangle$ but also in a superposition state $\alpha_0|0\rangle + \alpha_1|1\rangle$, as well as its probabilistic mixtures~\cite{nielsen2010}.
Here, $\alpha_0$ and $\alpha_1$ are complex numbers satisfying the normalisation condition $|\alpha_0|^2 + |\alpha_1|^2 = 1$.
This ability to exist in superposition and exhibit entanglement allows DVQC systems to explore multiple computational paths simultaneously. Importantly, however, when measuring a qubit, it collapses to only one of two states. Upon measurement in the computational basis $\{|0\rangle, |1\rangle\}$, the qubit state collapses to $|j\rangle$ with probability $|\alpha_j|^2$.

Operations on qubits transform states to other states.
A particularly important class of operations is the Pauli group.
The single qubit Pauli group $\pauligroup$ consists of the $2\times 2$ identity matrix $\mathbbm{1}$, the Pauli-X operator $\hat X=\ket 1 \bra 0+\ket 0 \bra 1$, the Pauli-Z operator $\hat Z=\ket 0 \bra 0 - \ket 1 \bra 1$ and the Pauli-Y operator $\hat Y=i\hat X\hat Z$, along with all multiplicative factors $\{\pm 1, \pm i\}$.
The $n$-qubit Pauli group consists of all unique length-$n$ tensor products of elements selected from these groups~\cite{sakurai2017,nielsen2010}.

The full set of qubit states, including probabilistic mixtures, can be represented as a density matrix $\hat \rho$, which is a $2^n\times 2^n$ Hermitian, positive semi-definite matrix with $\Tr(\hat \rho)=1$~\cite{sakurai2017}. The set of single qubit states can equivalently be represented by a real 3-vector, known as the Bloch vector $\mathbf r^{(\hat \rho)}$. This vector is defined such that $|\mathbf r^{(\hat \rho)}| \leq 1$, where $|\cdot|$ is the standard Euclidean norm~\cite{nielsen2010}. The Bloch vector is related to the density matrix as
\begin{align}
    \hat \rho = \frac 1 2 \left(\mathbbm 1 + r_1^{(\hat \rho)}\hat X+ r_2^{(\hat \rho)}\hat Y+ r_3^{(\hat \rho)}\hat Z\right).
\end{align}

\subsection{Quantum computing with higher-dimensional systems}

\label{sub:quantum-computing-with-higher-dimensional-systems}

Qudit technology with $d>2$ is emerging as an alternative to the conventional qubit implementations of QC and information processing. They offer several potential advantages for quantum computing, including reducing circuit complexity for certain tasks, simplifying experimental implementations, and enhancing the efficiency of various quantum algorithms~\cite{wang2020}.
Qudit-based quantum computing systems can be implemented on various physical platforms such as photonic systems~\cite{gao2020,lu2020}, ion traps~\cite{klimov2003}, nuclear magnetic resonance~\cite{dogra2014,gedik2015}, and molecular magnets~\cite{leuenberger2001}.

Formally, qudits can exist in a superposition of $d$ states, i.e., $\ket{\psi}=\sum_{j=0}^{d-1} \alpha_j\ket j$ where, like with qubits, the state is normalised to $1$, i.e., $\sum_{j=0}^{d-1} |\alpha_j|^2=1$.
Compared to the standard two-level qubit, a qudit offers a significantly larger state space, enabling the encoding and manipulation of more information within a single physical system.

As for qubits, it is possible to define a group analogous to the Pauli group for arbitrary dimensions.
The single qudit $d$-dimensional Pauli group consists of combinations of the operators $\hat X_d=\sum_{j=0}^{d-1} \ket{j+1}\bra j$ and $\hat Z_d=\sum_{j=0}^{d-1} \omega_d \ket j \bra j$, where $\omega_k=e^{2 \pi i /k}$ is the $k$-th root of unity and addition inside the ket is modulo $d$. Unless explicitly stated otherwise, we use this notation for modular addition throughout the rest of this thesis.
Formally, the full $n$-qudit $d$-dimensional Pauli group is defined as~\cite{hostens2005,garcia-alvarez2020}
\begin{align}
\pauligroup[n][d]=\left\{\omega^{u}_{D}\hat X_d^v Z_d^w : u \in \mathbb Z_D \text{ and } v,w \in \mathbb Z_d\right\},
\end{align}
where
\begin{align}
D=\begin{cases}
d \quad & \text{for odd } d,\\
2d \quad & \text{for even } d.
\end{cases}
\end{align}
We use the $d$-dimensional Pauli operators to form the useful Heisenberg-Weyl group~\cite{gibbons2004} in terms of the elements
\begin{align}
\label{eq:hwop}
\hwop = \omega_D^{\mathbf{u_X}\cdot \mathbf{u_Z}/2} \hat X_d^{\mathbf{u_X}}\hat Z_d^{\mathbf{u_Z}},
\end{align}
where we use the notation $\mathbf{u_X}$ to refer to the first $n$ elements of $\mathbf u$ and $\mathbf{u_Z}$ to refer to the last $n$ elements, and also $\hat X_d^{\mathbf v}=\hat X_d^{v_1} \otimes \dots \hat X_d^{v_n}$.

\subsection{Introduction to the stabiliser formalism}

\label{sub:introduction-to-the-stabiliser-formalism}

The stabiliser formalism is a powerful tool that is useful for understanding certain classes of quantum states and circuits~\cite{gottesman1997,gottesman1999a,gottesman1999b,aaronson2004,nielsen2010,gheorghiu2014, hostens2005, appleby2008,farinholt2014}.
It provides a method to characterise quantum states by a small set of operators, which grows linearly with the number of qudits, rather than the state vectors, which grow exponentially. The formalism was initially developed for qubits~\cite{gottesman1997,gottesman1999a}, and was shortly after extended to qudits of odd dimension~\cite{gottesman1999b}, and finally generalised to arbitrary dimensions~\cite{hostens2005}.

A stabiliser $\stabgroup$ for an $n$-qudit state $|\psi\rangle$ or subspace $\stabspace$ is an abeliean subgroup of $\pauligroup[n][d]$ such that $\stab |\psi\rangle = |\psi\rangle$ for all $\stab \in \stabgroup$.
In other words, it is the $+1$ eigenspace of all operators in $\stabgroup$~\cite{gottesman1997,gottesman1999a}. For $\stabgroup$ to be non-trivial, it cannot contain the element $-\hat{\mathbbm{1}}$.

The group $\stabgroup$ is typically described by a set of independent generators $\{\stab_1, \dots, \stab_l\}$ such that $\stabgroup = \langle \stab_1, \dots, \stab_l \rangle$.
A stabiliser group $\stabgroup$ generated by independent commuting operators has a stabilised subspace $\stabspace$ of dimension $d^{n-l}$.
Hence, if we have $l=n$ stabilisers, the stabiliser group uniquely stabilises a single state~\cite{gottesman1997,gottesman1999a,gottesman1999b}.
The set of states for which a valid, unique stabiliser group exists is called the stabiliser states.

The stabiliser formalism is particularly useful for describing the evolution of stabiliser states under a class of unitary operations, known as Clifford operations. 
The $d$-dimensional Clifford group is the subset of the unitaries for which Pauli operators are mapped to different Pauli operators under conjugation.
Formally, it is defined as
\begin{align}
\clifford[n][d]=\{\unitary \stab \unitary^\dagger \in \pauligroup[n][d]: \stab \in \pauligroup[n][d]\}.
\end{align}

For qubits, the Clifford group $\clifford[n][2]$ is generated by the Hadamard $\hat H$, phase\footnote{The phase gate is usually referred to as $\hat S$ in the qubit literature. We avoid this because we later refer to squeezing using this notation.} $\phase$, and controlled-NOT gates.
For qudits with $d>2$, the Clifford group is generated by the $d$-dimensional equivalents: the Fourier transform $\fourier[d]$, the phase gate $\phase[d]$, and $\sumgate[d]$ gate (which all reduce to the qubit equivalents for $d=2$).
These are defined as
\begin{align}
\phase[d]=\sum_{j=0}^{d-1} \omega^{j(j-\doffset)}_{2d} \ket k \bra j,
\end{align}
where 
\begin{align}
\doffset = \begin{cases}
1 \quad & \text{if } d \text{ odd,}\\
0 \quad & \text{if } d \text{ even,}
\end{cases}
\end{align}
\begin{align}
\fourier[d]=\frac{1}{\sqrt d}\sum_{j,k=0}^{d-1} \omega^{jk}_d \ket k \bra j,
\end{align}
and
\begin{align}
\sumgate[d]= \sum_{j=0}^{d-1} \ket{j+k}\bra{j}\otimes\ket{k}\bra{k}.
\end{align}

The stabiliser formalism also provides a framework for describing the outcomes of measurements of the Heisenberg-Weyl observables $\hwop$ (and the rest of the Pauli group by multiplying by an arbitrary global phase) and the resulting state transformations.
If $\hwop$ commutes with all stabiliser generators $\stabgroup$, the state is already an eigenstate of $\hwop$, yielding a deterministic $\pm 1$ outcome without altering the state or its stabiliser.
If $\hwop$ anti-commutes with one or more generators, the outcome is probabilistic, and outcomes $+1$ and $-1$ are equally likely.
The stabiliser group is updated based on the measurement outcome to reflect the post-measurement state~\cite{nielsen2010}.

An important consequence of the stabiliser formalism is the Gottesman-Knill theorem~\cite{gottesman1997,gottesman1999a}, which proves that circuits consisting of stabiliser states, Clifford operations and Pauli measurements are efficiently simulatable --- i.e., simulatable in a polynomial time with respect to the number of qudits --- on a classical computer. This is because the description of the states can be tracked with a number of variables that scales quadratically with the number of qudits, the operations act as $2n\times 2n$ matrix multiplications and measurements involve checking the commutation relations of a linear number of stabilisers with the measurement operator~\cite{hostens2005} (which in practice involves a $2n$-vector dot product).
Follow-up studies extended this idea to much richer gate sets known as normaliser circuits that include quantum Fourier transforms, group automorphisms, and quadratic phase gates --- first for finite Abelian registers and later for infinite-dimensional systems~\cite{bermejo-vega2014,bermejo-vega2016a,bermejo-vega2017}.

\section{Continuous-variable quantum computing}

\label{sec:continuous-variable-quantum-computing}

Building upon the concepts introduced for DV quantum computing in Section~\ref{sec:discrete-variable-quantum-computing}, we now focus on CV systems, which represent an alternative paradigm for QC.

While the DV paradigm encodes quantum information in finite degrees of freedom, such as the two basis states of a qubit, CV systems utilise continuous degrees of freedom.
These variables are represented by operators acting on infinite-dimensional Hilbert spaces, characteristic of CV systems.

A prominent example of a CV system is an electromagnetic field mode. Even though the following formalism can be applied to any CV system, we use the nomenclature of quantum optics throughout this thesis. The relevant observables with a continuous spectrum are the amplitude and phase quadrature of the field, denoted $\hat q$ and $\hat p$ respectively, often referred to as the position and momentum operators. These satisfy the eigenket equations
\begin{align}
    \label{eq:eigenkets}
    \hat q \ket{\hat q=x}=x\ket{\hat q=x},\quad
    \text{and}\quad \hat p \ket{\hat p=p}=p\ket{\hat p=p}.
\end{align}

\subsection{Quantum mechanical framework}

\label{sub:quantum-mechanical-framework}

The fundamental mathematical description of CV systems is rooted in the canonical commutation relations (CCR).
For a system with $n$ modes, represented by pairs of canonical operators $\hat{q}_j$ and $\hat{p}_j$ for $j \in \{1,...,n\}$, the CCR are given by\footnote{We use the convention that $\hbar=1$ throughout this thesis.} $[\hat{q}_j, \hat{p}_k] = i\delta_{jk}$ \cite{adesso2014, wilde2016a, serafini2017}.
Their Hilbert space is the $n$-fold tensor product of the Hilbert space of a single mode, defined as $L^2(\mathbb R)$~\cite{serafini2017}. These operators can also be used to define the photon creation and annihilation operators,
\begin{align}
    \hat a^\dagger &= \frac{1}{\sqrt{2}}(\hat q-i\hat p),\\
    \hat a &= \frac{1}{\sqrt{2}}(\hat q+i\hat p),
\end{align}
respectively.

The CCR can be expressed compactly using the real, anti-symmetric symplectic form $\sympform$, which is defined as
\begin{align}
\label{eq:sympform}
\sympform = \begin{pmatrix}0&\mathbbm 1\\-\mathbbm 1&0\end{pmatrix}.
\end{align}
To do so, we write the set of canonical operators in vector form\footnote{Note that this vector can equivalently be written in the order $\hat{\mathbf r}=(\hat q_1,\hat p_1,\dots,\hat q_n,\hat p_n)$, in which case the symplectic form $\Omega$ is instead expressed as a permutation of the form given in Eq.~(\ref{eq:sympform}).} as $\hat{\mathbf r}=(\hat{\mathbf q}^T,\hat{\mathbf p}^T)^T=(\hat q_1,\dots,\hat q_n,\hat p_1,\dots,\hat p_n)^T$. This allows us to express the CCR as\footnote{This is sometimes expressed~\cite{serafini2017} using the notation $[\hat{\mathbf r},\hat{\mathbf r}^T]=i \Omega$, whereby $[\mathbf a,\mathbf b]$ is defined as the outer product. However, throughout this thesis, we use this same notation to instead refer to the symplectic dot product defined in Eq.~(\ref{eq:comm-meaning}).}~\cite{puri2001,ferraro2005,serafini2017}
\begin{align}
    [\hat r_j,\hat r_k]=i\Omega_{jk}.
\end{align}

This formalism provides a canonical framework for representing the system's operators and transformations.
Formally, the real symplectic matrices $\symp[R]$ are defined as the group of $2n\times 2n$ real matrices $S$ such that $S^T\Omega S=\Omega$~\cite{arvind1995,gosson2006}. It is often convenient to express a symplectic matrix in block form,
\begin{align}
    \label{eq:symp-block}
    S=\begin{pmatrix}
        A&B\\C&D 
    \end{pmatrix}.
\end{align}

\subsection{Gaussian states and operations}

\label{sub:gaussian-states}

Gaussian states are a useful fundamental class of CV quantum states, often compared to the stabiliser states in DV quantum computing. Gaussian states are completely characterised by their first and second moments of the quadrature operators $\hat{\mathbf r}$~\cite{braunstein2005,ferraro2005,weedbrook2012,watrous2009,serafini2017}.

An alternative and equivalent useful definition views Gaussian states as the set of states arising from quadratic Hamiltonians of the form
\begin{align}
\label{eq:gauss-ham}
\hat{H} = \frac{1}{2} \hat{\mathbf r}^T H \hat{\mathbf r} + \hat{\mathbf r}^T \mathbf r,
\end{align}
where the Hamiltonian matrix $H$ is a real symmetric matrix and $\mathbf r$ is a real $2n$-vector.
Note that such Hamiltonians are, by definition, at most quadratic in the canonical operators.

As we will see in Subsection~\ref{sub:classical-simulation-gaussian}, the Gaussian nature of their phase space representation makes them mathematically tractable to simulate in polynomial time, with respect to the number of modes of the system~\cite{braunstein2005,ferraro2005,watrous2009,safranek2015,wolf2006}.
Examples of Gaussian states include the vacuum, coherent, squeezed, and thermal states.
States such as Fock states (for $n>0$) and cat states are non-Gaussian.
This natural boundary between Gaussian and non-Gaussian states will be explored in detail throughout this thesis.

Gaussian operations are defined as the set of unitary operations with Hamiltonians of the form given in Eq.~(\ref{eq:gauss-ham}), i.e., $e^{i\hat H}$.
Gaussian operations can always be separated into a symplectic transformation, represented in terms of a symplectic matrix $\sympmat=e^{\Omega H}$, and a displacement $\mathbf r$.

The displacement operator is the most basic type of operator in CV quantum computing and is defined as
\begin{align}
\label{eq:disp}
\disp[\mathbf r]=e^{i [\mathbf r,\hat{\mathbf r}]} =e^{i {\mathbf {r_X}}\cdot \mathbf{r_Z}} e^{i\mathbf {r_X} \cdot \hat{\mathbf p}}e^{-i \mathbf{r_Z} \cdot \hat{\mathbf q}} 
\end{align}
for a phase space vector $\mathbf r \in \mathbb R^{2n}$, where we have introduced the notation~\cite{delfosse2017}
\begin{align}
    \label{eq:comm-meaning}
    [\mathbf a,\mathbf b]=\mathbf a^T \Omega \mathbf b,
\end{align}
which is used throughout this thesis.
Given the frequency with which we apply displacements only in position or momentum, we introduce the operators $\hat X(\mathbf s)=e^{-i {\mathbf s}\cdot \hat{\mathbf{p}}}$ and $\hat Z(\mathbf s)=e^{i {\mathbf s}\cdot \hat{\mathbf{q}}}$, where here $\mathbf s \in \mathbb R^n$.

The reason why any unitary Gaussian operation can be separated into a symplectic transformation and displacement can be understood in terms of the useful commutation relation
\begin{align}
\label{eq:comm-disp-symp}
\unitary_{\sympmat} \disp[\mathbf r] = \disp[\sympmat^{-1}\mathbf r] \unitary_{\sympmat},
\end{align}
which implies that the Hamiltonian's displacement terms in Eq.~(\ref{eq:gauss-ham}) can be ``pulled out'' to one side.

We also identify the structure of the set of unique Gaussian operations from this commutation relation. They can be represented concisely with the notation $\text{HW}(n)[\operatorname{Sp}(2n,\mathbb R)]$, which is the semi-direct product of the Heisenberg-Weyl group (representing displacements) with the symplectic group~\cite{bartlett2002}.

\section{Universal quantum computation}
\label{sec:universality}

In DVQC, universality refers to the ability to prepare any state within the Hilbert space, act on those states with any unitary, and measure them with any Kraus operator~\cite{lloyd1995,divincenzo1995}. 

Clifford circuits are not universal and therefore require additional operations to be added to the set to become universal. This can be achieved by the addition of access to non-stabiliser gates. For example, for qubits, the $\hat T=\phase^{1/2}$ gate unlocks universality for DVQC. This gate implements a relative $\pi/4$ phase on the $\ket 1$ computational basis state only.

Universality with CVQC can be interpreted in two different ways. First, it is sometimes used to refer to the ability to approximately prepare any possible state, operation and measurement with a finite number of operations up to arbitrary error~\cite{lloyd1999}. Alternatively, it may mean the ability to encode universal DVQC. We usually refer to the second of these two types of universality in this thesis. In either case, non-Gaussianity is a required resource to unlock universality.

Generally, the non-Gaussian elements needed to achieve QA can be accomplished by either starting with or generating non-Gaussian states, performing non-Gaussian operations, or utilising non-Gaussian measurements.
Specific non-Gaussian resource states are known to promote Gaussian circuits to universality.
A key example of a non-Gaussian resource state is the cubic phase state, defined as the action of 
the cubic phase gate $e^{i \gamma\hat q^3 }$, with real parameter $\gamma$, acting on a squeezed vacuum state.
This state is particularly powerful as it enables the implementation of any non-Gaussian gate~\cite{lloyd1999,arzani2025} operation.

For example, it can be used to realise the logical $\hat T$ gate within the GKP encoding scheme~\cite{gottesman2001} or, more generally, it can be used to reproduce the CV cubic phase gate, which is sufficient to promote Gaussian operations to universality~\cite{braunstein2005, lloyd1999,arzani2025}.
Recent experimental progress has demonstrated the generation of cubic phase states in various platforms~\cite{kudra2022, sakaguchi2023,houhou2022}.
\section{Gottesman-Kitaev-Preskill codes}

\label{sec:encoding-schemes-for-continuous-variable-quantum-computing}

In this Section, we review the GKP code, which is a type of Bosonic code~\cite{chuang1997,gottesman2001,albert2018,grimsmo2020,brady2024,albert2025}. The GKP code is a method to encode discrete quantum information, such as that represented by qubits and qudits, within CV systems.
In this respect, this code provides a pathway towards achieving fault-tolerant QC using CV hardware \cite{gottesman2001}.

\subsection{Gottesman-Kitaev-Preskill states}

\label{sub:gottesman-kitaev-preskill-gkp-codes}

GKP states define a protected code space within the infinite-dimensional Hilbert space of a CV mode, characterised by a set of commuting stabiliser operators \cite{gottesman2001,royer2020}.
For a single mode encoding a qudit, these stabilisers are specific position and momentum displacement operators: $\lX[d]^d=\xdisp[d\peakspace[d]]$ and $\lZ[d]^d=\zdisp[d\peakspace[d]]$, where $\peakspace[d]=\sqrt{2 \pi/d}$.
The GKP code space consists of states $\ket{\psi_{\text{GKP}}}$ such that $\lX[d]^d\ket{\psi_{\text{GKP}}} = \lZ[d]^d\ket{\psi_{\text{GKP}}} =\ket{\psi_{\text{GKP}}}$.
Logical Pauli operators, corresponding to displacement operators $\lX[d]=\xdisp[\peakspace[d]]$ and $\lZ=\zdisp[\peakspace[d]]$,  commute with the stabilisers but act non-trivially on the encoded state \cite{gottesman2001}.

It is possible to show that, as a consequence of the codespace being stabilised by displacement operators with spacing $d\peakspace[d]$, the state vector representation of an ideal GKP codeword is given by
\begin{align}
\gkpket[j][d]=\sum_m \ket{\hat q=(j+dm)\peakspace[d]},
\end{align}
where $\peakspace[d]=\sqrt{2 \pi/d}$, and $\ket{\hat q=\cdot}$ are the eigenkets of the position operator as given in Eq.~(\ref{eq:eigenkets}). Equivalently, the wavefunction of the GKP codeword\footnote{Note that $\gkpket[d]$ is technically not a state because it is not normalisable (which is also why it is non-physical). However, for simplicity we often refer to it as one, including in the appended papers.} is given by
\begin{align}
    \label{eq:gkp-wf}
    \psi_{\text{GKP},j}(x)=\sum_m \delta\left(x-\peakspace[d](j+d m)\right).
\end{align}

In practice, ideal GKP codewords, requiring infinite energy and squeezing, are not physically realisable.
Instead, experiments focus on generating approximate GKP states with finite squeezing, characterised by finite energy and Gaussian-blurred features in their phase-space representation.
The quality of these realistic states, often quantified by their squeezing parameter $\Delta$, or in terms of its ``squeezing level'' in decibels --- i.e., $\Delta$ corresponds to a squeezing level of $-10\log_{10}\Delta^2$ dB --- strongly affects the performance of GKP-based QC and error correction.
The level of squeezing currently expected to be sufficient for achieving universality with GKP states is 9.75 dB~\cite{aghaeerad2025}.

Realistic GKP states can be defined in terms of their wavefunction as~\cite{gottesman2001,menicucci2014,matsuura2020}
\begin{align}
\label{eq:realistic-gkp-states}
\psi_{\text{GKP},j}^\Delta(x)= \frac{1}{\sqrt{2\pi}\Delta}e^{-x^2 \Delta^2/2} \jacobi(\frac{x}{2\ell}-\frac{j}{2},\frac{i\pi\Delta^2}{2\ell^2}),
\end{align}
where $\jacobi(z,\tau)$ is the Jacobi theta function~\cite{berndt1998} defined as
\begin{align}
\label{eq:jacobi}
\jacobi(z,\tau)=\sum_{m \in \mathbb Z} e^{\pi i m^2 \tau+2\pi i mz}.
\end{align}
Note that this definition can be extended to the multidimensional case, also known as the Sigel theta function, to
\begin{align}
\label{eq:siegel}
\jacobi(\mathbf z,\tau)=\sum_{\mathbf m \in \mathbb Z} e^{\pi i \mathbf m^T\tau \mathbf m+2\pi i \mathbf m^T \mathbf z}
\end{align}
where in the second definition $\tau$ is a matrix.
This definition will be useful later.

Fault-tolerant UQC generally requires a combination of Gaussian operations and non-Gaussian resources, such as logical magic states. Achieving universality requires adding a logical non-Clifford gate, which can be implemented directly via a cubic phase gate or, intriguingly, via magic-state injection using a Gaussian state and GKP error correction~\cite{baragiola2019}. Gaussian operations, while readily implementable, are insufficient for UQC. Non-Gaussian states, like GKP states, cat states, or non-Gaussian operations (e.g., cubic phase gates), provide the necessary magic resource.

Despite their theoretical appeal, GKP codes face significant experimental challenges in quantum optical setups.
Preparing high-fidelity, high-squeezing GKP states is technically demanding due to limitations in generating strong non-Gaussianity and precisely controlling Bosonic modes.
Furthermore, GKP states are sensitive to photon loss, a prevalent noise source in many physical platforms, which can degrade state quality and introduce logical errors.
Overcoming these limitations through improved state preparation, robust error correction strategies, and noise mitigation techniques remains an active area of research \cite{marek2024, serafini2017, noh2022}.

\subsection{Logical qubit gates}

\label{sub:logical-qubit-gates}

As explained in Section~\ref{sec:universality}, achieving universal encoded-qubit QC in CV systems requires a set of gates that, when combined with state preparation and measurement, can approximate any unitary operation on the encoded logical qubits.
As discussed, this set includes the Cliffords (generated by the Hadamard, phase gate and CNOT gate), any pure non-stabiliser state, or any state with a sufficiently high fidelity to a target magic state (see the later Section~\ref{sec:resource-theories}).

For GKP-encoded qubits, the set of Clifford operations is produced using Gaussian operations.
Specifically, logical Pauli gates $\lX$ and $\lZ$ are realised by applying displacement operators in phase space, the logical Hadamard gate is achieved by applying the CV Fourier transform $\fourier=e^{i \frac{\pi}{4}(\hat q^2+\hat p^2)}$, and logical Phase gate $\phase$ is achieved by applying a shear transformation $\phase*[s]=e^{s\hat q^2/2}$ with parameter $s=1$.
The CNOT gate is achieved by the CV SUM gate, defined as $\sumgate^{(\text{CV})}=e^{-i\hat q_1 \hat p_2}$.
To achieve the universal set, we require a $\hat T=\ket{0}\bra 0 + e^{\pi i/4} \ket 1 \bra 1$ gate, which for GKP-encoded qubits, can be implemented using the cubic phase gate $e^{i\gamma \hat q^3}$ along with the shear and displacement gates~\cite{gottesman2001}.

\subsection{Error correction}

\label{sub:error-correction}

As mentioned, a key strength of GKP codes lies in their error correction capabilities.
Small displacement errors in the encoded mode can be detected by measuring the GKP stabilisers. Specifically, by measuring the modular position $\hat{q} \mod{d\peakspace[d]}$ and modular momentum $\hat{p} \mod{d\peakspace[d]}$)~\cite{gottesman2001,glancy2006,grimsmo2021}.

\begin{figure}[h!]
    \centering
    \includegraphics[width=0.8\textwidth]{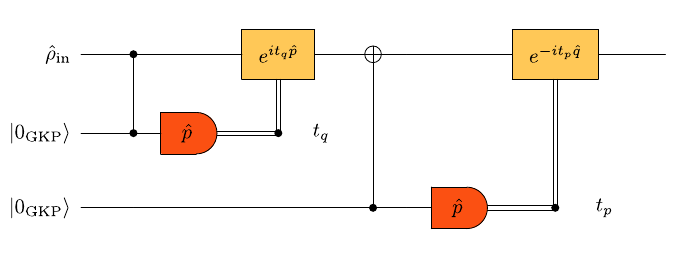}
     \caption{Circuit gadget implementing $\hat K_{\text{EC}}(\mathbf t)$. Mode $1$ is the top mode, which takes an input state and outputs a modified state. Mode $2$ and $3$ below are auxiliary modes which have a fixed input and once measured can be discarded. The measurement outcomes are denoted as $t_p$ and $t_q$.}
     \label{fig:error-correction}
    \end{figure} 
These measurements can be performed non-destructively, for instance, by coupling the encoded mode to auxiliary GKP states via Gaussian operations and measuring the auxiliary states \cite{mensen2020, campagne-ibarcq2020}.
A circuit of this error-correction routine is shown in Fig. \ref{fig:error-correction}.
The measurement outcome reveals the error syndrome, indicating the magnitude of the displacement error.
Applying an opposite displacement operation based on the syndrome measurement corrects the error and restores the state to the code space.

It is often convenient to express the action of GKP error correction in terms of a Kraus operator, which depends on measurement outcomes $t_x,t_p$ from the error-correction circuit. Specifically, the Kraus operator is given by
\begin{align}
    \label{eq:gkp-ec-kraus}
    \hat K_{\text{EC}}(\mathbf t)=\gkpproj \disp[\mathbf t],
\end{align}
where
\begin{align}
    \label{eq:gkp-proj}
    \hat \Pi_{\text{GKP}}=\sum_{j=0}^{d-1}\ket{j_{\text{GKP}(d)}}\bra{j_{\text{GKP}(d)}}
\end{align}
is the GKP projection operator for a general GKP-encoded qudit of dimension $d$.
Note that this is a non-trivial expression of the effect of the circuit, since, in practice, displacements are applied after the measurement. This Kraus operator was first defined in Ref.~\cite{baragiola2019} and derived explicitly in Paper \pD.
The state, after error correction, is proportional to
\begin{align}
\label{eq:after-ec}
\hat{\bar \rho}(\mathbf t)=\hat K_{\text{EC}}(\mathbf t) \hat \rho \hat K_{\text{EC}}^\dagger(\mathbf t).
\end{align}
The norm of the operator given in Eq.~(\ref{eq:after-ec}) will provide the probability of this outcome.

\section{Subsystem decompositions}

\label{sec:subsystem-decompositions}

Building upon the CV stabiliser formalism introduced in Section~\ref{sub:introduction-to-the-stabiliser-formalism}, the stabiliser subsystem decomposition (SSD) provides a method to convert a CV state to a DV state.

The practical SSD process involves projecting the initial state onto the GKP encoded subspace through GKP error correction and averaging over all the possible outcomes measured during error correction.
The circuit for implementing this stabiliser SSD process is the same as the error correcting gadget shown in Fig.~\ref{fig:error-correction}.
The state must then be averaged over the measurement outcomes.
The stabiliser SSD can therefore be expressed as
\begin{align}
\Tr_S(\rho)=\int_R \dd \mathbf t \hat{\bar \Pi}_{\text{GKP}}\hat D(\mathbf t) \hat \rho \hat D^\dagger(\mathbf t)\hat{\bar \Pi}^\dagger_{\text{GKP}},
\end{align}
where now we use the projector $\hat{\bar \Pi}_{\text{GKP}}=\sum_{j=0}^{d-1} \big|j\big\rangle\big\langle j_{\text{GKP}(d)}\big|$, which maps an ideal encoded GKP state to a qudit state, and $R=[0,\peakspace[d])^{\times n}$, i.e., we integrate over a ``Zak patch''~\cite{shaw2024}.
Further details on this notation for qubits are provided in Ref.~\cite{shaw2024} and Paper \pD.

\section{Phase space representations}

\label{sec:phase-space-representations}
In this Section, we review phase space representations of quantum states. We begin with the characteristic and Wigner function of CV states, and then review the analogous functions for DVQC. We then explore the formal properties of Wigner functions in general and demonstrate the covariance of Wigner functions under certain operations. Finally, we introduce the ZGW function, which is a recently developed Wigner function tailored to analysing GKP states.

\subsection{Phase space representations for continuous-variable systems}

\label{sub:characteristic-and-wigner-function-for-continuous-variable-systems}

The characteristic function, denoted by $\charfunc[\mathbf r]$, is a complete phase-space representation of a CV quantum operator $\hat B$, including quantum states, i.e., where $\hat B=\hat \rho$.
It is defined as~\cite{serafini2017}
\begin{align}
\charfunc[\mathbf r] = \text{Tr}[\disp[\mathbf r] \hat B].
\end{align}
The characteristic function contains complete information about the state and is therefore equivalent to the operator $\hat B$ itself.
Following the approach in Ref.~\cite{serafini2017}, this function is the standard ($s=0$) symmetrically ordered characteristic function.

The Wigner quasiprobability function,\footnote{In the rest of this thesis, we refer to this function simply as the ``Wigner function'' or the ``CV Wigner function'' to distinguish it from other generalised Wigner functions.} denoted $\wig[\mathbf r]$, provides an alternative phase-space perspective as a quasiprobability distribution.
It is defined as the symplectic Fourier transform of the standard characteristic function $\charfunc[\mathbf r]$, given by the integral
\begin{align}
\label{eq:wig-def}
\wig[\mathbf r] = \frac{1}{(2 \pi^2)^n} \int_{\mathbb R^{2n}} \dd \mathbf r' e^{i\mathbf r^T \Omega \mathbf r'} \charfunc[\mathbf r'].
\end{align}

A fundamental property distinguishing the Wigner function from classical phase-space distributions is that it can have negative values.
These negative regions are key indicators of non-classical features of the state.

\subsection{Phase space representations for discrete-variable systems}

\label{sub:wigner-function-for-discrete-variable-systems}

A phase space representation analogous to the continuous characteristic function and Wigner function can be defined for odd-dimensional DV systems, known as the Gross Wigner function~\cite{gross2006, gross2007,gross2008}.

Specifically, the discrete characteristic function $\charfunc*[\mathbf m]$ for an operator $\hat B$ is defined by its expansion coefficients with respect to the Heisenberg-Weyl operators as
\begin{align}
\charfunc*[\mathbf m] = d^{-n} \Tr(\hwop[\mathbf m]^\dagger \hat B),
\end{align}
for $\mathbf m\in\mathbb Z_d^{2n}$.

Furthermore, it is possible to take the discrete symplectic Fourier transform of this function to recover the DV Wigner function as
\begin{align}
    \label{eq:dv-wig}
    \wig*[\mathbf t]=\Tr(\ppodv[\mathbf t]\hat B),
\end{align}
in terms of the qudit phase point operator
\begin{align}
    \ppodv[\mathbf t]=\frac{1}{d^n} \sum_{\mathbf k\in\mathbb Z^{2n}_d}\hwop[\mathbf k] \omega^{-[\mathbf t,\mathbf k]}. 
\end{align}

Like its continuous counterpart, the discrete Wigner function is a quasiprobability distribution, but now defined on the discrete phase space $\mathbb Z_d^{2n}$.
As with the CV Wigner function, unlike a true probability distribution, it can take negative values for certain quantum states.
This negativity is an indicator of non-classicality.

\subsection{Properties of the Wigner Function}

\label{sub:properties-of-the-wigner-function}

The Wigner function $\wig[\mathbf r]$ serves as a phase-space representation of a quantum state $\hat{\rho}$ satisfying a number of properties. 
We now list the following properties, which are known as the Stratonovich-Weyl axioms~\cite{stratonovich1956,gracia-bondia1988,varilly1989,brif1999,davis2024}. Here, for simplicity, we use the notation of CV Wigner functions, which includes integration over real space. However, analogous properties hold for DV Wigner functions, whereby integration is replaced with a summation.

Although the Wigner function can be negative --- unlike some other quasiprobability distributions --- the Wigner function is always \textbf{real}.

The Wigner function also satisfies the \textbf{standardisation} condition: its integral over the entire phase space equals the trace of the density operator.
Formally, this means that, for an operator $\hat A$, we have
\begin{align}
\int_{\mathbb R^{2n}} \dd\mathbf r \,\wig[\mathbf r] = \Tr(\hat{A}) = 1.
\end{align}

This property implies an additional important feature of the Wigner function.
By integrating the Wigner function over only $\mathbf{r_X}$ or $\mathbf{r_Z}$, we obtain the marginal probability distribution for the conjugate variables.
For example, for a single-mode state $\hat\rho$, we can write $\wig[x,p]$.
Hence, integrating over $p$ gives us the probability density in position,i.e.,
\begin{align}
\int_{\mathbb R} \dd p \,W(x,p)= \bra x \hat\rho \ket x=\pdfunc[x].
\end{align}
Similarly, integrating over position $x$ gives the momentum probability density.

Additionally, the Wigner function satisfies the \textbf{traciality} property.
In particular, integrating over the product of the Wigner function of two operators $\hat A,\hat B$ is equivalent to calculating the trace of their product, i.e.,
\begin{align}
\int_{\mathbb R^{2n}} \dd \mathbf r\, \wig[\mathbf r][\hat A]\wig[\mathbf r][\hat B] = \Tr(\hat A\hat B).
\end{align}

Furthermore, the Wigner function is a \textbf{linear} map $\hat A\mapsto \wig[\cdot][\hat A]$ and two Wigner functions are identical if and only if the states they represent are identical.
Finally, the Wigner function satisfies \textbf{covariance} under Heisenberg-Weyl operations, meaning that a displaced state $\disp[\mathbf r']\hat \rho \hat D^\dagger(\mathbf r')$ will have a Wigner function that is displaced in phase space by $\mathbf r'$.

\subsection{Wigner function evolution under quantum operations}

\label{sub:wigner-function-evolution-under-quantum-operations}

The Wigner function provides a complete description of a quantum state in phase space. It is a powerful tool for visualising and analysing the evolution of both discrete- and CV systems under various quantum operations.
Understanding how quantum gates transform the Wigner function is useful for interpreting their effect on quantum states.
In this Subsection, we begin with a discussion of the evolution of the Wigner function for CV circuits, before addressing the DV case.

In CVQC, Gaussian unitary operations, generated by Hamiltonians at most quadratic in the canonical operators, induce symplectic transformations and linear translations in phase space.
These transformations preserve the Gaussian nature of quantum states and their Wigner functions, meaning a Gaussian state evolved under Gaussian operations remains Gaussian. This property is a generalisation of the covariance property of Heisenberg-Weyl operations to the broader set of all Gaussian operations.
Specifically, for a Gaussian unitary $\unitary=\unitary_{\sympmat}\disp[\bar{\mathbf r}]$, the Wigner function of the output state $\unitary \hat \rho \unitary^\dagger$ evaluated at phase-space point $\mathbf r$ is related to the input state's Wigner function by a transformation of the phase-space coordinates~\cite{ferraro2005,serafini2017,kok2010}
\begin{align}
\wig[\mathbf r][\unitary \hat \rho \unitary^\dagger]=\wig[\sympmat\mathbf r+\bar{\mathbf r}][\hat \rho].
\end{align}
Here, $\sympmat\in \symp[R]$ is a $2n \times 2n$ symplectic matrix representing the linear part of the transformation, and $\bar{\mathbf r}$ is a $2n$-dimensional real vector representing the displacement in phase space.

In DVQC, an analogous property holds for Wigner functions describing qudits of odd-prime dimension. To understand this, we first note that all Clifford operations can be interpreted as an affine transformation in precisely the same form as the CV Wigner function. Specifically, a Clifford operator\footnote{The description of Clifford operators as an integer symplectic affine transformation can be generalised to all dimensions~\cite{hostens2005,gheorghiu2014}. However, the Gross Wigner function for qudits of non-prime dimension is undefined.} is described by an integer symplectic matrix $S\in \operatorname{Sp}(2n,\mathbb Z_d)$ and an integer $2n$-vector $\bar{\mathbf r}\in\mathbb Z_d^{2n}$. Given a Clifford operation $\hat U$ described by a symplectic matrix $S$ and integer vector $\bar{\mathbf r}$, we find that \cite[Theorem 7]{gross2006}
\begin{align}
    \wig*[\mathbf r][\hat U\hat \rho\hat U^\dagger]=\wig*[S\mathbf r+\bar{\mathbf r}][ \hat\rho],
\end{align}
which can be intuitively understood as equivalent to the CV case.

In contrast to Gaussian operations in CV and Clifford operations in DV, non-Gaussian operations, or non-Clifford operations, induce nonlinear and non-affine transformations on the Wigner function in phase space, which have been shown to be a resource for nonclassicality~\cite{gross2006,albarelli2018,takagi2018}.
In CVQC, examples of non-Gaussian gates include the cubic phase gate, Kerr interaction, or photon subtraction or addition operations.
These operations can deform the Wigner function in complex ways and are necessary to generate Wigner negativity.

\subsection{Classical simulation of Gaussian circuits}
\label{sub:classical-simulation-gaussian}
As previously mentioned, circuits consisting of Gaussian states, operations and homodyne measurements are efficiently simulatable.
Specifically, it has been shown that any circuit composed exclusively of input states with non-negative Wigner functions, Gaussian-preserving operations, and measurements described by positive-Wigner POVM elements admits an efficient classical simulation via sampling from the associated phase-space distributions~\cite{mari2012,veitch2012,pashayan2015}. 

In practice, to simulate one round of computation, one first samples a phase-space point according to the positive Wigner functions of the inputs (which can therefore be treated as a probability distribution), propagate this point through each Gaussian map via the corresponding symplectic affine transformations, and finally sample a random outcome from the Wigner representation of the measurement operator. Because everything remains non-negative, the output vector of position and momentum points corresponds exactly to a simulated result of measuring in either position or momentum. Furthermore, the entire algorithm runs in polynomial time in the number of modes and gates~\cite{mari2012,veitch2012}.

\subsection{Advanced phase space techniques: Zak-Gross Wigner function}

\label{sec:zak-gross-wigner-function}
An alternative type of quasiprobability distribution was recently introduced for assessing states in the context of SGKP circuits~\cite{davis2024}. The ZGW function~\cite{debievre1996,kowalski2007,ligabo2016,davis2024} offers significant advantages for systems exhibiting translational symmetry, such as GKP states.
The ZGW function relies on the Zak transform~\cite{zak1967,zak1968,zak1972,pantaleoni2023}, a tool specifically adapted for functions with periodic properties.

The ZGW function for a single-mode state $\hat\rho$ was defined in Ref.~\cite{davis2024} as
\begin{align}
\zgw[s,t]=\frac{1}{2\pi} \sum_{m,n}(-1)^{mn}e^{i\ell (nt-ms)}\Tr(\hat D(n\ell,m\ell)\rho),
\end{align}
where $s,t\in\mathbb T_{\ell d}=[0,\ell d)$.

A defining property of the ZGW function is its periodicity in both the modular ``position'' and ``momentum'' variables, i.e.,
\begin{align}
    \zgw[s, t] = \zgw[s+d\peakspace[d], t]  = \zgw[s, t+d\peakspace[d]],
\end{align}
with unit cell $\mathbb T^{2}_{\ell d}$.

For ideal, infinitely squeezed stabiliser qudit GKP states in odd dimensions, the ZGW function is non-negative everywhere, typically represented as a comb of Dirac delta functions on the finite phase space.
In fact, for qudits of odd dimension $d$, we find that (Theorem 2 of Ref.~\cite{davis2024}) the ZGW function of a single-mode CV state $\hat \rho$ can be expressed in terms of the DV Gross Wigner function as
\begin{align}
    \zgw[s,t][\hat\rho]=\wig*[s,t][\hat{\bar \rho}(s,t)],
\end{align}
where $\hat{\bar \rho}(s,t)$ is the CV state after GKP error correction, as defined in Eq.~(\ref{eq:after-ec}).

For realistic, finitely-squeezed GKP states and magic non-stabiliser GKP states required for universal computation, the ZGW function exhibits negativity~\cite{davis2024}, indicating their non-classical nature.

\section{Resources for universal quantum computation}

\label{sec:resource-theories}

A quantum resource theory provides a rigorous framework for classifying quantum states and operations based on their usefulness for specific tasks~\cite{chitambar2019}. It does this by defining a set of free states, operations and measurements.
The states, operations or measurements that \textit{cannot} be prepared or implemented using only the free resources are considered ``resourceful''.
The central goal of any resource theory is to quantify this ``resourcefulness'' relative to the defined free set~\cite{chitambar2019}. Within this context, key resources include entanglement~\cite{horodecki2009}, non-Gaussianity~\cite{genoni2010,albarelli2018,takagi2018}, Wigner negativity~\cite{kenfack2004,mari2012}, and states exhibiting magic or stabiliser non-extent~\cite{veitch2014,howard2014}.
These resources are necessary to overcome the limitations of classical simulation, demonstrating non-classical phenomena beyond what is possible with Gaussian resources alone.

In the following two subsections, we address the main results of resource theories relevant to this thesis in both CV and DV systems.

\subsection{Resources for discrete-variable quantum computation}

As we have discussed, circuits initiated with stabiliser states, acted on by measurements drawn exclusively from the Clifford group and measured with Pauli measurements, are efficiently simulatable classically via the Gottesman-Knill theorem~\cite{gottesman2001}. To achieve universality, we must introduce magic gates such as the logical $\hat T$ gate for qubits, which, given a supply of these gates, unlocks the ability to produce any gate and thereby achieve UQC. Alternatively, we can implement a magic gate via quantum gate teleportation by consuming a magic $T$ state. This state is defined as
\begin{align}
    \ket{T}= \cos \beta \ket 0 +e^{i\pi /4}\sin\beta \ket 1, \quad \cos(2\beta)=\frac{1}{\sqrt 3}.
\end{align}
The $T$ state is not unique; many states can promote Clifford circuits to universality.

Magic measures provide a method to quantify how resourceful certain states are. A prominent example of a magic measure is the robustness of magic (ROM)~\cite{pashayan2015,heinrich2019,howard2017}, which, roughly speaking, assesses how far away the state is from the set of stabiliser states. The full multi-qubit definition is given in Paper \pD, but we provide the single-qubit definition here since it can be written in a particularly convenient form. Specifically, the ROM of a single qubit is given by
\begin{align}
    \label{eq:rom-single}
    \mathcal R^{(1)}(\hat \rho)=\abs{\Tr(\hat \rho \hat X)}+\abs{\Tr(\hat \rho \hat Y)}+\abs{\Tr(\hat \rho \hat Z)}.
\end{align}
Equivalently, it is the $1$-norm of the Bloch vector. We also note that it is directly proportional to the fidelity of the $\ket T$ state (as explicitly demonstrated in Paper \pD), which is defined as a measure of closeness
\begin{align}
    F(\hat\rho,\hat \rho')=\Tr(\sqrt{\sqrt{\rho}\rho' \sqrt{\rho}})
\end{align}
for two states $\hat \rho$ and $\hat \rho'$. Furthermore, this relation has implications for the number of states $\hat \rho$ required to achieve a set of universal operations. 

Specifically, magic state distillation (MSD) is a process that enables the ability to convert many copies of a state $\hat \rho$, which has a fidelity to a target magic state (e.g., $\ket T$) above a certain threshold, to a small number of state with a higher fidelity to the target magic state~\cite{bravyi2005,campbell2012,campbell2017,litinski2019,reichardt2005}. In practice, this means that any state $\hat \rho$ that has a fidelity to a target magic state above some threshold can promote the circuit to universality. This process can be iterated, meaning that applying this process multiple times will produce higher and higher quality states. The number of input states required to achieve a target output state is monotonically related to the initial fidelity. Furthermore, since ROM is also monotonically related to the fidelity of a qubit state to the $\ket T$ state, ROM is therefore directly related to the number of copies required to distil a target magic state and unlock UQC.

For qudits, similar theorems exist~\cite{campbell2012}. One qudit state that we use later for dimension $d=3$ is the strange state~\cite{veitch2014}, which is one of the most non-stabiliser states, defined as
\begin{align}
\label{eq:strange}
    \ket S=\frac{1}{\sqrt 2}\left(\ket 1+ \ket 2\right).
\end{align}

\subsection{Resources for continuous-variable quantum computation}
\label{sub:resources-for-cvqc}
As we have seen in Subsection~\ref{sub:classical-simulation-gaussian}, achieving the computational power beyond classical simulation requires introducing non-Gaussian resources. These may be non-Gaussian initial states or the application of non-Gaussian gates, analogous to the need for magic states or non-Clifford gates in the DV model~\cite{gottesman2001, gu2009}.

In CV systems, resource theories usually define Gaussian states, operations and measurements (possibly with feed-forward) as free~\cite{albarelli2016,albarelli2018,takagi2018}. 
A key theoretical result establishing the link between Wigner function positivity and classical simulatability is the Mari-Eisert theorem, which can be seen as a CV analogue of the Gottesman-Knill theorem \cite{mari2012}.
This theorem states that quantum circuits initialised with states, operations and measurements that are Wigner non-negative can be efficiently simulated on a classical computer.
Conversely, computations that introduce or amplify Wigner negativity beyond a certain threshold are generally considered hard to simulate classically.
Hudson's theorem \cite{hudson1974} provides a proof that a pure state has a non-negative Wigner function if and only if it is a Gaussian state, which was later extended to qudits of odd prime dimension, where non-negative discrete Wigner functions characterise stabiliser states \cite{gross2006, veitch2012}.

Significant theoretical effort has focused on identifying and quantifying the properties of states that act as resources for universal CV computation.
In the CV setting, a central role is played by Wigner negativity, a signature of non-classicality in the phase-space representation of quantum states \cite{wigner1932, hillery1984, cahill1969a}, which is known to be a necessary resource for achieving universality.

Different measures have been proposed to quantify the amount of Wigner negativity within the framework of non-Gaussianity resource theories \cite{albarelli2018, yadin2018, kwon2018, hahn2024}.
These include the negativity volume \cite{kenfack2004,albarelli2018}, and the Wigner Logarithmic Negativity (WLN) \cite{kenfack2004, albarelli2018, kwon2018}.

\section{Classical simulation beyond efficiently simulatable circuits}
\label{sec:classical-simulation-beyond}
Here we review a technique to simulate circuits that are not classically efficiently simulatable. By this, we mean that the cost of using this technique to simulate the circuits scales exponentially with the number of qudits or modes. While there are many other types of simulation techniques in the literature~\cite{bravyi2019,cirac2021,ferrie2011,orus2019,bravyi2016a,bu2019,zurel2020,zurel2024,zurel2024b}, we focus on a technique that is particularly relevant for Paper \pE. Specifically, a method to estimate the probability density function (PDF) of a DV quantum system (although it can be extended to CV) using phase space techniques~\cite{pashayan2015}. 

In general, including states or operations with Wigner negativity presents a significant challenge for classical simulation. Techniques such as those given in Ref.~\cite{mari2012,veitch2012} break down when any negativity is added. However, Ref.~\cite{pashayan2015} introduces a method to account for this negativity, albeit with an exponential overhead. This is particularly useful in cases with a small amount of negativity. For example, if a circuit's negativity is constant (i.e., does not scale with the number of qudits), then the simulation algorithm is technically efficiently simulatable. In practice, the negativity will usually scale with the number of modes. However, this technique can still be useful if the negativity is relatively small, which amounts to a small exponential overhead in the simulation time. This boundary between classical and quantum computational power in CV systems is closely linked to resource theories, which we have mentioned in the previous Section \cite{chitambar2019, lami2018, pashayan2015, zurel2024}.

The algorithm presented in Ref.~\cite{pashayan2015} addresses this challenge by considering the decompositions of states in the Wigner representation as given in Eq.~(\ref{eq:dv-wig}). Furthermore, in this formalism, unitaries $\hat U$ can be represented as Wigner functions or transformations of a general Wigner function. However, for this thesis, we do not review the action of operations and refer instead to the covariance property of Clifford operations as discussed in Subsection~\ref{sub:wigner-function-evolution-under-quantum-operations}. Kraus measurement operators are also represented by the same Wigner function given in Eq.~(\ref{eq:dv-wig}). 

The main aim of the simulation algorithm can therefore be reduced to the problem of simulating the action of (Wigner-positive) measurements of quantum states, whereby the state has some amount of negativity. Here, we describe the simulation algorithm for DV systems, although an analogous algorithm can be defined in CV~\cite{pashayan2015} assuming finite resolution. Since we can no longer use the phase space method discussed in Subsection~\ref{sub:classical-simulation-gaussian}, due to the negative values of the Wigner function, we instead make use of the 1-norm of the function, which we explicitly define using the notation $\mathcal M_{\hat \rho}=|\bar W|_1=\sum_{\mathbf r}|\wig*[\mathbf r]|$, which can be interpreted to be the negativity of the function. To perform a simulation, we first define a probability function $\mathfrak p(\mathbf r)=|\wig*[\mathbf r]| / \mathcal M_{\hat \rho}$, where the negativity $\mathcal M_{\hat \rho}$ also serves as a normalisation factor for our probability function. Then, we return an estimation of the true probability function $\Pr(\mathbf r)$ according to samples selected from $\mathfrak p$. Given the random sample $\mathbf r'$ selected from $\mathfrak p$, we estimate the true probability function at that point to be
\begin{align}
    \tilde{\mathfrak p}(\mathbf r')= \mathcal M_{\hat\rho} \operatorname{sign}(\wig*[\mathbf r']) \wig*[\mathbf r'][\hat K]
\end{align}
where $\hat K$ is a measurement Kraus operator.

As is explained in more detail in Ref.~\cite{pashayan2015} and Paper \pE, the expectation value of this function (i.e., the true average over all samples obtained in this way) is equal to the function that we get when calculating the probability function exactly, i.e., $\langle \mathfrak p(\mathbf r') \rangle=\Pr(\mathbf r')$.
Taking many samples makes it possible to build a picture of the true probability function. To be exact, to approximate the true probability function up to additive error $\epsilon$ and success probability $1-\delta$, we require a number of samples $N$ such that
\begin{align}
    N=\frac{2}{\epsilon^2} \mathcal M_{\hat\rho}^2 \log(2/\delta).
\end{align}
Note that the number of samples required to estimate up to an arbitrary error and success probability scales with the negativity of the Wigner function.\clearpage{}
    \clearpage{}\chapter{Simulation of ideal Gottesman-Kitaev-Preskill states}

\label{ch:simulation-of-ideal-gottesman-kitaev-preskill-states}
\DropCap{I}{n}{0.2} this Chapter, we present the results of this thesis that deal with the simulatability of circuits with ideal, infinitely-squeezed GKP states. Although such states are technically not experimentally feasible, they exhibit nice mathematical properties that, as we will see, allow for the design of efficient simulation algorithms for specific circuit classes that we shall detail.

We begin with an overview of the different classes of operations that we have demonstrated are efficiently simulatable in Section~\ref{sec:circuits-initiated-with-bosonic-code-states} before going into a more technical overview of the other simulation algorithms for each class of operations in Sections.
The first technique, detailed in Paper \pA and summarised in Section~\ref{sec:high-dimensional-encodings}, leverages a mapping of the CV problem onto a DV equivalent representation.
The second of these methods, first demonstrated in Paper \pB and outlined in Section~\ref{sec:analytic-techniques-to-track-change-of-state}, involves analysing the wavefunction of the state in the Schr\"odinger picture.
This technique relies heavily on analytic number theory~\cite{apostol1986}.
A third technique, derived in Paper \pC and discussed in Section~\ref{sec:stabiliser-formalism}, involves adapting the stabiliser formalism specifically for CV systems.
This approach provides a novel framework for tracking the evolution of states and operators under specific transformations. 

\section{Circuits initiated with Bosonic code states}

\label{sec:circuits-initiated-with-bosonic-code-states}

This Section introduces simulatable GKP (SGKP) circuits, a class of CV circuits acting on ideal GKP states, along with various subclasses of these circuits.
These circuits are defined by their property of preserving the GKP stabiliser group throughout the computation, and they are the focus of the classical simulation results presented in Papers \pA, \pB, \pC.
Functionally, SGKP circuits serve as an alternative\footnote{I.e., an alternative to Gaussian circuits.} CV analogue of Clifford circuits in DVQC, due to their defining characteristic of preserving the stabiliser structure.
The SGKP circuits considered here begin with ideal GKP states. They are then acted on with rational symplectic transformations (a dense subset of Gaussian operations~\cite{calcluth2023}), continuous displacements, and homodyne measurements.
The most general set of circuits, which contains the different subsets discussed in this Chapter, is displayed in Fig.~\ref{fig:gkp-stabiliser}. We denote the set of operations that were proved simulatable in Paper \pA as $\paperclass[A]$, the two classes proved simulatable in Paper \pB as $\paperclass[B]$ and $\paperclass[B]$, and the class proved simulatable in Paper \pC as $\mathcal Q_C$, which is dense in the set of Gaussian operations. 
\begin{figure}[h!]
     \centering
     \includegraphics[width=0.7\textwidth]{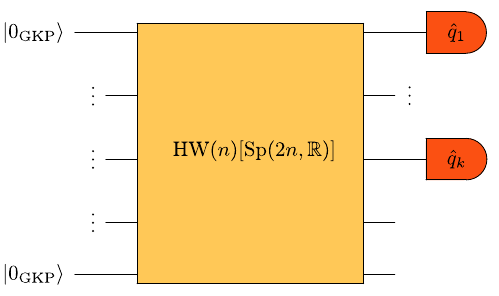}
     \caption{Schematics of the general circuit class that is considered. The input states are $0$-logical GKP-encoded qubit states. The states are acted on by generic Gaussian operations, using the notation defined in Subsection~\ref{sub:gaussian-states}. Homodyne detection of $k$ modes follows, corresponding to the measurement of the quadratures $\hat q_1,\dots \hat q_k$.}
     \label{fig:gkp-stabiliser}
    \end{figure}

The set $\paperclass[A]$ consists of CV operations corresponding to the encoded Clifford operations on the logical \textit{qudit} subspace of the GKP code, combined with arbitrary rational displacements.
The set of operations can be represented as $\paperclass[A]=\clifford*[][d]$, i.e., the set of GKP-encoded qudit Clifford operations.

In Paper \pB, we demonstrated that a different class of operations were efficiently simulatable. We showed that if we restrict to the case of single-mode measurements, we can simulate circuits which consist of operations $\paperclass[B]$. In contrast, we are limited to a smaller set $\paperclass[B']$ if we measure multiple modes. Neither of these sets completely contains or is contained by $\paperclass[A]$.

The restricted class $\paperclass[B]$ of multimode Gaussian operations we consider for the case of single-mode measurements is defined as the direct product of the Heisenberg-Weyl group $\text{HW}(n)$ and a restricted set of symplectic matrices $\text{RSp}(2n,\mathbb R)$,
\begin{align}
    \label{eq:classB}
    \paperclass[B]=\text{HW}(n)\times\text{RSp}(2n,\mathbb R),
\end{align}
where we define
\begin{align}
\label{eq:RSp-set}
\operatorname{RSp}(2n,\mathbb R)
=\left\{\,\begin{pmatrix}A&B\\C&D\end{pmatrix}\in\operatorname{Sp}(2n,\mathbb R)
\; :\;
\begin{aligned}
&A_{1i}=0\;\lor\;B_{1i}=0\;\lor\;\frac{A_{1i}}{B_{1i}}\in\mathbb Q_{(2)},\\
&\forall\,i\in\{1,\dots,n\}
\end{aligned}\right\},
\end{align}
whereby $\mathbb Q_{(2)}$ is formally defined as the localisation of the integers $\mathbb Z$ at the prime ideal $2\mathbb Z$~\cite{atiyah2018}. In simpler terms, it is the set of rational numbers that, when in its simplest form, has an odd denominator.

The set of operations that we show to be simulatable for multimode measurements in Paper \pB, denoted $\paperclass[B']$, is represented by the class of operations
\begin{align}
\label{eq:class-bprime}
\paperclass[B'] = \text{HW}(n) \times \operatorname{DSp}(2n,\mathbb{R}),
\end{align}
where we have defined
\begin{align}
\label{eq:dsp-set}
\operatorname{DSp}(2n,\mathbb{R})
=\left\{\,
\begin{pmatrix}\tilde A&0\\\tilde C&\tilde A^{-T}\end{pmatrix}
\begin{pmatrix}\diagm[\cos\paramrot]&\diagm[\sin\paramrot]\\
-\diagm[\sin\paramrot]&\diagm[\cos\paramrot]\end{pmatrix}
:
\begin{array}{l}
\tilde A\in \operatorname{GL}(n,\mathbb{R})\cap \operatorname{Sym}(n),\\
\tilde C^T\tilde A\in \operatorname{Sym}(n),\\
\paramrot=(\paramrot_1,\dots,\paramrot_n)\in\anglesset^n
\end{array}
\right\}.
\end{align}
Here, we have introduced several notations. First the set $\anglesset$ of allowed angles is defined as \begin{align}     \label{eq:anglesset}     \anglesset={\{\paramrot \in \mathbb R: \cot\paramrot =u/v \in \mathbb Q_{(2)} \}} \cup \{0,\pi\}, \end{align} where $\mathbb Q_{(2)}$ is defined as before. The notation $\operatorname{GL}(n,\mathbb R)$ refers to the real general linear group, i.e., the set of invertible $n\times n$ real matrices, and $\operatorname{Sym}(n)$ refers to the set of $n\times n$ symmetric matrices, i.e., where $\tilde A^T=\tilde A$.

In Paper \pC, we predominantly deal with the set that is the semi-direct product of the real displacements and the rational symplectic operations $\text{HW}(n)[\symp[Q]]$. Technically, we extend this to include a larger set, which we denote $ \paperclass[C]\supset \text{HW}(n)[\symp[Q]]$. However, since the rational symplectic matrices are dense in the set of real symplectic matrices, we refer mostly to the set $\text{HW}(n)[\symp[Q]]$ in our analysis. For completeness, we can define the full set $\paperclass[C]$ formally as
\begin{align}
    \label{eq:paperclass-C}
    \paperclass[C] = \text{HW}(n)[\symp[Q]] \cup \text{HW}(n) \times \operatorname{CSp}(2n,\mathbb R),
\end{align}
whereby
\begin{align}
    \label{eq:CSp}
    \operatorname{CSp}(2n,\mathbb R)=\left\{\,
\begin{pmatrix}\tilde A&0\\\tilde C&\tilde A^{-T}\end{pmatrix}
{\mathfrak s}(X,Y)
:
\begin{array}{l}
\tilde A\in \operatorname{GL}(n,\mathbb{R})\cap \operatorname{Sym}(n),\\
\tilde C^T\tilde A\in \operatorname{Sym}(n),\\
{\mathfrak s}(X,Y) \in \tilde{\operatorname{\mathfrak K}}
\end{array}
\right\},
\end{align}
where we have $\tilde{\operatorname{\mathfrak K}}$ as the elements of the unimodular symplectic matrices $\operatorname{\mathfrak K}$~\cite{arvind1995} such that $X^TX,Y^TX$ are rational matrices. The unimodular symplectic matrices (i.e., the maximal compact subgroup) can be defined in terms of their elements
\begin{align}
    \operatorname{\mathfrak K}= \left\{{\mathfrak s}(X,Y)=\mqty(X&Y\\-Y&X) \in \operatorname{Sp}(2n,\mathbb R): \begin{aligned} &X^TX+Y^TY=\mathbbm 1,\\
    &XX^T+YY^T=\mathbbm 1,\\
    &X^TY,XY^T \in \operatorname{Sym}(n)
    \end{aligned}\right\}.
\end{align}

The relationship between these different classes of operations is shown in Fig.~\ref{fig:venn}. We see that the set $\paperclass[A]$ extends the set of operations simulatable by the qubit-encoded Clifford group. We also see that the set of operations $\paperclass[B]$, which we proved simulatable for single-mode operation, neither contains nor is contained by the set of GKP-encoded Clifford operations. Furthermore, it extends beyond (but does not fully contain) the set of rational symplectic operations with continuous displacements. The set of operations $\paperclass[B']$, which we proved are simulatable for multimode measurements, also offers an alternative regime to the Clifford group simulations. It is smaller than $\paperclass[B]$ and is, unlike $\paperclass[B]$, contained within $\paperclass[C]$. $\paperclass[C]$ contains the full set of rational symplectic operations and continuous displacements. Although it does contain $\paperclass[B']$, it is unknown whether it contains $\paperclass[B]$. As a reminder, $\operatorname{Sp}(2n,\mathbb Q)$ is dense in  $\operatorname{Sp}(2n,\mathbb R)$ and, therefore, the set of operations in $\paperclass[C]$ can approximate any operation in $\operatorname{Sp}(2n,\mathbb R)$ up to arbitrary resolution.

We now review each of the simulation algorithms in detail.

\definecolor{colB}{HTML}{FB5012}
\definecolor{rationals}{HTML}{AAFAC8}
\definecolor{colA}{HTML}{00ff99}
\definecolor{colBdash}{HTML}{FFC857}

\definecolor{colC}{HTML}{fd67ff}
\definecolor{colBC}{HTML}{e77942}

\definecolor{colClif}{HTML}{0ea69c}
\definecolor{colClifB}{HTML}{1D4E9E}
\definecolor{Bdashrationals}{HTML}{e1da7f}
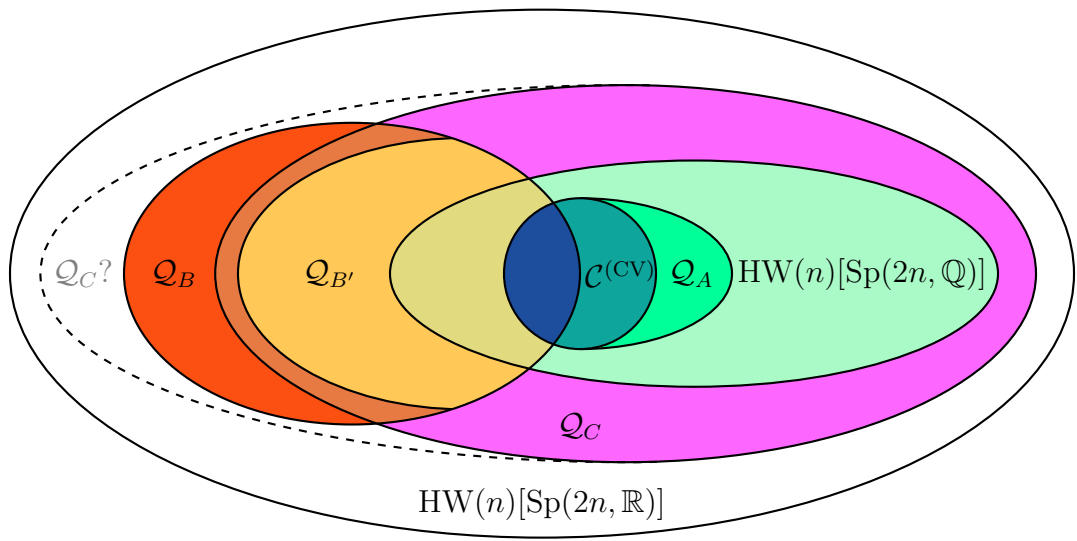
\begin{figure}[ht]
    \centering
         \begin{tikzpicture}[fill=gray]

    \fill[color=colC] (3.6,0) ellipse (5.4 and 2.5);
    \fill[color=rationals] (4.5,0) ellipse (4 and 1.5);
\fill[fill=colB] (0,0) ellipse (3 and 2);
\scope
    \clip (0,0) ellipse (3 and 2);
    \fill[fill=colBdash] (1.5,0) ellipse (3 and 1.8);
    \fill[fill=colBC, even odd rule]
    (3.6,0) ellipse (5.4 and 2.5)   (1.5,0) ellipse (3 and 1.8);        \endscope

    \scope
    \clip (0,0) ellipse (3 and 2);
    \fill[color=Bdashrationals, even odd rule]
    (4.5,0) ellipse (4 and 1.5)
    (3,0) circle (1);
    \endscope
    
    \scope
    \clip (3,-3) rectangle (6,3);
    \fill[color=colA] (3,0) ellipse (2 and 1);
    \draw[thick] (3,0) ellipse (2 and 1);
    \endscope
    
    \fill[fill=colClif] (3,0) circle (1);
\scope
    \clip (3,0) circle (1);
    \fill[fill=colClifB] (0,0) ellipse (3 and 2);
    \endscope
    \node[below] at (2.5,-2.7) {$\text{HW}(n)[\text{Sp}(2n,\mathbb R)]$};
    \node[left] at (8.5,0) {$\text{HW}(n)[\text{Sp}(2n,\mathbb Q)]$};
    \node[below] at (3,-1.7) {$\mathcal Q_C$};
    
\draw[thick] (0,0) ellipse (3 and 2);
\scope
    \clip (0,0) ellipse (3 and 2);
    \draw[thick] (1.5,0) ellipse (3 and 1.8);
    \endscope
\draw[thick] (3,0) circle (1);

\draw[color=black, thick] (4.5,0) ellipse (4 and 1.5);

    \draw[color=black, thick] (3.6,0) ellipse (5.4 and 2.5);

    \scope
    \clip (-4.5,-3) rectangle (4, 3);
    \draw[color=black, dashed, thick] (3.6,0) ellipse (7.7 and 2.5);
    \endscope
    \node[left,opacity=0.5] at (-3,0) {$\mathcal Q_C$?};
    
    \node[left] at (4.9,0) {$\mathcal Q_{A}$};
    \node[left] at (-1.9,0) {$\mathcal Q_{B}$};
    \node[left] at (0.2,0) {$\mathcal Q_{B'}$};
    \node[left] at (4.15,0) {$\mathcal C^{(\text{CV})}$};
    \draw[thick] (2.5,0) ellipse (7 and 3.5);
    \end{tikzpicture}
    \caption{The classes of circuits we consider are displayed as a Venn diagram. We know that all $\paperclass[A],\paperclass[B],\paperclass[B'],\paperclass[C],\clifford*[]$ are contained within the set of Gaussian operations $\text{HW}(n)[\text{Sp}(2n,\mathbb R)]$. We also know that the logical qubit encoded Clifford group $\clifford*[]$ is completely contained by $\paperclass[A]$. Furthermore, we know that $\clifford*[]$ and $\paperclass[A]$ are contained within the set of rational symplectic operations and continuous displacements $\text{HW}(n)[\text{Sp}(2n,\mathbb Q)]$. $\paperclass[B']$ is completely contained by $\paperclass[B]$. Neither is contained by nor completely contains any of the other aforementioned sets except the set of Gaussian operations. Finally, the set $\paperclass[C]$ contains the set of rational symplectic operations and continuous displacements $\text{HW}(n)[\text{Sp}(2n,\mathbb Q)]$ and the set $\paperclass[B']$ but it is unknown whether it completely contains $\paperclass[B]$ (hence the dashed ellipse around $\paperclass[B']$). The size of each of the regions in the figure does not represent the size of the space contained.
    }
    \label{fig:venn}
\end{figure}

\section{High-Dimensional Encodings}

\label{sec:high-dimensional-encodings}

This Section focuses on the results of Paper \pA.
This work establishes a link between the simulatability of a subset of CV circuits and the generalised Gottesman-Knill theorem for DV systems of higher dimension.
We identify vast families of circuits initiated with GKP states (and, in Paper \pA, rotation-symmetric Bosonic code states) that are classically efficiently simulatable.
These families encompass circuits initiated with encoded logical basis states undergoing sequences of specific unitary operations related to the encoding's discrete symmetries in a higher logical dimension, and are measured with encoded Pauli measurements.

The result is achieved by mapping the CV states and operations onto an effective higher-dimensional qudit space.
Crucially, operations that might not correspond to logical Clifford gates in a standard qubit encoding can become equivalent to Clifford operations within this higher-dimensional qudit picture.
This correspondence allows for efficient simulation by leveraging extensions of the Gottesman-Knill theorem applicable to higher-dimensional DV systems.

\subsection{Simulation beyond the logical basis}

\label{sub:techniques}

In this Subsection, we describe how this technique simulates low-dimensional GKP states by simulating high-dimensional GKP states and their associated encoded operations.
The core technique of our first simulation algorithm for GKP-Gaussian circuits exploits the insight that GKP states encoded in a lower dimension ($\quditd[1]$) can be formally interpreted as superpositions of logical states within a higher-dimensional GKP encoding ($\quditd[2]$).
This allows us to leverage the simpler structure of operations in the higher dimension for classical simulation.

Stabiliser GKP states encoded in some dimension $\quditd[1]$ have the same representation in the CV Hilbert space as certain encoded stabiliser states in dimension $\quditd[2]$, where $\quditd[2]=\quditfactor^2\quditd[1]$, $\quditd[1],\quditd[2]\in \{2,3,\dots\}$ and $a \in \{1,2,\dots\}$.
To understand why, we can inspect the stabilisers of the states.
For example, consider the state $\gkpket$.
We know that the ideal $\gkpket$ state in $\quditd[1]=2$ is stabilized by $\langle \lX[2]^2, \lZ[2] \rangle$.
Consider GKP states in $\quditd[2]=8$, stabilized by $\langle \lX[8]^4, \lZ[8]^2 \rangle$, where $\lX[8]=e^{-i \sqrt{\pi} \hat p/2}$ and $\lZ[8]=e^{i\sqrt{\pi}\hat q/2}$.
Observing the relation between these operators, we find $\lX[8]^2 = \lX[2]$ and $\lZ[8]^2 = \lZ[2]$.
This implies that the state stabilized by $\langle \lX[2]^2, \lZ[2] \rangle$ is precisely the same state stabilized by $\langle\lX[8]^4, \lZ[8]^2\rangle$.
In the context of the higher $\quditd[2]=8$ dimension, the state defined by these stabilisers is identified as a specific state within the GKP $\quditd[2]$ encoding, which is a qudit stabiliser state~\cite{gottesman1999b, hostens2005,gheorghiu2014}.
We can also find this relation by inspecting the position-based representation of the states.
Specifically, the GKP qubit state $\gkpket[0][2]$, as a $\quditd[1]=2$ encoding, is equivalent to a superposition of logical states in a $\quditd[2]=8$ encoding.
This equivalence is explicitly shown by the normalized superposition $\gkpket[0][2] = \frac{1}{\sqrt{2}}(\gkpket[0][8]+\gkpket[4][8])$, where we use $\gkpket[j][d]$ to denote the $j$-th logical state in a $\quditd$-dimensional GKP encoding.
The $0$-logical GKP state in dimension $\quditd[1]=2$ is given by $\gkpket=\sum_{n \in \mathbb Z}\posket[2n\sqrt\pi]$, meanwhile we express the state $\gkpket[0][8]+\gkpket[4][8]=\sum_n \posket[4\sqrt{\pi}n]+\posket[4\sqrt{\pi}n+4\sqrt{\pi}/2]$, leading to the same expression in the position basis.

This principle holds generally: any $\quditd[1]$-dimensional GKP logical state can be expressed as a superposition of $\quditd[2]$-dimensional GKP logical states, where $\quditd[2] = \quditfactor^2 \quditd[1]$ for some integer $\quditfactor$.
This underlying mathematical equivalence, applicable across different dimensions and logical states, is the foundation for the simulation algorithm.
It allows a specific class of operations, acting on these lower $\quditd[1]$-dimensional states, to be reinterpreted as simpler, classically tractable operations in the higher-dimensional encoding.

As detailed in Paper \pA, interpreting the states and operations within the higher $\quditd[2]$-dimensional framework identifies a specific class of circuits that become classically tractable, despite potentially appearing complicated when considered solely within the lower $\quditd[1]$ picture.
Specifically, the simulation algorithm presented in Paper \pA efficiently handles circuits composed of encoded $\quditd[2]$-dimension Clifford operations and displacements by multiples of $\peakspace[\quditd[2]]$, for any arbitrary choice of the encoding factor $\quditfactor$.

Specifically, given a $\quditd[1]$-dimensional qudit, which ordinarily have peaks separated by $\peakspace[\quditd[1]]=\sqrt{2\pi/\quditd[1]}$, we can now include displacements in $\hat q$ and $\hat p$ by any multiple of $\peakspace[\quditd[2]]=\peakspace[\quditd[1]]/a$ for any arbitrary choice of $\quditfactor$.
This finding unlocks the ability to use the $\quditd[2]$-dimensional stabiliser formalism, which, as we have seen in Section~\ref{sec:stabiliser-formalism}, provides a powerful framework for describing and simulating Clifford circuits, which map stabiliser states to stabiliser states.

For this specific class of GKP-based circuits, defined by GKP states and encoded Clifford operations over the finer $\quditd[2]$ grid, the simulation complexity scales polynomially with the number of modes.
This enhanced resolution implies that the simulation algorithm efficiently handles encoded displacement operations that translate the state by amounts smaller than the native $\quditd[1]$ lattice spacing, specifically any multiple of $\peakspace[\quditd[2]]$, which can be made arbitrarily small.

\subsection{Example}

\label{sub:example}

This Subsection provides a concrete example demonstrating the implementation and analysis of high-dimensional GKP encodings using the simulation techniques discussed in this Section.
We will illustrate how stabiliser GKP states can be manipulated using the class $\paperclass[A]$ consisting of a restricted set of Gaussian operations, corresponding to higher-dimensional GKP-encoded Clifford operations.
By examining explicit circuit constructions, we highlight the practical aspects of working with these complex encodings.

One of the simplest examples of circuits that are simulatable using the techniques of Paper \pA is one in which we start with a $\gkpket$ state, apply a small displacement $r$ in position, followed by a Clifford operation, such as a Fourier transform, and finally measure using homodyne measurement in the position basis.
The PDF of this circuit is given as $\pdfunc=|\langle\hat q=x|\fourier \xdisp[r] \gkpket|^2$.
Previous theorems cannot provide a method to calculate this evolution in a way that generalises to more complex cases, such as for multiple modes (in practice, in this simple single-mode case, an analysis in the Schr\"odinger picture would suffice).

We can use our technique to simulate any displacement $x=x'\sqrt{\pi}$ where $x'\in \mathbb Q$.
To do so, we first write $x=u/v$ and then choose $\quditfactor=v$, such that $\peakspace[\quditd[2]]=\sqrt{2\pi/\quditd[2]}=\sqrt{2\pi/(2\quditfactor^2)}=\sqrt\pi/\quditfactor=\sqrt\pi /v$, which allows us to produce displacements $\lX[\quditd[2]]^u=\xdisp[u\peakspace[\quditd[2]]]=\xdisp[\sqrt\pi u/v]$. These displacements are hence interpreted as Pauli gates in dimension $\quditd[2]$, covered by the simulatability results of Ref.~\cite{hostens2005}.
The main insight of our paper can be visualised\footnote{Figure adapted from presentations by Laura Garc\'ia \'Alvarez.} in Fig.~\ref{fig:gkp-displace}, as a simple example of a displacement of a $0$-logical qubit by half the peak distance.
\begin{figure}[h]
    \centering
   \begin{tikzpicture}
\definecolor{mygrey}{RGB}{150,150,150}    

\def\len{0.8}
  \foreach \x in {0,...,8} {
    \draw[dashed,mygrey] (\x,-0.1) -- (\x,2.9);
  }
\draw[solid,black] (-0.2,0) -- (8.2,0);
\draw[solid,black] (-0.2,1) -- (8.2,1);
\draw[solid,black] (-0.2,2) -- (8.2,2);
\foreach \x in {0,...,1} {
    \draw[->, line width=1, myblue] (4*\x+2,2) -- ++(0,\len);
    \draw[->, line width=1, myblue] (4*\x+2+0.5,1) -- ++(0,\len);
    \draw[->, line width=1, myblue] (4*\x+2+1,0) -- ++(0,\len);
    \draw[->, line width=1, myorange] (4*\x+0.5,1) -- ++(0,\len);
    \draw[->, line width=1, myorange] (4*\x+1,0) -- ++(0,\len);
  }
  \foreach \x in {0,...,2} {
    \draw[->, line width=1, myorange] (4*\x,2) -- ++(0,\len);
  }

\node[scale=0.9,right] at (9.41,2+0.4) {\(\gkpket[0][2]=\gkpket[0][8]+\gkpket[4][8]\)};
  \node[scale=0.9,right] at (8.3,1+0.4) {\(\hat X(\frac{\sqrt{\pi}}{2})\gkpket[0][2]=\gkpket[1][8]+\gkpket[5][8]\)};
  \node[scale=0.9,right] at (9.41,0+0.4) {\(\gkpket[1][2]=\gkpket[2][8]+\gkpket[6][8]\)};
  
  \node[right] at (3.78,-0.4) {${\scriptstyle 0}$};
  \node[right] at (4.6,-0.4) {${\scriptstyle\sqrt\pi}$};
  \node[right] at (5.5,-0.4) {${\scriptstyle2\sqrt\pi}$};
  \node[right] at (6.5,-0.4) {${\scriptstyle3\sqrt\pi}$};
  \node[right] at (7.5,-0.4) {${\scriptstyle4\sqrt\pi}$};
  \node[right] at (2.4,-0.4) {${\scriptstyle-\sqrt\pi}$};
  \node[right] at (1.4,-0.4) {${\scriptstyle-2\sqrt\pi}$};
  \node[right] at (0.4,-0.4) {${\scriptstyle-3\sqrt\pi}$};
  \node[right] at (-0.6,-0.4) {${\scriptstyle-4\sqrt\pi}$};
  
\draw[solid,myorange, line width=1] (11.47,0.12) -- (12.83,0.12);
\draw[solid,myblue,line width=1] (13.33,0.12) -- (14.69,0.12);
\draw[solid,myorange,line width=1] (11.47,1.07) -- (12.83,1.07);
\draw[solid,myblue,line width=1] (13.33,1.07) -- (14.69,1.07);
\draw[solid,myorange,line width=1] (11.47,2.12) -- (12.83,2.12);
\draw[solid,myblue,line width=1] (13.33,2.12) -- (14.69,2.12);
\end{tikzpicture}
    \caption{Representation of the set of displacements made available in the set of operations $\paperclass[A]$. From top to bottom, consider starting with the $0$-logical GKP-encoded qubit state. We apply a displacement of $\sqrt{\pi}/2$, which is outside of the logical code space. This displacement can be alternatively understood as a displacement in $d=8$, acting on a qubit, encoded within a qudit, encoded as a GKP state. A second displacement by this distance, brings the state back into the code space and into the $1$-logical GKP-encoded qubit state.}
    \label{fig:gkp-displace}
\end{figure}
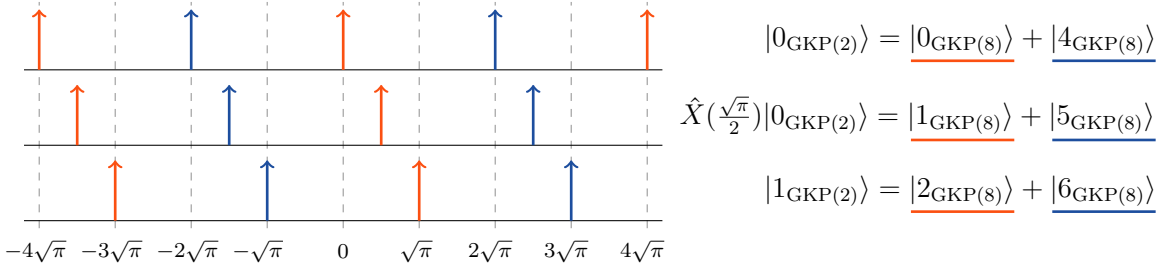 
\subsection{Implications}

\label{sub:implications}

As established in Paper \pA, the classical simulatability of GKP-type circuits utilising operations within class $\paperclass[A]$ is restricted by certain conditions.
These conditions relate to the class of GKP input states, the set of allowed quantum operations, and the nature of measurements performed.
Specifically, circuits initiating with GKP-encoded stabiliser states, followed by encoded Clifford operations, and arbitrary displacements and measured with homodyne measurement are shown to be classically simulatable.
Deviation from this class, for instance, by incorporating GKP-encoded non-stabiliser states such as magic states or applying non-Clifford operations, generally results in a non-simulatable circuit and potentially (but not necessarily) capable of demonstrating QA.

This result provides a concrete example of a circuit class that, at first sight, appears to be difficult to simulate but is actually efficiently simulatable. By demonstrating a broad class of non-trivial circuits that are \textit{unable} to unlock an advantage, the result narrows the search to find which factors \textit{do} allow us to cross the boundary from simulatability to universality.

While, as we have seen, Wigner negativity is essential for non-Gaussian QC, its role as a resource for achieving universal QA is nuanced and requires careful analysis \cite{mari2012, veitch2013, rahimi-keshari2016}.
This work demonstrates that efficient classical simulation is possible for specific classes of circuits utilising high-dimensional encodings despite significant Wigner negativity.

\subsection{Alternative algorithm}

\label{sub:alternative-algorithm}

This Subsection outlines an alternative and conceptually more straightforward method for simulating the class of CV circuits, consisting of operations $\paperclass[A]$, which were proved to be efficiently simulatable in Paper \pA.

Specifically, the class consists of CV circuits comprising gates  $\paperclass[A]$ that implement logical Clifford operations on GKP-encoded qudit states and rational displacement operations.
This alternative approach emerged from subsequent research, specifically, in Paper \pC, that built upon the work presented in Paper \pA, and it is included here for comparison and completeness.

The set of efficiently simulatable operations $\paperclass[A]$ consists of two types: $\quditd[2]$-dimensional encoded Clifford operations and rational (up to a factor of $\sqrt{\pi}$) displacements.
In the CV formalism, these correspond to Gaussian operations, specifically unitary gates implementing symplectic transformations (for the Cliffords) and displacement operators.

First, we see that the set of $\quditd[2]$-dimensional encoded Clifford operations is efficiently simulatable.
This set consists of a group of operators generated by the encoded Fourier transform, the encoded phase gate and the encoded CNOT gate.
In CV, the encoded versions of these gates are precisely the same in all even dimensions $\quditd[2]$ (which is even by definition).
Hence, the set of encoded $\quditd[2]$-dimensional encoded Clifford operations is equivalent to the set of $2$-dimensional encoded Clifford operations.

Second, rational displacements can be efficiently handled by utilising the commutation relation between a displacement operator $\disp[\mathbf r]$ and a unitary gate $\unitary_{\sympmat}$ implementing a symplectic transformation $\sympmat$.
The relation given in Eq.~(\ref{eq:comm-disp-symp}) shows how a displacement operator is transformed when commuted past a unitary gate.
Repeating this rule throughout a circuit allows all displacement operators to be moved to one side (i.e., the right-hand side), separating the sequence of symplectic transformations from the displacements.

This separation of symplectic and displacement operations significantly simplifies the simulation process.
Instead of tracking the state's evolution through interleaved operations, we can simulate the sequence of symplectic transformations on the initial state and then apply the transformed displacement operation as a single final step.

While this alternative approach offers a conceptually more straightforward route to simulate this specific subclass of circuits in hindsight, it is important to recall why the methods developed in Paper \pA were significant.
The techniques presented in Paper \pA provided the first proof of efficient simulatability for a class of circuits that explicitly exhibit high degrees of Wigner negativity, a key resource for QC.

\section{Analytic techniques to track change of state}

\label{sec:analytic-techniques-to-track-change-of-state}

In this Section, we present the results of Paper \pB, which investigates another two subsets of operations, $\paperclass[B]$ and $\paperclass[B']$, from the set $\symp[R]$.
These results are derived using entirely different techniques compared to the previous Section.
Instead, we use results from analytic number theory to derive expressions for the PDFs of the measurement results of a general circuit.

\subsection{PDF of Rotated GKP State}

\label{sub:pdf-of-rotated-gkp-state}

This Subsection presents the analytical derivation of the position-basis PDF for an ideal single-mode GKP state after undergoing arbitrary single-mode Gaussian operations, focusing on analysing the PDF structure after a phase-space rotation.
We calculate the evolution of the logical state $\gkpket$ in the Schr\"odinger picture and evaluate the state's wavefunction in the position basis under these operations.
As shown in Eq.~(\ref{eq:gkp-wf}), the ideal single-mode GKP state encoding the logical codeword $\gkpket$ has a wavefunction which is an infinite comb of Dirac delta functions, centred at positions $x = 2m\sqrt{\pi}$ for integer $m$.

While previous work has provided analytical forms for the wavefunction of ideal GKP codewords under specific restricted sets of Gaussian operations, such as those corresponding to GKP-encoded Clifford gates \cite{gottesman2001, bartlett2002}, a comprehensive analytical description for the outcome of position measurements after an arbitrary single-mode Gaussian operation had not been explored until our work.
Paper \pB addresses this gap by deriving a general analytical expression for the PDF in the position basis for an ideal single-mode GKP state subjected to any single-mode Gaussian transformation.

To achieve this, we make use of the property that unitary single-mode Gaussian operations, described by $2\times2$ symplectic matrices, can be decomposed via the Iwasawa decomposition \cite{arvind1995} into a product of a shear, a squeezing, and a rotation.
When considering position-basis measurements, shear operations correspond to momentum shifts, leaving the position-basis PDF unchanged.
Similarly, displacement operations only result in a simple real shift of the PDF's argument.
Consequently, the non-trivial effects of any single-mode Gaussian operation on the position distribution of a GKP state are captured by a rotation of angle $\paramrot$, single-mode squeezing with parameter $\paramsq$, and a position-basis displacement with parameter $\paramdisp$.

This state's wavefunction in the position basis is given by 
\begin{align}
\wf_{\paramsq,\paramrot,\paramdisp}(x) = \bra{q=x}\xdisp[\paramdisp]\squeezing \rot \gkpket.
\end{align}

We see immediately that we can rewrite the rotated, squeezed and displaced wavefunction $\wf_{\paramsq,\paramrot,\paramdisp}(\cdot)$ in terms of the rotated wavefunction $\wf_{\paramrot}(\cdot)$ as
\begin{align}
\wf_{\paramsq,\paramrot,\paramdisp}(x) =&\wf_{\paramsq,\paramrot}(x-\paramdisp) \\
=&\wf_{\paramrot}((x-\paramdisp)/\paramsq),
\end{align}
and hence, we only need to consider the wavefunction $\wf_{\paramrot}(x)$ of the rotated GKP state.

The full calculations, given in Appendix A1 of Paper \pB, demonstrate that the PDF of a rotated GKP state can be described as
\begin{align}
    \abs{\wf_{\paramrot}(x)}^2\propto&\frac{1}{ \sin\paramrot} \abs{\jacobi(\thetaz=-x \csc\paramrot /\sqrt\pi;\thetatau=2\cot\paramrot)}^2,
\end{align}
where $\jacobi$ is the Jacobi theta function defined in Eq.~(\ref{eq:jacobi}).

Through an analytic number theory argument, we demonstrate that the PDF simplifies significantly for a specific set of angles $\anglesset$, which was defined in Eq.~(\ref{eq:anglesset}).
The condition that $\theta$ must be a member of $\anglesset$ arises from the structure of the GKP lattice in phase space and its symmetry properties under phase-space rotations.
For $\paramrot\in\anglesset$, the PDF of the rotated GKP state takes the form of a Dirac comb: \begin{align}\label{eq:rotated-wfs-main}\abs{\psi_{\theta}(x)}^2=\sum_{m}\delta(x-m \peakspace*).\end{align}
The spacing $\peakspace*$ of this comb depends on the specific form of $\cot\paramrot$, i.e.,
\begin{align}\label{eq:deltacases-main}
\peakspace*=\begin{cases}     \sqrt{\pi}\sin\theta/v\quad & \text{ if } \cot\theta=u/v: u\in\mathbb Z,v\in\mathbb Z_{\text{odd}},\\     2\sqrt\pi\quad &\text{ if } \theta=k\pi \text{ for } k \in \mathbb Z.
\end{cases}\end{align}

This analytical result demonstrates that, for the considered rotation angles $\anglesset$, a position measurement on the transformed state yields a squeezed and displaced Dirac comb.
Note that in the following Sections we will prove this result for all angles with rational cotangent.
Meanwhile, evaluating the PDF for rotations with irrational cotangent remains an open problem, as Appendix I details.
The spacing of this comb, determined by $\peakspace*$, explicitly depends on the rotation angle $\paramrot$ and the integer $v$ associated with its cotangent.
The squeezing parameter $\paramsq$ uniformly rescales this spacing across all outcomes.

In realistic experimental settings, finite squeezing broadens these delta peaks into Gaussians, leading to a normalisable distribution, a topic we explore further in Chapter~\ref{ch:simulation-of-realistic-gottesman-kitaev-preskill-states}.

\subsection{Multimode circuits with single-mode measurement}

\label{sub:multimode-circuits-with-single-mode-measurement}

Leveraging the results of the PDF for a single mode circuit SGKP circuit, this Subsection investigates the classical simulatability of the restricted class of CV quantum circuits that are initialised with multiple ideal GKP states, acted on by a multimode operation selected from the class  $\paperclass[B]$ and feature a final measurement on a single mode in the position basis. A circuit diagram representing these circuits is shown in Fig.~\ref{fig:single-mode-class}.
As a reminder, the class of operations $\paperclass[B]$ consists of continuous displacements and symplectic operations selected from the set $\operatorname{RSp}(2n,\mathbb R)$ defined in Eq.~(\ref{eq:RSp-set}).

The constraints on the symplectic matrix elements are precisely those that ensure that the individual transformed modes' Heisenberg measurement operators $\transq_j^{\paramrot_j}$ resulting from the operation, when measured in the position basis, yield probability distributions that retain the structure of a scaled GKP position-basis PDF, as characterised by the parameters $\peakspace*_j$ and $\paramsq_j$ derived in the previous Subsection.
Without these constraints, the position-basis PDFs of the transformed modes would not necessarily be simple scaled Dirac combs, potentially making the final integral evaluation intractable.

Crucially, the structure of this PDF, which is non-zero only at specific discrete points $x_1 = \sum_{i=1}^{n}\paramsq_i m_i \peakspace*_i+\paramdisp$ determined by integer combinations of $s_i\peakspace*_i$, indicates that we are sampling from a discrete set of outcomes.
Therefore, the simulation of sampling from this circuit can be performed efficiently by generating $n$ random integers $m_i$ for each shot and computing the corresponding outcome.

The representation of the PDF simplifies to a Dirac comb in a structured way, which enables efficient classical simulation.
This simplification is a direct consequence of the specific structure of the GKP input states and the constraints imposed on the symplectic operations within the class $\paperclass[B]$, ensuring that the transformed single-mode states retain their Dirac-comb-like position-basis probability distributions.

The core reason for the efficiency lies in that, despite the PDF being continuous, it is non-zero only at a discrete, calculable set of points.
Simulating sampling from such a distribution reduces to sampling integers from a discrete set and computing the outcome, a task polynomial in the number of modes and the precision required for the $\paramsq_i$ and $\peakspace*_i$ values.
This contrasts with the general case, where simulating a continuous PDF might require computationally expensive integration.

\begin{figure}[!ht]
     \centering
     \includegraphics[width=0.8\textwidth]{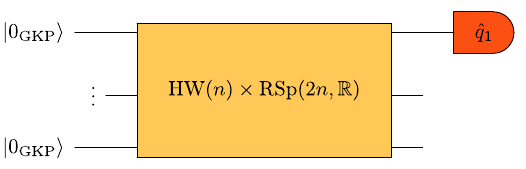}
     \caption{Schematics of the circuit considered in Subsection \ref{sub:multimode-circuits-with-single-mode-measurement}. The circuit is initialised with $0$-logical GKP states. The operations considered are in ${\text{HW}(n)\times \text{RSp}(2n,\mathbb R)}$, which is a restricted set of Gaussian operations, defined in Eq.~(\ref{eq:classB}). Homodyne detection follows, corresponding e.g. to the measurement of the quadrature $\hat q_1$.}
     \label{fig:single-mode-class}
    \end{figure} \subsection{General multimode circuits}

\label{sub:general-multimode-circuits}

Extending the prior analysis of ideal GKP states to the \textit{true} multimode setting --- i.e., including multimode measurements --- presents significant theoretical challenges.
This Subsection details simulation techniques for a specific class of multimode GKP circuits consisting of operations selected from the set $\paperclass[B']$, defined in Eq.~(\ref{eq:class-bprime}), which consists of continuous displacements and symplectic operations where the symplectic matrix is selected from the restricted set $\operatorname{DSp}(2n,\mathbb R)$, defined in Eq.~(\ref{eq:dsp-set}). The set of circuits described by these operations and measurements is shown in Fig.~\ref{fig:multimode-class}, also outlined in terms of circuit decompositions in Paper \pB.

The analysis given in Paper \pB  demonstrates that circuits with operations selected from the class $\paperclass[B']$ are classically simulatable with a runtime that scales polynomially with the number of modes $n$.
The computational cost of evaluating the PDF for a given measurement outcome $\mathbf x$ is determined by the cost of computing the elements $a^{(j)}_i$ (from the matrix $\tilde A$) and the parameters $\peakspace*_i$.
As discussed, the parameters $\peakspace*_i$ can be computed efficiently in $O(n)$ time.
The matrix $\tilde A$ is of size $n \times n$, requiring the computation of $n^2$ coefficients $a^{(j)}_i$.

\begin{figure}
     \centering
     \includegraphics[width=0.8\textwidth]{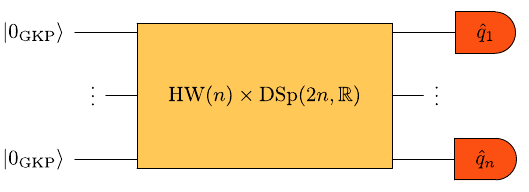}
     \caption{Circuit class considered in Subsection \ref{sub:general-multimode-circuits}. The circuit is initialised with $0$-logical GKP states, acted on by operations from the set ${\text{HW}(n)\times \text{DSp}(2n,\mathbb R)}$, defined in Eq.~(\ref{eq:class-bprime}), followed by homodyne detection of multiple modes.}
     \label{fig:multimode-class}
\end{figure} 
\section{Stabiliser formalism}
\label{sec:stabiliser-formalism}
In this Section, we develop an efficient classical simulation technique applicable to the largest class of simulatable operations considered in this thesis, represented by operations selected from $\paperclass[C]$, defined in Eq.~(\ref{eq:paperclass-C}). As a reminder, these circuits consist of continuous displacements, symplectic matrices, whereby the matrix elements are rational, along with an extension to a subset of the non-rational symplectic matrices. We begin by giving a high-level overview of the simulation technique, before demonstrating how to derive the PDF of the circuit.

\subsection{Simulation technique}

\label{sub:simulation-technique}

To develop an efficient classical simulation technique, we use inspiration from the \linebreak Gottesman-Knill theorem for DV systems~\cite{gottesman1999a,gottesman1999,gottesman1999b}. The method defines an alternative CV version of the stabiliser formalism to simulate circuits consisting of those with operations selected from the class $\paperclass[C]$.
This technique also allows efficient classical simulation of quantum circuits employing GKP states and operations within the Clifford group.
Our method focuses on deriving the PDF by exploiting the structure imposed by the GKP stabilisers and the evolution under Gaussian and logical Clifford operations.

This method reveals that the output PDF for circuits initialised with ideal GKP states and acted upon by Gaussian and logical Clifford operations is a sum of delta functions peaked on a lattice in phase space.

We first make use of the fact that the action of Gaussian operations on the phase space variables $(\hat q, \hat p)$ is a linear transformation described by a symplectic matrix $\sympmat \in \symp[R]$ and a displacement vector $\vec c$.
We identify periodicity relations in the PDF by analysing the commutation relations of the Heisenberg-evolved measurement projectors and stabilisers of the initial GKP states.

Following this, we identify the non-zero points of the PDF by constructing bespoke stabilisers from Heisenberg-evolved measurement operators.
These two conditions, which are explained in more detail in the following two Subsections, provide a sufficient restriction to identify the circuit's PDF.

\subsection{Periodicity relations}

\label{sub:periodicity-relations}

A general Gaussian operation $\unitary\in \paperclass[C]$ transforms, in the Heisenberg picture, the measurement operators $\hat q_j$ according to~\cite{bartlett2002,kok2010,serafini2017}
\begin{align}
\label{eq:evolution-operators-q}
    \evolvedQ_j=\unitary^\dagger \hat q_j \unitary=\sum_i A_{i,j}\hat q_i+B_{i,j}\hat p_i+c_j
\end{align}
where $A,B$ are the blocks of $S$ as defined in Eq.~(\ref{eq:symp-block}).
The vector $\mathbf c\in \mathbb R^n$, with elements $c_j$, describes the displacement in position.
We will now prove that these circuits can be simulated in the strong sense by calculating the PDF.
The PDF can be written in the Heisenberg picture using Eq.~(\ref{eq:evolution-operators-q}) as
\begin{align}
    \label{eq:fullpdf-maintext}
    \pdfunc[\evolvedQs=\mathbf x]=\bra{\mathbf 0_{\text{GKP}}}\left(\bigotimes_{j=1}^n \ket{\evolvedQ_j=x_j}\bra{\evolvedQ_j=x_j}\right) \gkpket.
\end{align}
Inserting the stabilisers of the $0$-logical GKP state, $\lX^2$ and $\lZ$, into this expression and using commutation relations informs us that the PDF has the same value when transforming the position variables of the PDF according to the periodicity relation
\begin{align}
    \label{eq:periodicitymultimode}
    x_j\to x_j'=x_j+\sqrt\pi \sum_k 2a_{k}^{(j)}m_k+b_{k}^{(j)}m'_k,
\end{align}
where $m_k,m_k'\in\mathbb Z$.

Tracking the evolution of a Gaussian state thus requires updating these \linebreak ${(2n)(2n+1)/2 + 2n}$ real parameters.
However, while important, the periodicity relation does not tell us what the values of the PDF are at any point; it only informs us when multiple points have the same value. To fully characterise the PDF, we require more information.

\subsection{Non-zero points}

\label{sub:non-zero-points}
Here, we show that it is possible to identify where the PDF is zero and non-zero. This, in combination with the periodicity relation in Eq.~(\ref{eq:periodicitymultimode}), will provide us with all the information required to fully characterise the PDF.

To do so, we first manufacture additional stabilisers in terms of the Heisenberg measurement operators of the form
\begin{align}
\label{eq-stabiliser-g}
    \customstab(\stabvec)=&e^{i\phi(\stabvec)}\prod_j e^{i\sqrt\pi l_j \hat Q_j}
\end{align}
where $\stabvec$ is an $n$-vector of real coefficients $l_j$ and
\begin{align}
\phi(\stabvec)= -\frac 1 2\pi \stabvec^T AB^T\stabvec-\sqrt\pi \stabvec\cdot \mathbf{c}\;.
\end{align}
Using Eq.~(\ref{eq:evolution-operators-q}), one realizes that the operator $g(\stabvec)$ will be a stabiliser whenever ${(A^T\stabvec)_k=0\mod 1}$ and ${(B^T\stabvec)_k=0\mod 2}$.
This imposes a constraint on the values of $\mathbf{x}$ for which the PDF is non-zero.
The constraint arises from the fact that applying the candidate stabiliser Eq.~(\ref{eq-stabiliser-g}) to the state $\gkpket$ and calculating the resulting wavefunction in the position basis with respect to the Heisenberg-evolved position eigenstates results in a multiplying phase.
Since we know that the wavefunction should be unchanged, the resulting multiplying phase should be zero.
The points at which the wavefunction (and therefore also the PDF in Eq.~(\ref{eq:fullpdf-maintext})) is non-zero can therefore be found by solving the equation
\begin{align}
    \label{eq:condition2-maintext}
    \sqrt\pi \stabvec^T \mathbf{x}-\frac 1 2 \pi \stabvec^T AB^T\stabvec-\sqrt\pi \stabvec^T \mathbf{c}=0\mod 2\pi
\end{align}
for all $\stabvec$ such that $\customstab(\stabvec)$ is a valid stabiliser.

To solve this constrained equation, we first find the allowed vectors $\stabvec$.
This can be achieved by introducing the matrix $\halfsymp$, which is defined as $\halfsymp^T=\mqty(A & \frac 1 2 B)$.
Then the constraint on the allowed values of $\stabvec$ is given by $\halfsymp\stabvec=\mathbf{k}$ where $\mathbf{k}$ is a vector of $2n$ integers.
The Moore-Penrose pseudoinverse~\cite{moore1920,penrose1955,ben-israel2003} $\halfsymp^+$ provides a method to find solutions of the form $\stabvec=\halfsymp^+\mathbf{k}$.
The solutions of $\stabvec$ can be found by first finding the Smith decomposition~\cite{newman1972,newman1997,ben-israel2003} of the matrix $v \halfsymp$, where $v$ is the smallest integer for which the elements of the matrix $v \halfsymp$ are all integers.
Using the Smith decomposition of
\begin{align}
    \label{eq:snf}
    v \halfsymp=V\mathcal DU
\end{align}
we identify which integer choices of $\mathbf{k}$ will provide valid solutions of $\stabvec$.
We find that the vectors $\stabvec$ can be expressed as $\stabvec=R\mathbf{m}$ where $\mathbf{m}$ is an $n$-vector of integers \cite{stackexchangesmith} and $R$ is defined as
\begin{align}
    \label{eq:mat-R}
    R=\tilde S^+V\mqty(\mathbbm 1\\0).
\end{align}
We can then rewrite Eq.~(\ref{eq:condition2-maintext}) as a system of linear equations of the form
\begin{align}
    \frac{1}{\sqrt\pi}R^T(\mathbf{x}-\mathbf{c})=\mathbf{t} \mod 2
    \label{eq:condition2-system}
\end{align}
where $\mathbf{t}$ is the main diagonal of the matrix $T=\frac 1 2 R^TAB^TR$, i.e.,
\begin{align}
    \label{eq:t-vec}
    t_i= \frac 1 2 (R^TAB^TR)_{ii} \quad \forall \,i\in\{1,\dots,n\}.
\end{align}
Eq.~(\ref{eq:condition2-system}) allows us to solve the constrained equation (\ref{eq:condition2-maintext}) to find that the PDF is non-zero exclusively at the points
\begin{align}
    \label{eq:non-zeros}
    \mathbf{x}=\sqrt\pi R^{-T}(\mathbf{t}+2\mathbf{m})+\mathbf{c}.
\end{align}

\subsection{Algorithm}

\label{sub:algorithm}

In this Subsection, we present the detailed algorithm for efficiently calculating the full PDF of the measurement outcomes for the class of circuits described, directly from the associated symplectic matrix $\sympmat$ and displacement vector $\vec c$.
We also provide a rigorous analysis of the algorithm's computational complexity.

By combining the two expressions Eq.~(\ref{eq:periodicitymultimode}) and Eq.~(\ref{eq:non-zeros}) in the previous two Subsections, we find that the non-zero points are all related by the periodicity relation. Hence, the value of all non-zero points is precisely the same.
We can therefore characterise the full and exact PDF of the multimode measurement as
\begin{align}
    \label{eq:pdf-paperC}
    \pdfunc[\mathbf x]=\sum_{\mathbf m\in \mathbb Z^n}\delta(\mathbf x-\sqrt\pi R^{-T}(\mathbf{t}+2\mathbf{m})-\mathbf c).
\end{align}

We now describe the algorithm to evaluate the PDF in polynomial time.
Given the symplectic matrix in block form, i.e., Eq.~(\ref{eq:symp-block}), we need to evaluate $R^{-T}$, where $R$ is defined in Eq.~(\ref{eq:mat-R}) and $\mathbf{t}$, defined in Eq.~(\ref{eq:t-vec}), in order to write the PDF.
$R^{-T}$ is given in terms of $\halfsymp^T$ and $V$, where $V$ is the unimodular matrix arising from the Smith decomposition of $v \halfsymp$ given in Eq.~(\ref{eq:snf}).
The vector $\mathbf{t}$ is expressed in terms of $R$ in Eq.~(\ref{eq:t-vec}), however, it can be expressed in terms of the diagonal elements of the product of two upper left and right blocks of $V$, i.e., $V^{(1,1)}$ and $V^{(2,1)}$, respectively, as\footnote{See Eq.~(B33) in Appendix B of Paper \pC.}
\begin{align}
    T=V^{(1,1)T}V^{(2,1)}.
\end{align}

To find the matrix $V$, we first need to calculate the lowest common multiple, i.e., $\operatorname{lcm}(\cdot,\dots,\cdot)$, of all the denominators, i.e., $\operatorname{den}(\cdot)$, of the elements $\halfsymp$.
Formally, we could write
\begin{align}
    \label{eq:appendix-sigma-formal}
    v = \operatorname{lcm}(\operatorname{den}(\halfsymp_{1,1}),\dots,\operatorname{den}(\halfsymp_{1,n}),\operatorname{den}(\halfsymp_{2,1}),\dots,\operatorname{den}(\halfsymp_{2n,n})).
\end{align}
Then we multiply the matrix $\halfsymp$ by $v$ to produce an integer matrix $v \halfsymp$.
We can perform a Smith normal form (SNF) decomposition~\cite{newman1997} on this matrix to identify the $2n\times 2n$ unimodular matrix $V$, the $2n\times n$ diagonal matrix $\diagmat$ and the $n\times n$ unimodular matrix $U$,
\begin{align}
    \label{eq:appendix-direct-snf}
    v \halfsymp=VDU.
\end{align}
We can discard the matrices $\diagmat,U$.

As explained in more detail in Paper \pC, the transpose-inverse of $R$ can be directly evaluated as
\begin{align}
    R^{-T}=\tilde S^TV^{-T}\mqty(\mathbbm 1 \\0)=\mqty(A& \frac 1 2 B)V^{-T}\mqty(\mathbbm 1 \\0).
\end{align}
Furthermore, the matrix $T$ can be calculated from $V$ as
\begin{align}
    T=&V^{(11)T} V^{(2,1)}
\end{align}
and the vector $\mathbf{t}$ is simply the diagonal entries of $T$.
The PDF is then given by Eq.~(\ref{eq:pdf-paperC}).

We express the algorithm formally in Algorithm~\ref{alg:pdf-from-symplectic}. To summarise the computational cost of running the algorithm, consider the following analysis.
\begin{itemize}
  \item Steps 1-3 require \(\mathcal O(n^2)\) operations.
  \item Smith decomposition (Step 4) and the matrix multiplications (Steps 5-6) each require \(\mathcal O(n^3)\) operations.
  \item Final assembly of \(\mathbf t\) (Steps 7-8) requires \(\mathcal O(n^2)\) operations.
\end{itemize}
Applying each step, we find that we require a maximum of $\mathcal O(n^3)$ operations in total.
We can therefore conclude that the entire algorithm for finding the PDF is polynomial in the number of modes $n$.

\begin{algorithm}[htbp]
\label{eq:pdf-algo}
\caption{Efficient construction of the PDF from the description of the Gaussian operation}
\label{alg:pdf-from-symplectic}
\KwIn{Symplectic matrix $S\in\mathbb R^{2n\times2n}$,\quad
  displacement vector $\vec c\in\mathbb R^{2n}$.}
\KwOut{Inverse-transpose of the matrix $R$ and shift vector $\mathbf t$, which fully characterises the PDF given in Eq.~(\ref{eq:pdf-paperC}).
}
\BlankLine
\algcol{${\halfsymp} \;\leftarrow\;\bigl(A^T,\tfrac12\,B^T\bigr)^T ;$}{Form half-symplectic block $\halfsymp$ from $S$.}

\algcol{$v \;\leftarrow\; \mathrm{lcm}\bigl\{\operatorname{den}(\halfsymp_{ij})\bigr\} ;$}{Lowest common multiple of all\\ denominators.}

\algcol{$\tilde S' \;\leftarrow\; v\,\halfsymp ;$}{Scale to integer matrix.}

\algcol{$U, \Sigma, V \;\leftarrow\; \operatorname{SNF}(\tilde S') ;$}{Compute SNF of integer\\ matrix.}

\algcol{$V^{-T} \;\leftarrow\; (V^{-1})^T ;$}{Inverse-transpose of the right\\ unimodular factor.}

\algcol{$R^{-T} \;\leftarrow\; \halfsymp^{T}\,V^{-T}\,\begin{pmatrix}\mathbbm1_n&0_n\end{pmatrix}^T ;$}{Evaluate $R$ from $\tilde S$ and $V$.}

\algcol{$T \;\leftarrow\; V^{(1,1)\,T}\,V^{(2,1)} ;$}{Evaluate $T$ from blocks of $V$.}

\algcol{$t_i \;\leftarrow\; T_{ii} ;$}{Diagonal entries of $T$ give the shifts.}
\end{algorithm}

This result signifies that the circuits under consideration, specifically those composed of initial GKP stabiliser states, rational symplectic operations, and homodyne detection, are strongly simulatable classically.

\clearpage{}
    \clearpage{}\chapter{Simulation of realistic Gottesman-Kitaev-Preskill states}

\label{ch:simulation-of-realistic-gottesman-kitaev-preskill-states}

\DropCap{W}{hile}{-0.3} the results of the previous Chapter demonstrated efficient simulatability of a large class of circuits involving infinitely squeezed states that are otherwise impossible to simulate with classical devices, experimental circuits will never be able to create perfectly ideal GKP states. However, a good approximation to ideal GKP states (for sufficiently high squeezing) are the realistic GKP states, described in Eq.~(\ref{eq:realistic-gkp-states}). These states can be considered a generalisation of ideal GKP states, such that when $\Delta\to 0$, we recover the ideal case. This Chapter explores a recent method to simulate circuits involving realistic GKP states, which is more suited to experimental applications.

Before our work in Paper \pE, no algorithm existed to simulate computations with GKP codes in a practical time in the regime of interest for fault tolerance, namely high --- but not infinite --- squeezing. 
Standard celebrated simulation techniques~\cite{mari2012,veitch2012,rahimi-keshari2016}, relying on phase-space representations like the Wigner function, face intractability issues stemming from the large negativity associated with the highly non-Gaussian GKP states.
Others, such as Refs. \cite{bourassa2021, hahn2024, dias2024a}, have considered simulating GKP states, however, these algorithms work best with low squeezing.

Furthermore, the techniques in Chapter~\ref{ch:simulation-of-ideal-gottesman-kitaev-preskill-states} are, as far as we are aware, not possible to generalise to the case of finitely squeezed states. While ideal GKP states possess perfect translational symmetry and have a stabiliser group that maintains a consistent structure under the evolution of Gaussian operations, this is not true for finitely squeezed states.
For example, these realistic GKP states only approximately preserve the stabiliser group under Gaussian evolution~\cite{royer2020}.

To address the problem of realistic GKP states, our framework is based on the very recently introduced quasiprobability distribution, the ZGW function~\cite{davis2024} (see Section~\ref{sec:zak-gross-wigner-function}), whose theory we substantially develop and extend to multiple modes.

The remainder of this Chapter is structured as follows.
In Section~\ref{sec:techniques}, we introduce our multimode extension of the ZGW function, originally defined for a single mode, along with its key properties.
An analysis of the ZGW of realistic GKP states is given in Section~\ref{sec:zak-gross-wigner-function-of-realistic-gkp-states}. Finally, our novel simulation algorithm, along with details of its steps and complexity analysis, is presented in Section~\ref{sec:simulation-algorithm}.

\section{Multimode Zak-Gross Wigner function}

\label{sec:techniques}
In this Section, we introduce our multimode extension to the ZGW function, which is required for the simulation algorithm defined in Paper~\pE. We begin with its definition before showing that it satisfies the modified Stratonovich-Weyl axioms and covariance under GKP-encoded Clifford operations.

We first define the scaled displacement operators $\hwop*[\mathbf a]$, which are displacements in CV phase space but also reflect the structure of the odd-dimensional Weyl operators $\hwop$ introduced in Eq.~(\ref{eq:hwop}). Specifically we define
\begin{align}
    \hwop*[\mathbf a]= \disp[-\peakspace[d]\mathbf a],
\end{align}
which yields the relation
\begin{align}
    \hwop*[\mathbf a]=e^{i \pi \mathbf{a_X}^T\mathbf{a_Z}/d}\hwop*[\mathbf{a_X}]\hwop*[\mathbf{a_Z}]
\end{align}
in parallel to the equivalent expression for DV operators given in Eq.~(\ref{eq:hwop}). It also has the related commutation relation
\begin{align}
    \hwop*[\mathbf a]\hwop*[\mathbf b]=e^{2\pi i[\mathbf a,\mathbf b]/d}\hwop*[\mathbf b]\hwop*[\mathbf a].
\end{align}

We define the odd-dimensional ZGW function as
\begin{align}
    \label{eq:hwop-comm-cv}
    \zgw[\boldsymbol\eta][\hat \rho]= \Tr(\hat \rho \hat A_{\boldsymbol\eta}) ,
\end{align}
where the phase point operator $\hat A_{\boldsymbol\eta}$ is defined as
\begin{align}
\label{eq:phasepoint}
\hat A_{\boldsymbol\eta}=&\frac{1}{(2\pi)^n}\sum_{\mathbf m\in\mathbb Z^{2n}} e^{i\ell[\mathbf m,{\boldsymbol\eta}]+i\pi \mathbf{m_X}^T\mathbf{m_Z}}\hwop*[\mathbf m],
\end{align}
with ${\boldsymbol\eta} \in [0,d\ell)^{\times 2n}$ and with $d$ a positive odd integer.

\subsection{Modified Stratonovich-Weyl axioms}

\label{sub:modified-stratonovich-weyl-axioms}

We now introduce the modified Stratonovich-Weyl axioms, extending the framework presented in Ref.~\cite{davis2024} to multimode systems.

Compared to the standard Stratonovich-Weyl axioms (introduced initially in Ref.\cite{stratonovich1956} and discussed in Ref.\cite{brif1999}), these modified axioms, introduced for the single-mode case in Ref.~\cite{davis2024}, constitute a weaker set of conditions.
The most significant difference is the relaxation of the strict one-to-one correspondence between operators and phase-space functions that characterises the standard formulation.
While less general than the standard axioms, this modified set still provides crucial properties, including (a weaker version of) linearity, reality, standardisation, covariance under displacement, and a modified version of traciality.
The multimode ZGW function is shown to satisfy these modified axioms in Paper \pE.

We begin with the same axioms as the original ones defined in Subsection~\ref{sub:properties-of-the-wigner-function}. In particular, the ZGW function must be \textbf{real} for Hermitian operators, i.e.,
\begin{align}
    \zgw[\cdot][\hat B^\dagger]=(\zgw[\cdot][\hat B])^*.
\end{align}
It must also satisfy \textbf{standardisation}, whereby
\begin{align}
    \int \dd \boldsymbol{\eta} \zgw[\boldsymbol{\eta}][\hat B]=\Tr(\hat B),
\end{align}
in addition to \textbf{covariance} under the Weyl operators, which means
\begin{align}
    W_{\hat T_{\mathbf b} \hat C \hat T_{\mathbf b}^\dagger}({\boldsymbol\eta})=W_{\hat C}({\boldsymbol\eta} + \ell {\mathbf b}) \quad \text{ for all }\quad{\mathbf b\in \mathbb R^{2n}}.
\end{align}
The final two axioms differ from their original. In particular, although we require \textbf{linearity} in the sense that
\begin{align}
    \zgw[\boldsymbol{\eta}][\hat B+\hat C]=\zgw[\boldsymbol{\eta}][\hat B]+\zgw[\boldsymbol{\eta}][\hat C],
\end{align}
we do not require a one-to-one correspondence. Finally, to satisfy the modified \linebreak Stratonovich-Weyl axioms, the ZGW function must be \textbf{linear}, i.e.,
\begin{align}
    W_{\hat C+\hat D}({\boldsymbol\eta})=W_{\hat C}({\boldsymbol\eta})+W_{\hat D}({\boldsymbol\eta}).
\end{align}
We do not require that the function be a one-to-one map.
Finally, it should satisfy a modified form of \textbf{traciality}, where we only require the property to hold if at least one of the operators has undergone a twirling map $\mathcal E(\cdot)$, defined as
\begin{align}
    \mathcal E(\hat C)=\int_R \dd \mathbf s \gkpproj(\mathbf s)\hat C \gkpproj(\mathbf s),
\end{align}
where $R=[0,\ell)^{\times 2n}$ is a Zak patch~\cite{shaw2024} 
and we define $\gkpproj[\mathbf s]=\hwop*[\mathbf s]\gkpproj \hwop*[-\mathbf s]$. Furthermore, $\gkpproj$ is the GKP projector defined in Eq.~(\ref{eq:gkp-proj}).
Specifically, for the ZGW function to satisfy this modified version of traciality, we must have
\begin{align}
    \int \dd {\boldsymbol\eta}  \zgw[{\boldsymbol\eta}][\hat C]\zgw[{\boldsymbol\eta}][\hat D]=\Tr(\mathcal E(\hat C)\hat D)=\Tr(\hat C\mathcal E(\hat D))=\Tr(\mathcal E(\hat C)\mathcal E(\hat D)).
\end{align}

\subsection{Covariance under encoded Clifford operations}
We now briefly summarise the technique used to extend the covariance property discussed in the previous Subsection to all GKP-encoded Clifford operations. The full proof is given in Appendix C of Paper \pE.
Note that, due to the covariance property of the modified Stratonovich-Weyl axioms, we know that displacements in phase space can be easily tracked. In other words, given a displacement $\mathbf b$, the ZGW function itself is displaced proportionally.

To extend to GKP Clifford operations, first note that, as discussed in Subsection~\ref{sub:introduction-to-the-stabiliser-formalism}, the set of Clifford operations is generated by the phase gate $\phase[d]$, the Fourier transform $\fourier[d]$ and the SUM gate $\sumgate$. Each of these operations is achieved using Gaussian operations in the GKP encoding. Specifically, the phase gate is achieved using the CV shear gate $\phase*$, which is represented by the symplectic matrix
\begin{align}
    \mqty(1&0\\1&1),
\end{align}
the $d$-dimensional Fourier transform is achieved with the CV Fourier transform, which has the symplectic matrix
\begin{align}
    \mqty(0&1\\-1&0),
\end{align}
and finally, the SUM gate is achieved with the CV SUM gate, which is represented across two modes with the symplectic matrix
\begin{align}
    \mqty(1&0&0&0\\1&1&0&0\\0&0&1&-1\\0&0&0&1).
\end{align}
As we explicitly show in Paper \pE, these matrices together are a set of generators for the symplectic group $\operatorname{Sp}(2n,\mathbb Z_d)$ over the ring of integers modulo $d$~\cite{hua1949}.

To prove covariance under Clifford operations, we use that
\begin{align}
\label{eq:wig-covar}
    \zgw[\boldsymbol{\eta}][\hat U\hat \rho \hat U^\dagger]=&\frac{1}{(2\pi)^n} \sum_{\mathbf a\in \mathbb Z^{2n}} \Tr(\hat U\hat \rho \hat U^\dagger e^{i\ell[\mathbf a,{\boldsymbol\eta}]+i\pi \mathbf{a_X}^T\mathbf{a_Z}}\hwop*[\mathbf a]) \nonumber \\
    =&\frac{1}{(2\pi)^n} \sum_{\mathbf a\in \mathbb Z^{2n}} \Tr(\hat\rho e^{i\ell[ \mathbf a, {\boldsymbol\eta}]+i\pi \mathbf{a_X}^T\mathbf{a_Z}}\hwop*[S\mathbf a]).
\end{align}
We then note that for integer symplectic matrices, we have the relation
\begin{align}
    (S\mathbf a)^T_X(S\mathbf a)_Z=\mathbf{a_X}^T\mathbf{a_Z}+\mathbf t^T\mathbf a \mod 2,
\end{align}
where $\mathbf t$ is an integer vector derived from the block elements of the symplectic matrix, as detailed in Appendix C of Paper \pE. This equation can be substituted into Eq.~(\ref{eq:wig-covar}) and rearranged to find
\begin{align}
    \frac{1}{(2\pi)^n} \sum_{\mathbf b\in \mathbb Z^{2n}} \Tr(\hat\rho e^{i\ell[ \mathbf b,S\boldsymbol\eta-{\mathbf t}]+i\pi \mathbf{b}_{X}^{T}\mathbf{b}_{Z}}\hwop*[\mathbf b])=\zgw[S\boldsymbol{\eta}-\mathbf t][\hat \rho].
\end{align}
Therefore, we see that the action of Clifford operations transforms the parameter $\boldsymbol{\eta}$ of the ZGW function according to a symplectic transformation and an integer displacement described by the vector $\mathbf t$.

\section{Zak-Gross Wigner function of realistic Gottesman-Kitaev-Preskill states}

\label{sec:zak-gross-wigner-function-of-realistic-gkp-states}

The ZGW function provides a useful phase-space representation for analysing the properties of GKP states and, as we will see, developing simulation techniques for them.
We remind the reader that the ZGW function of ideal GKP states was already discussed in Subsection~\ref{sec:zak-gross-wigner-function}. There, we saw that the ZGW function of ideal stabiliser GKP states is always positive in odd dimensions. This has important implications for simulatability, as we will see in the next Section. Before describing the simulation algorithm, we first calculate the ZGW function for realistic GKP states.

The ZGW function representation for a single realistic 0-logical GKP state can be represented in terms of a multidimensional Jacobi theta function $\jacobi$, defined in Eq.~(\ref{eq:siegel}).
Specifically, the ZGW function of such a state with squeezing parameter $\sq$ and qudit dimension $\quditd$ is given by
\begin{align}
    \zgw[u,v][{\text{GKP}}]\propto \jacobi(\Gamma;\mathbf z).
\end{align}
This expression accounts for the finite squeezing parameter $\sq$, distinguishing it from the ideal GKP state scenario where $\sq \to 0$.
Here, the parameters $\mathbf z$ and $\Gamma$ are defined in terms of the phase-space variables $(u,v)$ and state parameters as $\mathbf z=(v/(d\ell),-u/(d\ell),0,0)^T$ and 
\begin{align}
    \Gamma=\frac 1 2 \mqty(\frac{i}{d\Delta^2}& 1 & -\frac{i}{\Delta^2}&\frac{i}{\Delta^2}\\
    1& \frac{i\Delta^2}{d} &1 & 1\\
    -\frac{i}{\Delta^2}&1&\frac{i(2d\pi +4d\pi \Delta^4)}{2 \pi \Delta^2}&-\frac{id}{\Delta^2}\\
    \frac{i}{\Delta^2}&1&-\frac{id}{\Delta^2}&\frac{id(\pi+2\pi\Delta^4)}{\pi \Delta^2}).
\end{align}

Unlike the ideal GKP state, whose ZGW function is concentrated on a grid of points, the finite squeezing introduces a spreading and negativity characterised by the structure of $\Gamma$. We provide examples of the ZGW function compared with its Wigner function in Fig.~\ref{fig:wigners} for the $0$-logical GKP state $\gkpket[0][3]$, the GKP-encoded strange state from Eq.~(\ref{eq:strange}) --- both with 12 dB squeezing --- and the vacuum state. As can be seen visually, the negativity of the CV Wigner function is inversely proportional to the negativity of the ZGW function for the stabiliser $0$-logical GKP state with varying squeezing $\Delta$. However, for the magic state, both the CV Wigner and ZGW functions exhibit negativity.

\begin{figure}[h!]
     \centering
     \includegraphics[width=0.75\textwidth]{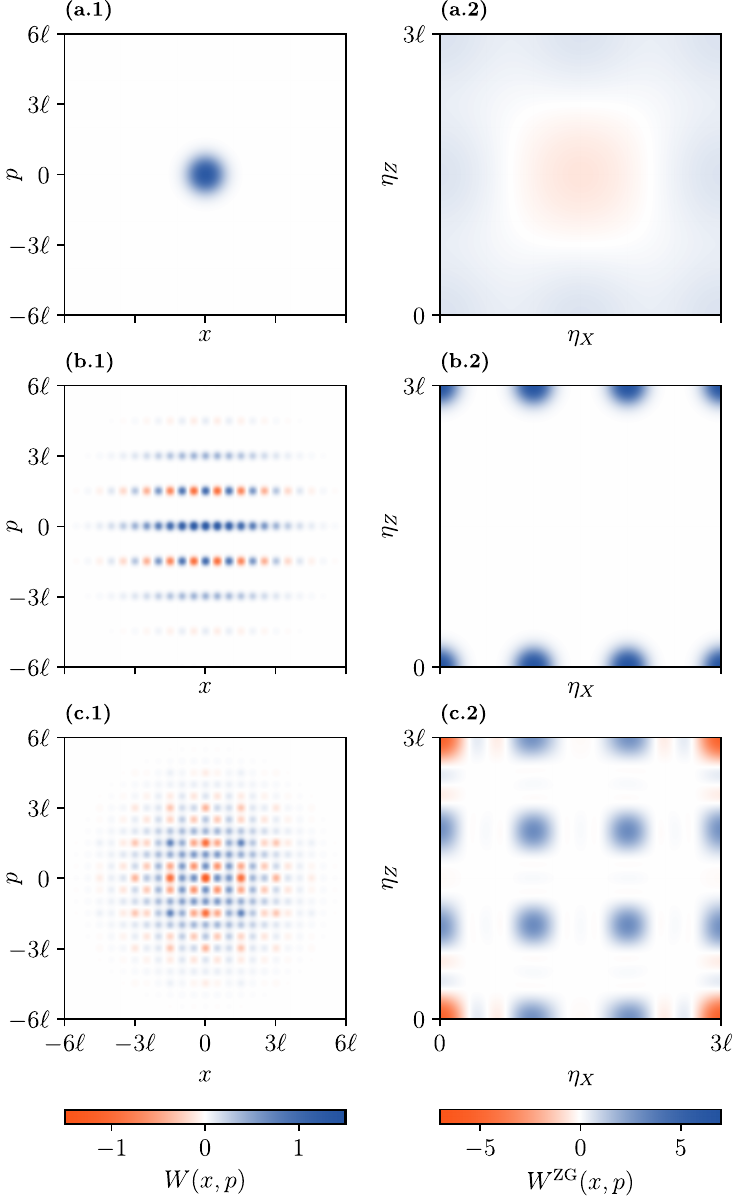}
     \caption{Here we demonstrate the difference between the Wigner and ZGW functions for three different states. \textbf{(a.1)} and \textbf{(a.2)} show the Wigner and ZGW function of the vacuum state, \textbf{(b.1)} and \textbf{(b.2)} show the Wigner and ZGW function of the qutrit zero-logical state, and \textbf{(c.1)} and \textbf{(c.2)} show the Wigner and ZGW function of the qutrit strange state. }
     \label{fig:wigners}
    \end{figure}

\section{Simulation algorithm}

\label{sec:simulation-algorithm}

Here, we describe the simulation algorithm introduced in Paper \pE.
First, we describe a method to simulate ideal stabiliser GKP states using the ZGW function. We then show that this can be generalised, using similar results to Ref.~\cite{pashayan2015}, to the case of states with ZGW negativity, such as realistic GKP states.

First, note that the ZGW function of ideal stabiliser GKP states is positive and that the GKP encoded Clifford operations can be understood as symplectic transformations and displacements of the phase space. Furthermore, if a measurement operator $\hat M$ is positive in this representation, and satisfies the condition that $\mathcal E(\hat M)=\hat M$, then we can use a phase space argument akin to the one presented in Subsection~\ref{sub:classical-simulation-gaussian} in order to prepare samples in a time that is polynomial with the number of modes. We will now see that the modular position measurement operators satisfy this condition. Hence, this provides an alternative proof to the techniques in Chapter~\ref{ch:simulation-of-ideal-gottesman-kitaev-preskill-states} that ideal GKP stabiliser states, followed by GKP encoded Clifford operations and modular measurements, are classically efficiently simulatable.

The proof that the modular position (equivalently, momentum) satisfies this condition when measuring a single mode is given in Appendix C3 of Paper~\pE. It relies on the fact that $\mathcal E$ leaves any $\hwop*[\mathbf a]$ that commutes with $\gkpproj$ invariant, i.e., if $[\hwop*[\mathbf a],\gkpproj]=0 \implies \mathcal E(\hwop*[\mathbf a])=\hwop*[\mathbf a]$. Modular position measurements can be expressed with the Kraus operator
\begin{align}
    \label{eq:mod-pos}
    \hat M_Z(\mathbf s)=\frac{1}{d\ell}\sum_{n\in\mathbb Z}e^{is\ell n} \lZ^n,
\end{align}
which by linearity means that it is invariant under the map $\mathcal E$. This means that we can evaluate the Born rule to measure $\hat M_Z(\mathbf s)$ on a state $\hat \rho$, which is expressed in terms of its ZGW function, as
\begin{align}
    \pdfunc[\mathbf s]=\Tr(\hat \rho \hat M_Z(\mathbf s))=\int \dd\mathbf r \, \zgw[\mathbf r][\hat \rho]\zgw[\mathbf r][\hat M_Z(\mathbf s)].
\end{align}
In addition, we can find the ZGW function of the Kraus measurement operator $\hat M_Z(\mathbf s)$ to be equal to the Dirac delta function, i.e.,
\begin{align}
    \zgw[\mathbf r][\hat M_Z(\mathbf s)]=\delta^{2n}(\mathbf r-\mathbf s).
\end{align}

Since the ZGW function of ideal stabiliser GKP states is positive, is mapped covariantly under GKP-encoded Clifford operations, and can be measured using modular measurements --- which are also positive in the ZGW function representation --- the entire circuit can be simulated efficiently. To see why, consider the same simulation algorithm presented in Subsection~\ref{sub:classical-simulation-gaussian}. All the same arguments apply; hence, the circuit is simulatable in polynomial time.

Furthermore, if the ZGW function has small amounts of negativity, we can use the same arguments as in Section~\ref{sec:classical-simulation-beyond} to prove that the following circuits have a PDF that can be estimated in a time with small exponential overhead. In particular, states that have ZGW negativity $\mathcal M_{\hat \rho}$, acted upon by GKP-encoded Clifford operations and continuous displacements $\paperclass[A]$ have a PDF which requires
\begin{align}
    N= \frac{2}{\epsilon^2}\mathcal M_{\hat \rho}^2 \log(2/\delta)
\end{align}
samples to approximate the true PDF with additive error $\epsilon$ and success probability $1-\delta$. Note that, practically, we must discretise the measurement outcomes to produce a probability function that can be estimated in finite time. This is discussed in more detail in Paper~\pE.

\clearpage{}
    \clearpage{}\chapter{A framework for universality in continuous-variable quantum computation}

\label{ch:a-framework-for-universality-in-continuous-variable-quantum-computation}

\DropCap{E}{fforts}{0.2} to develop quantum computers crucially rely on identifying the resources required for achieving computational advantage over classical devices. In the framework of resource theories~\cite{chitambar2019} as discussed in Section~\ref{sec:resource-theories}, this often involves identifying a restricted class of classically efficiently simulatable circuits and the ``resource'' states required to promote them to universality.

The identification of these enabling states holds both fundamental and practical significance.
First, from a fundamental point of view, it sheds light on the precise resources that enable QA, helping to refine our understanding of which non-classical features are required to perform calculations faster than classical computers. Second, from a practical point of view, it informs the design of future quantum devices, guiding experimental efforts toward architectures likely to outperform classical computation.

Resource theories for qubits have been studied in detail; however, the picture is much less clear in CV.
For qubits, it is known that to achieve universality, we must have the ability to perform Clifford circuits and access a set of magic gates, as discussed in Section~\ref{sec:universality}.
In the case of CVs, the same theorems do not directly apply. For instance, there is no all-Gaussian analogue to DV MSD for CV, which has remained an open problem for a long time~\cite{noh2020b}.
Instead, it is known that certain sets of states will lead to (either DV-encoded or true CV) universality when combined with Gaussian circuits. Here we focus on achieving universality in the DV sense but using CVQCs.

Since resource theories are often built upon the results of simulatability theorems, the results of the previous two Chapters naturally lead to the development of a framework for studying resources for the efficiently simulatable class of SGKP circuits. We use this framework to study the transition from classically simulatable CVQC regimes to UQC.
We explore the conditions required to break classical simulatability, focusing on the role of both Gaussian and non-Gaussian resources.
The central aim is to identify the minimal requirements for achieving a QA using CV systems, particularly those based on GKP states.

Through this lens, we find a highly counterintuitive result, which was initially reported in Paper \pC outside the context of a rigorous resource-theoretic framework. We found that within the classically efficiently simulatable class of SGKP circuits, UQC can be achieved by simply including the vacuum state in the otherwise simulatable set.
This leverages the result from Ref.~\cite{baragiola2019} that the same GKP architecture, combined with the vacuum state, is universal for QC.
This finding highlights the subtle and context-dependent nature of computational resources in CVQC, demonstrating how a state typically considered free or non-resourceful in standard Gaussian circuits acts as a resource in other contexts.

Inspired by this result, we later investigated a more general framework to assess the ability of general CV states to promote SGKP circuits to universality. Using this framework, detailed in Paper \pD, we were able to prove the first sufficient condition for universality using CV circuits. In addition, we assessed the ability of different circuit classes, including realistic GKP states, to promote circuits to universality in a quantitative way.

This Chapter begins with an explanation of our result that the vacuum provides QA to the otherwise simulatable class of SGKP circuits. We then introduce the framework designed to bridge the crucial gap between the classical simulatability of CV quantum circuits and the conditions required for achieving UQC.

\section{Vacuum provides quantum advantage}

\label{sec:vacuum-provides-quantum-advantage}

As we explained in Subsection~\ref{sub:gottesman-kitaev-preskill-gkp-codes}, the set of circuits containing GKP states along with Gaussian circuits was shown to be universal in Ref.~\cite{baragiola2019}. This result was an important realisation for the experimental feasibility of GKP circuits. Later, in Paper \pC, we proved that this result, in combination with our simulation algorithm described in Section~\ref{sec:stabiliser-formalism}, implies that the vacuum acts as a resource for SGKP circuits.
This is a highly counterintuitive result, because in the sense of Gaussian circuits, the vacuum state is considered completely resourceless. It is also one of the easiest states to prepare in quantum optics. How can it be that the simplest state provides a path to achieve UQC and the ability to unlock QA?

To understand this result, it is necessary to understand the result of Ref.~\cite{baragiola2019}. The main insight of their work is that when performing GKP error correction on the vacuum state, the resulting state is almost always magic. The quality of the magic state depends on the measurement outcomes; however, it can almost always be distilled to a perfect magic state through MSD (see Section~\ref{sec:resource-theories}).

More specifically, applying the error-correction gadget shown in Fig.~\ref{fig:error-correction} gives a state which has a Bloch vector $r_{\mu}(\mathbf t)$ as specified exactly in Eq.~(6) of Ref.~\cite{baragiola2019}. 

This state has a fidelity with the $H$ state, which has Bloch vector $\mathbf r^{H}=(1,0,1)^T/\sqrt 2$, that is above the MSD threshold for all values of $\mathbf t$, except for a zero measure set~\cite{baragiola2019}. Furthermore, the PDF of measuring each $\mathbf t$ are roughly equally likely, hence, in practice, the chance of measuring the values of $\mathbf t$ for which the state is non-magic is zero.
Following the production of a series of states, they can be distilled to a high-quality $H$ state. The MSD circuit consists only of Clifford operations~\cite{bravyi2005}.

The simulatability results of Paper \pC combined with the results of Ref.~\cite{baragiola2019}, are sufficient to prove the main claim of Paper, i.e., that the vacuum can promote an otherwise simulatable model to universality. Specifically, it was shown in Paper \pC that the set of SGKP circuits is efficiently simulatable. From Ref.~\cite{baragiola2019} we see that it is possible to achieve universality using only rational symplectic unitaries and displacements. Specifically, the GKP error correction gadget, the set of Clifford operations and the measurements are all contained within the SGKP set of circuits. Hence, we find that the only difference between circuits that are efficiently simulatable and universal is the addition of the vacuum state.

\section{A general sufficient condition for achieving universality}

\label{sec:a-general-sufficient-condition-for-achieving-universality}

Following the discovery that the vacuum could promote SGKP circuits to universality, the obvious question to ask was: what other states could do the same? A partial answer to this question was already given in Paper \pC, where we demonstrated that even realistic GKP states could unlock the ability to perform UQC. However, the majority of these results were presented in Paper \pD, which contains a resource-theoretic investigation.

Given the fact that SGKP circuits are efficiently simulatable, we can define the circuit class as a free set in a resource theory. In other words, we consider all stabiliser GKP states, rational Gaussian operations, and homodyne measurements to be resourceless. 
This choice of resourceless model is motivated by the fact that the set is simulatable. By understanding the resources that must be added to the set to make it universal, we expect to be able to identify whether newly invented quantum algorithms will be capable of achieving QA. Alternatively, by constructing circuits using resourceful states, this resource theory could guide the discovery of entirely new algorithms.

\subsection{General mapping from CV to DV}

\label{sub:general-mapping-from-cv-to-dv}

We consider the set of maps from CV to DV that are implementable using only components selected from the set of SGKP circuits. This set of maps can be expressed as~\cite{calcluth2024}
\begin{align}
\label{eq:general-map-def}
    M_{P}:\quad \hat \rho \to \int_{\intreg} \dd \mathbf s \sum_i \genkraus[i][\mathbf s] \hat \rho \genkraus*[i][\mathbf s],
\end{align}
where we have defined $\hat P_i(\mathbf s)$ to be some --- possibly measurement dependent --- choice of Kraus operators and $\intreg$ to be a choice of an integrable subregion of the range of the parameters $\mathbf s$. The Kraus operators are selected to satisfy $ \gkpproj\genkraus[i]=\genkraus[i]$. I.e., $\genkraus[i]$ must always contain a final projection onto the GKP subspace. We denote the set of Kraus operators as $\setgenkraus=\{\genkraus[1][\mathbf s],\dots,\genkraus[k][\mathbf s]\}$. Despite this additional restriction, the mapping still technically maps CV states to CV states. We solve this by introducing the state $\mappedstate$, which is the state encoded by the CV state in the GKP basis. Formally, we can consider the state as being defined in terms of its coefficients as
\begin{align}
    \label{eq:general-map}
    \bra{l}\hat \rho^{(P)}_L\ket{l'}=\bra{l_{\text{GKP}}}\int_\intreg \dd \mathbf s \sum_i \genkraus[i][\mathbf s] \hat \rho \genkraus*[i][\mathbf s]\ket{l'_{\text{GKP}}}.
\end{align}

Despite introducing the restriction that $ \gkpproj\genkraus[i]=\genkraus[i]$, it is important to note that the GKP projection operator is not a valid SGKP operation. In other words, there is no way to apply the operator directly using only SGKP circuits. Instead, we consider the error correction Kraus operator $\gkpec[\mathbf s]=\gkpproj \disp[-\mathbf s]$, which is implementable by the gadget shown in Fig.~\ref{fig:error-correction}.

Next, we construct a broad subclass of the possible mappings. These maps are characterised by the Kraus operators
\begin{align}
\label{eq:Kraus:projector-QEC}
    \genkraus[\mathbf s]=\genkrausprime[\mathbf s]\hat \Pi \disp[-\mathbf s] \hat U_i
\end{align}
where $\genkrausprime[\mathbf s]$ is a probabilistic GKP-encoded Clifford operation which occurs after the error-correction routine. Furthermore, $\hat U_i$ is any unitary Gaussian operation.

\subsection{Resourcefulness of states}
Using the tool developed in the previous Subsection, we now demonstrate that this framework can be applied to assess the resourcefulness of CV states to promote SGKP circuits to universality.

Given that the set of mappings defined in Eq.~(\ref{eq:general-map-def}) is resourceless by definition, we find that the ROM of the logical state obtained after the mapping is no more resourceful than the original CV state. In fact, in principle, we can define a well-grounded resource-theoretic function to quantify the resourcefulness of an arbitrary single-mode CV state to promote SGKP circuits to universality as
\begin{align}
    \label{eq:gkp-rom}
    \mathcal R_{\text{SGKP}}(\hat \rho)=\max_{P\in \mathcal{P}_{\text{SGKP}}}\mathcal R(M_P(\hat \rho)).
\end{align}
However, in practice, a full search over all possible mappings is not possible to calculate numerically.

Despite this limitation, the quantity is still useful for individual mappings, i.e.,
\begin{align}
    \mathcal R(M_P(\hat \rho)).
\end{align}
This is because any state projected onto a qubit state with a ROM above the threshold for distillation $\mathcal R^*$ will be sufficient to promote SGKP circuits to universality. This quantity also has a second useful feature. Since we know that the ROM of a CV state after a specific mapping will, by definition, be less than or equal to the quantity in Eq.~(\ref{eq:gkp-rom}), the value can be used to define a lower bound on the number of these states required in order to produce a specified number of ideal magic states. This is because the number of rounds of MSD required to produce an ideal magic state scales with the fidelity of the initial state to the target state~\cite{bravyi2005}.

\subsection{Mapping applied to stabiliser subsystem decomposition}

The stabiliser SSD is contained within the set of mappings defined in Eq.~(\ref{eq:general-map-def}). Specifically, we choose $P=\{\gkpec\}$ where $\gkpec$ is the GKP error correction Kraus operator defined in Eq.~(\ref{eq:gkp-ec-kraus}) and $R$ is chosen to be the interval $[-\sqrt{\pi}/2,\sqrt{\pi}/2)^{\times 2}$. The stabiliser SSD is then recovered as $\hat\rho_L^{(K)}=\hat \rho_{\Pi}$.

In practice, this map can be implemented using GKP error correction~\cite{shaw2024}. We also see that since GKP error correction is contained within the class of SGKP circuits, it is a resourceless mapping, as we would expect. 

Finally, after applying the stabiliser SSD, we note that the ROM of CV states provides a convenient and theoretically-grounded method for a lower bound for assessing the resourcefulness of CV states to promote SGKP circuits to universality. We explore this fact in the next Section.

\section{Resource analysis of realistic GKP states}
In this Section, we provide an example of the utility of the framework developed in the previous Sections. In particular, we apply the framework to a specific class of CV states using the stabiliser SSD. This is just one of the mappings explored in Paper \pD, where we also investigate the modular SSD and a new decomposition called the Gaussian-modular SSD. Furthermore, in that Paper, we apply our framework to a variety of states, including Gaussian, cat, and cubic phase states. Here we explore the class of states most relevant to this thesis: realistic GKP states.

\begin{figure}[h]
\centering
\includegraphics[width=0.8\textwidth]{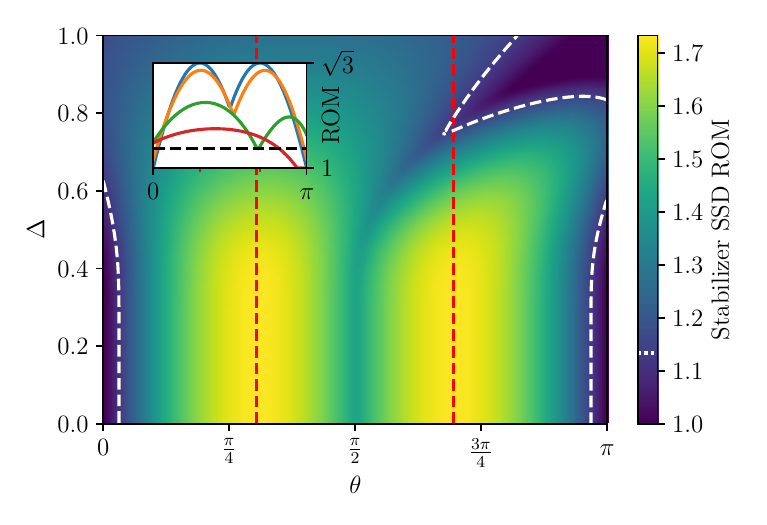}
\caption{The ROM of a the encoded qubit GKP state, as defined in Eq.~(\ref{eq:realistic-gkp-theta}), after performing the stabiliser SSD for different values of squeezing $\Delta$ and rotation angles $\theta$ and a fixed phase of $\phi=\pi/4$. The red dashed lines indicate the values of $\theta$ for which the state is an encoded $T$-state.  The large regions inside the dashed white boundaries in each plot indicate the regions of distillability. The inset plots show a subset of the same data plotted with $\theta$ on the $x$-axis and the value of ROM on the $y$-axis. The solid blue, red, green and purple lines correspond to $\Delta=0,1/2,3/4,1$, respectively. Equivalently, the lines for each increasing $\Delta$ have decreasing maxima. Note that the value of ROM in the main figures and the insets is always greater than or equal to $1$.}
\label{fig:stabssdrom}
\end{figure}
We investigate a set of realistic GKP states where the azimuthal angle is set to $\phi=\pi/4$ and the polar angle $\theta$ is a free parameter. This state is chosen as it represents a convenient cross-section of the possible GKP states. It can be expressed as
\begin{align}
    \label{eq:realistic-gkp-theta}
    \ket{\psi_{\text{GKP}}^\Delta(\theta)}=\frac{1}{\sqrt{\mathcal{N}_{\text{GKP}}}}\left(\cos(\theta/2)\ket{\bar 0_{\text{GKP}}^{\Delta}}+\sin(\theta/2)e^{i\pi/4}\ket{\bar 1_{\text{GKP}}^{\Delta}}\right)
\end{align}

For this state, $\theta=0$ and $\theta=\pi$ reproduce the logical $\gkpket[0]$ and $\gkpket[1]$ states, while the choices $\theta=\arccos(1/\sqrt{3})$ and $\theta=\pi-\arccos(1/\sqrt{3})$ yield an encoded $\gkpket[T]$ magic state and its orthogonal complement, respectively. Plotting the ROM $\mathcal{R}$ over the $(\theta,\Delta)$ plane in Fig.~\ref{fig:stabssdrom}, we find that the encoded $T$ states (vertical markers) achieve the highest ROM for any given $\Delta$, saturating the theoretical bound $\sqrt{3}$ as $\Delta\to0$. The shaded white contour shows the area where the ROM exceeds the threshold $\mathcal{R}^*$, revealing regions in which SGKP circuits are sufficient to promote SGKP circuits to UQC, generalising the results of Ref.~\cite{baragiola2019} and Paper \pC to a wider class of input states.

We see that at lower levels of squeezing $\Delta$, the ROM-level in the contour plot shows a pronounced skew, reflecting an asymmetry in how finite-squeezing affects the different logical states. This skew effect is detailed in Appendix G of Paper \pD.

\clearpage{}
    \clearpage{}\setcounter{chapter}{6}
\setcounter{section}{0}
\chapter*{Conclusion}
\fancyhead[LE]{Conclusion}
\hypertarget{toc}{}
\addcontentsline{toc}{chapter}{Conclusion} \label{ch:conclusion}
\DropCap{T}{his}{0.2} thesis has addressed the three key objectives introduced in Chapter \ref{ch:introduction}, which we list below. In summary, despite their high Wigner negativity, we have identified and characterised broad families of ideal-GKP circuits that remain efficiently simulatable with classical computers. We introduced three complementary simulation techniques for simulating ideal GKP states. We provided an alternative method to simulate realistic odd-dimensional GKP states when acted on by rational symplectic operations and continuous displacements by using the ZGW function. Furthermore, we provided a method to understand resources in CVQC that are sufficient for universality.

\section{Objectives}
Here, we discuss the contribution of each of the works in this thesis to its overarching goal. We provide a reminder of the objectives and a summary of how each work has contributed to addressing them.

\vspace{1.5em}
\begin{fancyquotes}
\textbf{Objective 1} (\textit{Classical simulation of ideal GKP states}): Building on a combination of techniques devised initially for qubits and new techniques devised from analytic number theory, we will show that circuits involving ideal GKP states can be efficiently simulated on a classical computer.
\end{fancyquotes}
\vspace{0.2em}

The first of these Objectives was addressed in Chapter~\ref{ch:simulation-of-ideal-gottesman-kitaev-preskill-states}, where we present the results of Papers~\pA, \pB, and \pC using three entirely different techniques. We established that broad classes of quantum circuits designed to utilise GKP states are, in fact, classically simulatable. The largest of these sets is dense in the set of Gaussian operations, and a visual overview of the relationships between each of these sets is presented in Fig.~\ref{fig:venn}. As a consequence of our results, we know that these circuits do not possess the capacity to achieve QA. Drawing this critical distinction, our work contributes to a clearer understanding of which quantum architectures can be considered computationally useful. 

\begin{fancyquotes}
\textbf{Objective 2} (\textit{Classical simulation of realistic GKP states}): We will demonstrate that it is possible to simulate realistic GKP states under certain conditions. Here, we expect that the algorithm will scale exponentially with respect to the number of modes. However, we aim to make this exponential overhead as small as possible, to the point that the simulation of practical circuits is tractable.
\end{fancyquotes}
\vspace{0.2em}

Addressing the second Objective, Chapter~\ref{ch:simulation-of-realistic-gottesman-kitaev-preskill-states} provides the first method to simulate realistic, i.e., finitely squeezed, GKP states in a reasonable time in the regime of high squeezing, which is relevant for fault-tolerance. 
Our algorithm is tailored to simulate circuits initialised with realistic odd-dimensional GKP qudits, undergoing encoded Clifford operations and followed by modular measurements.
While generic simulation methods face potential exponential overhead due to the significant negativity of realistic GKP states in standard representations, our approach leverages convenient properties of the ZGW function.
Specifically, the runtime of our algorithm scales with the negativity \textit{in the ZGW function representation}, which allows for efficient simulation for certain circuits despite the states' high negativity in the standard Wigner representation. 
This investigation directly contributes to the overall goal of the thesis of understanding the simulatability boundaries of CVQC.

\vspace{1.5em}
\begin{fancyquotes}
\textbf{Objective 3} (\textit{Framework to assess the resourcefulness of CV states}): We will develop a new resource theory for CV systems, introduce a method to measure the resourcefulness of states based on state decompositions, and establish a criterion for achieving universality.
\end{fancyquotes}
\vspace{0.1em}

In Chapter~\ref{ch:simulation-of-realistic-gottesman-kitaev-preskill-states}, we expanded on the topic of resources in CVQC. These results were inspired by the surprising result of Paper \pC that the vacuum state is a resource for QA in this model. Adding the vacuum state to the restricted architecture makes it a UQC that can perform calculations faster than classical computers. This is highly surprising because, in the context of Gaussian circuits, the vacuum is considered resourceless and is easy to produce experimentally.

To explore the idea of resources for SGKP circuits, we partitioned the components of general quantum circuits into two sets. The first resourceless set includes components handled efficiently by classical computers, according to our simulation algorithms demonstrated for the first Objective. The second, resourceful set comprises components that unlock abilities beyond what classical computers can do. We then addressed whether a smaller subset of the resourceful set exists that can achieve QA when combined with the resourceless set. While previous work has addressed this question for qubit architectures, our research provides the first sufficient condition for CV states to provide QA. To establish this, we created a mathematical framework that maps CV states onto qubit states using QEC techniques. This demonstrates a rigorous way to extract qubit quantum information from a CV state. We then leveraged known results in the resource theory of qubit QCs to establish a sufficient criterion for CV states.

\section{Implications}

Here we discuss the relevance of the contributions in this thesis for the pursuit of understanding QA. These contributions --- particularly those addressing Objectives $2$ and $3$ --- also provide insights for developing experimental devices. 

From a theoretical perspective, we provide a comprehensive set of results on the simulatability of quantum circuits involving GKP states, identifying extensive classes of circuits previously believed to require quantum resources. Notably, our investigation reveals that despite substantial Wigner negativity, certain broad families of ideal GKP circuits remain efficiently simulatable, fundamentally clarifying the nature of quantum resources required for achieving computational advantage.

Practically, our algorithm for simulating realistic GKP circuits is particularly valuable for state-of-the-art experiments attempting to implement GKP states in superconducting cavities, integrated photonics, and trapped-ion systems. This method directly addresses and overcomes previous limitations in simulation approaches such as Ref.~\cite{bourassa2021}, notably enabling efficient simulations in the critical high-squeezing regime necessary for fault-tolerant QC. As an illustrative example, our algorithm efficiently handles realistic stabiliser GKP states with 12 dB of squeezing for circuits containing up to one thousand modes, requiring less than twice the computational resources needed for single-mode cases --- a considerable improvement over existing simulators.

Moreover, this thesis enhances our conceptual understanding by identifying and rigorously characterising the role of the vacuum state as a surprising resource for universality in CVQC. Through a resource-theoretic framework, we establish the first sufficient criterion for CV states to yield QA. By explicitly connecting CV states to established qubit resource theories, we create a unified theoretical foundation that facilitates the precise quantification and exploitation of quantum resources across distinct quantum computing architectures.

\section{Limitations}

As seen in Chapter~\ref{ch:simulation-of-ideal-gottesman-kitaev-preskill-states}, our works addressing Objective~1 in Papers \pA, \pB, and \pC each prove different subsets of circuits to be efficiently simulatable. The third of these works proves the most extensive set to be simulatable, i.e., one that is dense in the complete set of Gaussian operations. While technically interesting, the significant and obvious limitation of these simulation algorithms is that they are incapable of simulating circuits that can be physically implemented.
However, this limitation was partially solved in Paper \pE with the first algorithm that can simulate realistic odd-dimensional GKP states in a time that scales with a small exponential overhead in the fault-tolerant regime.

At the same time, our results in Paper \pE are limited by the fact that only odd-dimensional circuits are simulatable with our technique.
However, qudit-GKP computation is also promising and is being pursued as an active field of research. For instance, Ref.~\cite{brock2025} experimentally demonstrates QEC with GKP qudits, including the odd-dimensional case of $d=3$, which is precisely the realm of application of our simulation framework.
More generally, QC based on qudits is an active and growing field offering several advantages over traditional qubit-based approaches. For example, qudits may unlock more efficient computational algorithms~\cite{wang2020}, and they provide a natural framework for simulating high-dimensional quantum systems, as recently demonstrated in Ref.~\cite{meth2025}. These benefits have led to increasing experimental interest in qudit-based architectures, which are now being actively pursued across a range of platforms~\cite{wang2020}.  
If qudits are to be useful in the long run, however, QEC will be necessary; hence, it is important to investigate circuits with GKP codes and consider the ability to simulate them classically.

We also note that the field of quantum information has a history of proving results for odd-dimensional qudits and then later resolving the same question for the often more complex case of qubits. For example, in 2014, Howard et al proved the connection between simulatability and contextuality in Ref.~\cite{howard2014}, and this was later shown to hold for qubits in Refs.~\cite{bermejo-vega2017,raussendorf2017}.

\section{Outlook}

The findings presented in this thesis have contributed to our understanding of the boundary between classical simulatability and QA, specifically in CVQC. In contributing to this understanding, we have uncovered many interesting open questions for future research. Here, we present some of those questions.

We begin with the most natural extensions of our results. First, it remains open whether CV circuits composed of only finitely squeezed GKP states, followed by all Gaussian operations and measurements, are necessarily universal, or even hard to simulate. Although this idea was explored and discussed in Paper~\pC, we were unable to provide a formal proof of this. However, it is expected from Ref.~\cite{baragiola2019} --- albeit, not formally proven --- that realistic GKP states in combination with Gaussian states will be universal. It would be interesting to explore these two regimes in future work. Similarly, an interesting avenue for future work would be exploring extending the resource-theoretic technique developed in Paper \pD to perform the stabiliser SSD using realistic GKP states.
Another natural extension of our work would be to investigate the ability to simulate realistic GKP states in even encoded dimensions, including for the encoded qubit case. This would be particularly relevant to cutting-edge experiments where the successful implementation of such states is becoming a reality.

Another interesting avenue to explore would be identifying a general sufficient condition for QA in CVQC. In particular, our results implicitly demonstrate that Wigner negativity of both the Wigner function and the ZGW function of some of the components in a circuit is necessary to achieve a QA. For qubit circuits, any state with a ROM above a certain threshold guarantees that a supply of those states will be sufficient for QA when combined with Clifford operations~\cite{bravyi2005}. However, it remains a fascinating open question whether some sufficient condition guarantees QA for CVQC with otherwise Gaussian circuits.

Finally, the connection between the computational power of CV and DVQC has been explored to some degree in Paper \pD. In particular, we demonstrated a universal method to map CV systems to DV systems. However, this mapping is neither unique nor injective and does not provide information about the fundamental connection between the two regimes. Understanding whether CV devices offer a fundamental advantage over DV devices, or whether they are ultimately computationally equivalent, is a pressing foundational problem.
On the one hand, if CVQC provides a computational edge, this insight will spur further advancements in CV-based hardware and algorithms. Alternatively, if a computational equivalence is demonstrated, it will enable scientists to convert between qubit-based and CV algorithms, optimising quantum algorithms for the most efficient hardware available. In either case, the answer to this question would have important implications for the field and beyond.\clearpage{}
    \fancyhead[LE]{BIBLIOGRAPHY}
    \fancyhead[RO]{BIBLIOGRAPHY}
    \makeatletter
        \bibliographystyle{mybibstyle}  
        \bibliography{mainbibtex}            
    
    \makeatother
    \cleardoublepage
    
    \fancyhead[LE]{\nouppercase{\leftmark}}
    \fancyhead[RO]{\nouppercase{\rightmark}}

    \begin{appendices}
    \renewcommand{\thechapter}{\Roman{chapter}}
\clearpage{}\chapter{Simulation with irrationals}
\label{appendix:irrationals}
In the following I explain why simulation of GKP states with irrational angles breaks down. I use the Wigner function to illustrate this, despite this not being the method we previously used to simulate GKP states. However, this is the most intuitive for single mode GKP states and gives the most insight into why it is not possible to write a meaningful PDF for the simulation of such a computation.

The Wigner function of a single mode $0$-logical GKP state is given by
\cite{garcia-alvarez2021}
\begin{align}
    W_{0,\text{{GKP}}}(q,p)=\frac{1}{4 \sqrt{ \pi }}\sum_{st}(-1)^{st}\delta\left( p-\frac{\sqrt{ \pi }s}{2} \right)\delta(q-\sqrt{ \pi }t).
\end{align}
A rotation of the state can be understood through a transformation of the quadrature operators  
\begin{align}
    \hat{q} \to \hat{q}\cos\theta+\hat{p}\sin\theta \nonumber\\
\hat{p} \to \hat{q}\sin\theta-\hat{p}\cos\theta.
\end{align}
As a result of this transformation, the Wigner function of the state is transformed to
\begin{align}
&W_{0,\text{{GKP}}}(q \cos \theta + p \sin \theta,q \sin \theta- p \cos \theta)=\frac{1}{4 \sqrt{ \pi }}\sum_{st}(-1)^{st}\delta\left(p -\frac{\sqrt{ \pi }s}{2} \right)\delta(q-\sqrt{ \pi }t) \nonumber\\
\implies &
W_{0,\text{{GKP}}}(q,p)=\frac{1}{4 \sqrt{ \pi }}\sum_{st}(-1)^{st}\delta\left(p\cos\theta - q \sin \theta -\frac{\sqrt{ \pi }s}{2} \right)\delta(q\cos \theta+p\sin\theta -\sqrt{ \pi }t).
\end{align}

The probability distribution of the measurement outcomes of a measurement of $\hat q$ can be found by integrating over $p$, while measurement outcomes of $\hat p$ can be found by integrating over $\hat q$. 

We find that the probability distribution over position $\hat q$ is given by
\begin{align}
\text{Pr}(\hat q=q)=&\int \, \text{d} p \, W_{0}(q,p)\\
=&\int \, \text{d} p\frac{1}{4 \sqrt{ \pi }}\sum_{st}(-1)^{st}\delta\left(p\cos\theta - q \sin \theta -\frac{\sqrt{ \pi }s}{2} \right)\delta(q\cos \theta+p\sin\theta -\sqrt{ \pi }t) \nonumber\\
=& \frac{1}{|\cos \theta|}\int \, \text{d} p\frac{1}{4 \sqrt{ \pi }}\sum_{st}(-1)^{st}\delta\left(p - q  \frac{\sin\theta}{\cos\theta} -\frac{\sqrt{ \pi }s}{2 \cos\theta} \right)\delta(q\cos \theta+p\sin\theta -\sqrt{ \pi }t)\nonumber\\
=& \frac{1}{|\cos \theta|} \frac{1}{4 \sqrt{ \pi }}\sum_{st}(-1)^{st}\delta\left(q\cos \theta+(q  \frac{\sin\theta}{\cos\theta} +\frac{\sqrt{ \pi }s}{2 \cos\theta})\sin\theta -\sqrt{ \pi }t\right)\nonumber\\
=& \frac{1}{|\cos \theta|} \frac{1}{4 \sqrt{ \pi }}\sum_{st}(-1)^{st}\delta\left(q\sec \theta+\frac{\sqrt{ \pi }s}{2 }\tan\theta -\sqrt{ \pi }t\right)\nonumber\\
=&\frac{1}{4 \sqrt{ \pi }}\sum_{st}(-1)^{st}\delta\left(q+\frac{\sqrt{ \pi }s}{2 } \sin\theta -\sqrt{ \pi }t \cos\theta\right),
\end{align}
while the PDF over $\hat p$ is given by
\begin{align}
    &\text{PDF}(\hat p=p)\nonumber\\
    =&\int \, \text{d} q \, W_{0}(q,p)\nonumber\\
    =&\int \, \text{d} q\frac{1}{4 \sqrt{ \pi }}\sum_{st}(-1)^{st}\delta\left(p\cos\theta - q \sin \theta -\frac{\sqrt{ \pi }s}{2} \right)\delta(q\cos \theta+p\sin\theta -\sqrt{ \pi }t) \nonumber\\
=& \frac{1}{|\cos \theta|}\int \, \text{d} q\frac{1}{4 \sqrt{ \pi }}\sum_{st}(-1)^{st}\delta\left(p\cos\theta - q \sin \theta -\frac{\sqrt{ \pi }s}{2} \right)\delta\left( q+p\tan\theta -\frac{1}{\cos\theta}\sqrt{ \pi }t \right) \nonumber\\
=& \frac{1}{|\cos \theta|}\frac{1}{4 \sqrt{ \pi }}\sum_{st}(-1)^{st}\delta\left(p\cos\theta -  \sin \theta \left( \frac{1}{\cos\theta}\sqrt{ \pi }t -p \tan\theta \right)  -\frac{\sqrt{ \pi }s}{2} \right) \nonumber\\
=& \frac{1}{|\cos \theta|}\frac{1}{4 \sqrt{ \pi }}\sum_{st}(-1)^{st}\delta\left(p\sec\theta - \sqrt{ \pi }t \tan \theta  -\frac{\sqrt{ \pi }s}{2} \right)\nonumber \\
=& \frac{1}{4 \sqrt{ \pi }}\sum_{st}(-1)^{st}\delta\left(p - \sqrt{ \pi }t \sin \theta  -\frac{\sqrt{ \pi }s}{2} \cos\theta\right) .
\end{align}

First, analyzing the PDF over $\hat q$, 
we write $\sin \theta / \cos\theta= \frac{u}{v}$. The delta function is non-zero whenever $q$ satisfies the following equation
\begin{align}
&q+\frac{\sqrt{ \pi }s}{2} \sin\theta- \sqrt{ \pi }t \cos\theta=0 \nonumber\\
\implies&\frac{q}{\cos\theta}+\frac{\sqrt{ \pi }s}{2} \frac{u}{v}- \sqrt{ \pi }t =0\nonumber\\
\implies&\frac{2qv}{\sqrt{  \pi }\cos\theta}+s u-2 t v=0\nonumber\\
\implies& c+s u - 2 t v =0 \label{eq:dio-for-q},
\end{align}
for $s,t \in \mathbb{Z}$, where we have implicitly defined $c$ in terms of $q,v$ and $\theta$ in the last two lines.
Note that for the PDF over $\hat p$ we also have a similar equation,
\begin{align}
&p-\sqrt{ \pi }t \sin\theta - \sqrt{ \pi } \frac{s}{2} \cos\theta =0\nonumber \\
\implies & \frac{p}{\cos\theta}-\sqrt{ \pi }t \frac{u}{v} - \sqrt{ \pi } \frac{s}{2} =0\nonumber\\
\implies & \frac{2pv}{\sqrt{ \pi }\cos\theta}-2t  u -  s  v =0\nonumber\\
\implies & c-2t  u -  s  v =0\label{eq:dio-for-p},
\end{align}
where we have now defined $c$ in terms of $p,v$ and $\theta$ in the last two lines.

We now try to find solutions for these equations. We start with the first equation, i.e., Eq.~(\ref{eq:dio-for-q}). We have two distinct cases.

\textbf{Case 1.} ($u,v \in \mathbb Q$) In this case, we first multiply the equation by the denominators of the fractions of $u,v$ such that the equation becomes of the form of Eq.~(\ref{eq:dio-for-q}) with $u,v \in \mathbb Z$. Therefore, in this case we assume that both numbers are integers.

If $u,v$ are integers and hence $\sin\theta/\cos\theta$ is rational then the equation is a linear Diophantine equation. The equation will have a solution iff $c$ is a multiple of $\gcd(u,2v)$. Hence, assuming we have expressed $\frac{u}{v}$ in the simplest form, the greatest common divisor will be $1$. In which case $c \in \mathbb{Z}$. 

Note also that $(-1)^{(s)(t)}$ is $1$ for odd and even $s,t$ and when both $s,t$ are even. It is only negative when $s,t$ are both odd. We can consider this to contribute a factor of two in the following expression. Therefore, we can write
\begin{align}
\text{Pr}(\hat{q}=q)=\int \, \text{d} p \, W_{0}(q,p)=&\frac{1}{2 \sqrt{ \pi }}\sum_{m}\delta\left(q -  \frac{\sqrt{ \pi }\cos\theta}{2 v}m\right)
\end{align}
In this case, we can provide a sensible answer to the question of how the probability distribution of the simulation will look, i.e., it has a periodic structure.

\textbf{Case 2.} (Either or both $u,v \notin \mathbb Q$.) In the case that either $u$ or $v$, or both $u,v$ are irrational we no longer have a Diophantine equation. Instead we make use of the following theorem.

\begin{theorem}
\label{theorem:kronecker}
    (Theorem 438 of Ref. \cite{hardy1960}) If $\alpha$ is irrational, $c$ is arbitrary and $N,\varepsilon$ are positive then there exist integers $n,m$ such that $n>N$ and
\begin{align}
|n\alpha- m -c | < \varepsilon.
\end{align}
\end{theorem}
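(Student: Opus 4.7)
The plan is to prove the one-dimensional Kronecker approximation theorem in three stages: first establish that $\alpha$ admits arbitrarily sharp rational approximations, then use one such approximation to build a dense $\varepsilon$-net on $[0,1)$ via its integer multiples, and finally translate the resulting approximation of $c$ to satisfy $n > N$.

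First, I would invoke a pigeonhole argument (essentially Dirichlet's simultaneous approximation theorem in its simplest incarnation). Given any positive integer $Q$, consider the $Q+1$ fractional parts $\{k\alpha\}$ for $k=0,1,\dots,Q$. Partitioning $[0,1)$ into $Q$ equal subintervals, two such fractional parts must lie in the same subinterval, so for some $0 \le j < k \le Q$ we get $|(k-j)\alpha - r| < 1/Q$ for some integer $r$. Setting $q = k-j$ and $p = r$ gives $1 \le q \le Q$ with $|q\alpha - p| < 1/Q$. Because $\alpha$ is irrational, the left-hand side is strictly positive. Letting $Q \to \infty$ yields infinitely many pairs $(q,p)$ with $|q\alpha - p|$ arbitrarily small, and in particular with $q$ arbitrarily large.

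Next, I would fix such a pair $(q_0, p_0)$ with $\delta := q_0\alpha - p_0$ satisfying $0 < |\delta| < \varepsilon$. The sequence $k\delta = kq_0\alpha - kp_0$ for $k = 1, 2, \dots$ then traverses the real line in steps of size $|\delta| < \varepsilon$, so for the target value $c$ there exists a positive integer $k$ and an integer $\ell$ such that $|k q_0 \alpha - k p_0 - \ell - c| < \varepsilon$. Setting $n = kq_0$ and $m = kp_0 + \ell$ establishes the inequality $|n\alpha - m - c| < \varepsilon$ as required.

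Finally, to guarantee $n > N$, I would exploit the freedom in two places: either choose the initial approximant $(q_0, p_0)$ to itself have $q_0 > N$ (possible since infinitely many such approximants exist), or, having produced a valid $(n_0, m_0)$, repeatedly add the vector $(q_0, p_0)$ with $|q_0\alpha - p_0|$ smaller than the slack $\varepsilon - |n_0\alpha - m_0 - c|$, which preserves the $\varepsilon$-bound while driving $n$ past $N$. The main subtlety, and the only real obstacle in the argument, lies in this last bookkeeping: the step size $\delta$ of the net must be chosen small enough (say $|\delta| < \varepsilon/2$) so that both the initial approximation of $c$ and subsequent enlargement of $n$ can be accommodated within a single tolerance $\varepsilon$; apart from this, the proof uses no ingredient beyond pigeonhole and the irrationality of $\alpha$.
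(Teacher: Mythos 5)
Your proof is correct. Note, however, that the thesis does not prove this statement at all: it is imported verbatim as Theorem~438 of Hardy and Wright's \emph{An Introduction to the Theory of Numbers} and used only as input to the corollary that follows, so there is no in-paper argument to compare against. Judged on its own merits, your three-stage argument is sound: the pigeonhole step correctly yields pairs $(q,p)$ with $|q\alpha-p|$ arbitrarily small, and irrationality guarantees these require arbitrarily large $q$ (otherwise $\min_{q\le B}\min_p|q\alpha-p|>0$ would give a positive lower bound); the ``walk in steps of $|\delta|$'' step is valid because $\ell$ is a free integer, so the target $c+\ell$ can always be placed on the half-line actually traversed by $k\delta$, $k\ge 1$; and picking $q_0>N$ at the outset makes $n=kq_0>N$ automatic, so the more delicate $\varepsilon/2$ bookkeeping you flag in the final paragraph is not actually needed. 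For comparison, Hardy and Wright's own proof of Theorem~438 also starts from Dirichlet's approximation theorem but then proceeds differently: with a reduced approximant $p/q$, $q>N$, one takes $Q$ to be the nearest integer to $qc$, solves the congruence $np-mq=Q$ with $n$ in a prescribed window, and bounds $n\alpha-m-c=\tfrac{n}{q}(q\alpha-p)+\tfrac{Q-qc}{q}$ term by term. Your version replaces that Bézout/congruence step with a monotone $\varepsilon$-net argument; it is marginally more elementary and arguably more transparent, at the cost of giving less control over the size of $n$ (H\&W's window confines $n$ to an interval of length $q$, which matters for quantitative refinements but not for the statement as used here).
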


\begin{corollary}
    If $\alpha,\beta$ are irrational, $c$ is arbitrary and $N,\varepsilon$ are positive, the set $\{n \alpha + m\beta -c : n,m \in \mathbb Z\}$ is dense in the reals.
\end{corollary}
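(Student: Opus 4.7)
The plan is to reduce the statement to a one-dimensional Kronecker density claim and then invoke Theorem~\ref{theorem:kronecker}. First, since translation by $-c$ preserves density, it suffices to show that $S=\{n\alpha+m\beta:n,m\in\mathbb Z\}$ is dense in $\mathbb R$; equivalently, to show that for any target $r\in\mathbb R$, any $N>0$, and any $\varepsilon>0$, there exist integers $n>N$ and $m$ with $|n\alpha+m\beta-c-r|<\varepsilon$.

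The key idea is to rescale by $\beta\neq 0$ so that the two-variable problem collapses onto a single irrational number to which Theorem~\ref{theorem:kronecker} applies. Dividing the desired inequality through by $|\beta|$ yields the equivalent statement $|n(\alpha/\beta)+m-(c+r)/\beta|<\varepsilon/|\beta|$. Provided $\gamma=\alpha/\beta$ is irrational, one can invoke Theorem~\ref{theorem:kronecker} with the data $\bigl(\gamma,\,(c+r)/\beta,\,N,\,\varepsilon/|\beta|\bigr)$ to obtain integers $n>N$ and $m'\in\mathbb Z$ with $|n\gamma-m'-(c+r)/\beta|<\varepsilon/|\beta|$. Setting $m=-m'$ and multiplying back through by $|\beta|$ recovers $|n\alpha+m\beta-(c+r)|<\varepsilon$, which is exactly the density statement we wanted, with the additional feature that $n$ can be taken arbitrarily large.

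The main obstacle is that the bare hypothesis stated --- that $\alpha$ and $\beta$ are each individually irrational --- is not enough. What is really needed is $\mathbb Q$-linear independence of $\{\alpha,\beta\}$, i.e., $\alpha/\beta\notin\mathbb Q$: a counterexample such as $\alpha=\sqrt 2$, $\beta=2\sqrt 2$ collapses $S$ to the discrete set $\sqrt 2\,\mathbb Z$, since by B\'ezout the combinations $n+2m$ with $n,m\in\mathbb Z$ range only over $\mathbb Z$. I would therefore either strengthen the hypothesis explicitly to $\alpha/\beta$ irrational, or remark that this is the condition being tacitly assumed. Crucially, this refinement is not restrictive for the use made of the corollary in this appendix: there $\alpha\propto\sin\theta$ and $\beta\propto\cos\theta$ with $\cot\theta$ irrational, so $\alpha/\beta\propto\tan\theta$ is automatically irrational and the one-line rescaling argument above goes through without any further work.
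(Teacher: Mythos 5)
Your proof is essentially the paper's own argument run in the opposite direction: the paper multiplies the Kronecker inequality through by $\beta$ and relabels $\alpha\beta$ as a new irrational, whereas you divide by $\beta$ and apply Theorem~\ref{theorem:kronecker} to $\gamma=\alpha/\beta$; both are the same one-line rescaling reduction to the one-dimensional case. Your additional observation is correct and identifies a genuine flaw in the corollary as stated: individual irrationality of $\alpha$ and $\beta$ does not suffice (your counterexample $\alpha=\sqrt2$, $\beta=2\sqrt2$ gives the discrete set $\sqrt2\,\mathbb Z-c$), and the paper's own proof silently has the same gap, since it establishes density only for coefficient pairs of the form $(\alpha\beta,\beta)$ with $\alpha$ irrational --- that is, precisely when the ratio of the two coefficients is irrational. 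Your remark that the corrected hypothesis $\alpha/\beta\notin\mathbb Q$ is automatically satisfied in the intended application (Case~2 of the appendix, where $\tan\theta$ is irrational) is also right, so the corollary is applied correctly even though it is misstated.
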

\begin{proof}
    Following from Theorem \ref{theorem:kronecker}, we have that  
    \begin{align}
        & |n\alpha- m -c | < \varepsilon\\
        \implies &|n\alpha\beta- m\beta -c \beta| < \varepsilon\beta\\
        \implies &|n\alpha'- m\beta -c '| < \varepsilon'
    \end{align}
    whereby we have simply reparameterized $\alpha'=\alpha\beta$, $c'=c\beta$ and $\varepsilon'=\varepsilon\beta$. We can also rescale $\varepsilon'$ and since $m\in \mathbb Z$ we can change the sign of $m \to -m$.
\end{proof}

Therefore, $\varepsilon$ can be made arbitrarily small and hence we see that the combinations of $su-2tv$ is dense in the reals. At first, this may not appear to be a problem. (E.g., if we have a flat distribution.) However, as we will see, the PDF is also dense in the reals when measuring in the momentum basis.

This can be seen by noticing Eq.~(\ref{eq:dio-for-q}) has the same form as Eq.~(\ref{eq:dio-for-p}). Both have discrete solutions (i.e., that are not dense in the reals) if and only if $u/v \in \mathbb Q$. Therefore, the PDF for $q$ is non-zero at points that are dense in the reals whenever the PDF for $p$ has the same property.

This means we have a Wigner function which, when measured gives a set of outcomes which are dense in the reals in both the position basis and the momentum basis at the same time. However, there is no periodicity and there is no way to characterise the distribution of the peaks that occur in the probability distributions.
\clearpage{}
\end{appendices}

\end{document}